\DeclareMathAlphabet\mathbb{U}{msb}{m}{n}
\def\Rset{\mathbb{R}}
\DeclareMathOperator*{\E}{\mathbb E}
\DeclareMathOperator*{\conv}{conv}
\DeclareMathOperator{\poly}{poly}
\DeclareMathOperator{\diam}{diam}
\DeclarePairedDelimiter{\norm}{\|}{\|}
\newcommand{\cA}{\mathcal{A}}
\newcommand{\cB}{\mathcal{B}}
\newcommand{\cC}{\mathcal{C}}
\newcommand{\cK}{\mathcal{K}}
\newcommand{\cM}{\mathcal{M}}
\newcommand{\cP}{\mathcal{P}}
\newcommand{\cS}{\mathcal{S}}
\newcommand{\cU}{\mathcal{U}}
\newcommand{\cV}{\mathcal{V}}
\newcommand{\cX}{\mathcal{X}}
\newcommand{\cY}{\mathcal{Y}}
\newcommand{\bx}{{\mathbf x}}
\newcommand{\by}{{\mathbf y}}
\newcommand{\br}{{\mathbf r}}
\newcommand{\bu}{{\mathbf u}}
\newcommand{\eps}{\varepsilon}
\newcommand{\ignore}[1]{}
\newcommand{\yV}{y^{V}}
\newcommand{\xV}{x^{V}}
\newcommand{\byV}{\by^{V}}
\newcommand{\bxV}{\bx^{V}}
\DeclareMathOperator{\BR}{\mathsf{BR}}
\DeclareMathOperator{\Stack}{\mathsf{Stack}}
\DeclareMathOperator{\Reg}{\mathsf{Reg}}
\DeclareMathOperator{\Swap}{\mathsf{SwapReg}}
\DeclareMathOperator{\SwapDist}{\mathsf{SwapDist}}
\DeclareMathOperator{\LinSwap}{\mathsf{LinSwapReg}}
\DeclareMathOperator{\CorrSwap}{\mathsf{ProfSwapReg}}
\DeclareMathOperator{\CorrDist}{\mathsf{ProfSwapDist}}
\DeclareMathOperator{\PolySwap}{\mathsf{PolySwapReg}}
\DeclareMathOperator{\NormSwap}{\mathsf{NFSwapReg}}
\DeclareMathOperator{\AppLoss}{\mathsf{AppLoss}}
\DeclareMathOperator{\Proj}{\mathrm{Proj}}
\DeclareMathOperator{\ShellProj}{\mathsf{ShellProj}}
\newtheorem{theorem}{Theorem}
\newtheorem{lemma}[theorem]{Lemma}
\theoremstyle{remark}
\newtheorem{remark}{Remark}
\newcommand{\optset}{\cY}
\newcommand{\learnset}{\cX}
\newcommand{\vset}{\cV}
\newcommand{\learnvert}{\vset(\learnset)}
\newcommand{\optvert}{\vset(\optset)}
\newcommand{\csp}{\phi}
\newcommand{\cspV}{\phi^{V}}
\newcommand{\ball}{\cB}
\title{Swap Regret and Correlated Equilibria Beyond Normal-Form~Games}
\author{Eshwar Ram Arunachaleswaran\thanks{University of Pennsylvania, \texttt{eshwarram.arunachaleswaran@gmail.com}.  }
\and 
Natalie Collina\thanks{University of Pennsylvania, \texttt{ncollina@seas.upenn.edu} } \and Yishay Mansour \thanks{Tel Aviv University and  Google Research, {\tt mansour.yishay@gmail.com}. 
} 
\and Mehryar Mohri \thanks{Google Research and Courant Institute of Mathematical Sciences, New York, {\tt mohri@google.com}} 
\and Jon Schneider \thanks{Google Research, {\tt jschnei@google.com}} 
\and Balasubramanian Sivan \thanks{Google Research, {\tt balusivan@google.com}}}
\begin{document}

\maketitle

\begin{abstract}
Swap regret is a notion that has proven itself to be central to the study of general-sum normal-form games, with swap-regret minimization leading to convergence to the set of correlated equilibria and guaranteeing non-manipulability against a self-interested opponent. However, the situation for more general classes of games -- such as Bayesian games and extensive-form games -- is less clear-cut, with multiple candidate definitions for swap-regret but no known efficiently minimizable variant of swap regret that implies analogous non-manipulability guarantees. 

In this paper, we present a new variant of swap regret for polytope games that we call ``profile swap regret'', with the property that obtaining sublinear profile swap regret is both necessary and sufficient for any learning algorithm to be non-manipulable by an opponent (resolving an open problem of Mansour et al., 2022). Although we show profile swap regret is NP-hard to compute given a transcript of play, we show it is nonetheless possible to design efficient learning algorithms that guarantee at most $O(\sqrt{T})$ profile swap regret. Finally, we explore the correlated equilibrium notion induced by low-profile-swap-regret play, and demonstrate a gap between the set of outcomes that can be implemented by this learning process and the set of outcomes that can be implemented by a third-party mediator (in contrast to the situation in normal-form games).
\end{abstract}

\section{Introduction}

The theory of regret minimization is one of our most powerful tools for understanding zero-sum games. Consider, for instance, the statement that two players, choosing their strategies in a repeated zero-sum game by running a no-regret learning algorithm, each asymptotically guarantee that they receive their minimax equilibrium value \citep{FreundSchapire1996}. This simple fact lies at the heart of many theoretical results across computer science and economics, and underlies many of the recent super-human level AI performances in games such as Poker and Go \citep{BrownSandholm2017,MoravcikSBLMBDW2017,SilverHMGSDSAPL2016}. 

Many important games are not zero-sum, but instead \emph{general-sum}, allowing for potential alignment between the incentives of the different players. Regret minimization is also a powerful tool for analyzing general-sum games, albeit with some additional caveats. Minimizing the standard notion of \emph{external regret} -- the gap between one's utility and the utility of the best fixed action in hindsight -- has markedly weaker guarantees than in the zero-sum setting (this only allows convergence to the considerably weaker class of coarse-correlated equilibria, and is vulnerable to certain manipulations by a strategic opponent \citep{deng2019strategizing}). In these cases, one can get stronger game-theoretic guarantees by minimizing \emph{swap regret} -- the gap between one's utility and the counterfactual utility one could have received had they applied the best static \emph{swap function} to all of their actions (e.g., playing rock everywhere they might have previously played scissors). It is a fundamental result in online learning \citep{blum2007external} that there exist efficient swap regret minimization algorithms for playing in general-sum, normal-form games. 

In addition to being general-sum, many games come with some additional structure. For example, in Bayesian games (capturing settings like auctions and markets) players have private information they can use to choose their actions, and in extensive-form games (capturing settings like Poker and bargaining) players take multiple actions in sequence. Both of these classes of games fall under the umbrella of \emph{polytope games}\footnote{These games (or slight variants thereof) also appear in the literature under the names \emph{convex games} \citep{daskalakis2024efficient} and \emph{polyhedral games} \citep{Farina22:NearOptimal}. We choose the term \emph{polytope games} to be consistent with \cite{MMSSbayesian}, whose work we most directly build off of.}: (two-player\footnote{Throughout this paper, we will restrict our attention to games with two players.}) games where each player takes actions in some convex polytope, and where the utilities are provided as bilinear functions of these two action vectors. 

Ideally, we would like to apply the theory of regret minimization by minimizing swap regret in the class of polytope games, thereby taking advantage of the previously-mentioned guarantees. It is here where we hit a stumbling block -- for reasons we will discuss shortly, there is no single definition of swap regret in polytope games, but instead a wealth of different definitions: linear swap regret, polytope swap regret, normal-form swap regret, $\Phi$-regret, low-degree swap regret, etc. \citep{zhang2024efficient, fujii2023bayes, MMSSbayesian, GordonGreenwaldMarks2008}.  Of these notions, the strongest ones which do provide some of the above guarantees (e.g., normal-form swap regret) are conjectured to be intractable to minimize, and the ones for which we have efficient learning algorithms (e.g., linear-swap regret) come with unclear game-theoretic guarantees.

\subsection{Main Results and Techniques}

Our main contribution in this paper is the introduction of a new variant of swap regret in polytope games which we call \emph{profile swap regret}. We argue that profile swap regret is a particularly natural notion of swap regret in polytope games for two primary reasons:

\begin{itemize}
\item First, many of the utility-theoretic properties that swap regret possesses in general-sum normal-form games extend naturally to profile swap regret in polytope games. Take as one example the property of \emph{non-manipulability}. In \cite{deng2019strategizing}, the authors noticed that common no-external-regret algorithms (such as multiplicative weights) have the property that a strategic opponent can manipulate the play of these learning algorithms to achieve asymptotically more utility than they would be able to achieve against a rational follower (i.e., the ``Stackelberg value'' of the game). In contrast, no-swap-regret algorithms cannot be manipulated in this way, and it is now understood that the property of incurring sublinear swap regret exactly characterizes when learning algorithms are non-manipulable in general-sum normal-form games \cite{deng2019strategizing, MMSSbayesian}.

Similarly, we show that the property of incurring sublinear profile swap regret is equivalent to the property of being non-manipulable in polytope games (Theorem \ref{thm:poly_nonmanip}), resolving (one interpretation of) an open question of \cite{MMSSbayesian}. We additionally show that no-profile-swap-regret algorithms have minimal asymptotic menus and are Pareto-optimal, two properties established by \cite{paretooptimal} for no-swap-regret algorithms in normal-form games (Theorems \ref{thm:poly_minimal} and \ref{thm:poly_pareto}).

\item Secondly, by extending a recent technique introduced in \cite{daskalakis2024efficient}, we show that it is possible to design efficient learning algorithms that guarantee sublinear profile swap regret. That is, we show it is possible to construct explicit learning algorithms that run in time polynomial in the size of the game and number of rounds which incur at most $O(\sqrt{T})$ profile swap regret (Theorem~\ref{thm:upper-semi-separation}). Perhaps more meaningfully, these algorithms guarantee that any opponent can gain at most $O(\sqrt{dT})$ utility over their Stackelberg value in a $d$-dimensional polytope game by attempting to manipulate this learning algorithm, and are the first known efficient algorithms that provide this guarantee\footnote{All other algorithms providing this guarantee do so by minimizing stronger notions of swap regret such as normal-form swap regret, which are conjectured to be hard to minimize (both information-theoretically and computationally, see \cite{daskalakis2024lowerboundswapregret}).}. 
\end{itemize}

Before proceeding to discuss these results (and others) in more detail, it is helpful to provide a definition and some intuition for profile swap regret. Consider a game  where one player (the ``learner'', running a learning algorithm) picks actions from a convex polytope $\learnset$, the other player (the ``optimizer'', playing strategically) picks actions from a convex polytope $\optset$, and the utility the learner receives from playing action $x \in \learnset$ against the action $y \in \optset$ is given by $u_L(x, y)$, where $u_L: \learnset \times \optset \rightarrow \Rset$ is some bilinear function of both players' actions. Consider also a hypothetical transcript of a repeated instance of this game, where the learner has played the sequence of actions $\bx = (x_1, x_2, \dots, x_T)$ and the optimizer has played the sequence of actions $\by = (y_1, y_2, \dots, y_T)$. The standard (external) regret of the learner is simply the gap between the utility $\sum_{t} u_L(x_t, y_t)$ that they received and the best possible utility $\sum_{t} u_L(x^{*}, y_t)$ they could have received in hindsight by committing to a fixed action $x^{*} \in \learnset$. 

How should we define the swap regret of the learner on this transcript? If this were a normal-form game where the learner had $m$ actions, we could define swap regret by comparing the learner's utility to the best possible utility they could obtain by applying a ``swap function'' $\pi: [m] \rightarrow [m]$ to their actions, i.e., $\sum_{t} u_L(\pi(x_t), y_t)$. Implicit in this definition is the fact that we can extend the swap function $\pi$ from the set of pure strategies ($[m]$) to the set of mixed strategies (the simplex $\Delta_{m}$). In the case of normal-form games, there is a clear way to do this: decompose each mixed action as a combination of pure strategies and apply the swap function to each of the pure strategies in this decomposition.

In general polytope games, however, there may not be a unique way to decompose mixed strategies of the learner (elements of $\learnset$) into pure strategies (vertices of $\learnset$). For example, if $\learnset = [0, 1]^2$, it is possible to decompose the mixed action $x = (1/2, 1/2)$ as either $\frac{1}{2}(0, 0) + \frac{1}{2}(1, 1)$, or as $\frac{1}{2}(0, 1) + \frac{1}{2}(1, 0)$, and these two decompositions could get sent to very different mixed strategies under an arbitrary swap function mapping pure strategies to pure strategies. This has led to a number of different definitions for swap regret in polytope games, including linear swap regret (where we restrict the swap functions to be given by linear transformations, so that they can act directly on mixed actions), polytope swap regret (where we pick the best possible decomposition for the learner in each round), and normal-form swap regret (where we require the learner to directly play a distribution over pure strategies).

Profile swap regret addresses the issue of non-unique convex decompositions in the following way. Given a transcript $(\bx, \by)$, we construct its corresponding \emph{correlated strategy profile (CSP)} $\csp$, which we define to equal the element $\csp = \frac{1}{T}\sum_{t} x_{t} \otimes y_{t} \in \learnset \otimes \optset$. Note that in normal-form games, the CSP $\csp$ captures the correlated distribution over pairs of pure strategies played by both players -- it plays a somewhat similar role here, allowing us to evaluate the average of any bilinear utility function over the course of play. Now, for any decomposition of $\csp$ into a convex combination $\csp = \sum_{k} \lambda_{k} (x_{(k)} \otimes y_{(k)})$ of product strategy profiles (rank one elements of $\learnset \otimes \optset$), we define the profile swap regret of this decomposition to equal the maximal increase in utility by swapping each $x_{(k)}$ to the best response to $y_{(k)}$. Finally, we define the overall profile swap regret of this transcript to be the minimum profile swap regret of any valid decomposition of this form.

This definition, although perhaps a little peculiar, has a number of interesting properties:

\paragraph{Utility-theoretic properties} One important consequence of the above definition is that profile swap regret can be computed entirely as a function of the CSP $\csp$ (unlike stronger regret notions like polytope swap regret and normal-form swap regret). This allows us to build off the work of \cite{paretooptimal}, who develop techniques for understanding the subset of possible CSPs an adversary can asymptotically induce against a specific learning algorithm (they call this set the \emph{asymptotic menu} of the learning algorithm). By defining profile swap regret in this way, straightforward generalizations of these menu-based techniques to polytope games suffice to establish the  properties of non-manipulability, minimality, and Pareto-optimality for no-profile-swap-regret algorithms.

\paragraph{Efficient no-profile-swap-regret algorithms} Another benefit of of the above definition is that the CSP $\csp$ is a fairly low-dimensional vector, living in a $(\dim(\learnset) \cdot \dim(\optset))$-dimensional vector space (contrast this with the amount of information required to store a distribution over the potentially exponential number of vertices of $\learnset$). This allows us to design learning algorithms that incur at most $O(\sqrt{T})$ profile swap regret by using Blackwell's approachability theorem to force the CSP of the transcript of the game to quickly approach the subset of CSPs with zero profile swap regret. 
    
Interestingly, the computational properties of profile swap regret (and these associated learning algorithms) are somewhat subtle. We prove that it is NP-hard to compute the profile swap regret incurred by a learner during a specific transcript of play $(\bx, \by)$, even in the special case of Bayesian games (Theorem~\ref{thm:hardness}). Ordinarily, this would preclude running the previous approachability-based algorithms efficiently. However, by extending a technique recently introduced in \cite{daskalakis2024efficient} -- optimization via semi-separation oracles -- we show that it is still  possible to implement a variant of the Blackwell approachability algorithm in polynomial time. 

\paragraph{Polytope swap regret and game-agnostic learning} In \cite{MMSSbayesian}, the authors introduced \emph{polytope swap regret} as a measure of swap regret in polytope games that guarantees non-manipulability when minimized, and asked whether any non-manipulable algorithm must necessarily minimize swap regret. Following this, \cite{rubinstein2024strategizing} answered this question affirmatively for the case of Bayesian games. However, this answer has a minor subtlety -- in order to construct Bayesian games where a particular high polytope swap regret algorithm is manipulable, their construction might use games where the optimizer has far more actions available to them than in the games where the learner incurs high regret. \cite{rubinstein2024strategizing} further point out that this subtlety is in some sense unavoidable -- if you fix the number of actions and types of the two players, there are learning algorithms which incur high polytope swap regret but that are not manipulable. 

We study this phenomenon in general polytope games by drawing a distinction between \emph{game-aware learning algorithms} -- algorithms that can see the sequence of actions $\by \in \optset$ the optimizer is directly playing -- and \emph{game-agnostic learning algorithms} -- algorithms that can only see the sequence of induced counterfactual rewards (i.e., the function sending an action $x \in \learnset$ to the utility $u(x, y_t)$ they would have received if they played $x$ in round $t$). While profile swap regret characterizes non-manipulability for game-aware algorithms, we show that  polytope swap regret characterizes non-manipulability for game-agnostic  algorithms (Theorem~\ref{thm:agnostic-main}), thus providing an analogue of the result of \cite{rubinstein2024strategizing} for general polytope games. 

    \paragraph{Implications for equilibrium computation} We finally turn our attention to the question of how to define and compute correlated equilibria (CE) in polytope games. We argue that there are two main motivations for studying correlated equilibria, which are often conflated due to their agreement in the case of normal-form games. The first is the idea that a correlated equilibrium is an outcome that is inducible by a third party mediator providing correlated recommendations to all players. This idea is appealing from a mechanism design point-of-view, as one can imagine directly implementing the correlated equilibrium of our choice in a game by constructing such a mediator (e.g., installing a traffic light). 

    However, in many settings of interest, there is no explicit mediator. A second motivation for studying CE is that a correlated equilibrium represents a possible outcome of repeated strategic play between rational agents. We can still relate this back to our original mediator motivation by saying that an outcome is a correlated equilibrium if every player can individually imagine a ``one-sided'' mediator protocol incentivizing this scheme, where only that player's incentive constraints need to be met (this is in contrast to the ``two-sided'' mediator protocol above, which must work for all players simultaneously).
    
    We show that, in polytope games, this first form of CE corresponds to \emph{normal-form CE} (reached by normal-form swap regret dynamics) whereas the second form of CE corresponds to \emph{profile CE} (reached by profile swap regret dynamics). We further show that there is a gap between these two notions of correlated equilibria in general polytope games, exhibiting a separation which does not appear in the case of normal-form games. Finally, we note that our efficient no-profile-swap-regret algorithms mentioned above allow us to compute a profile CE in general games without being able to optimize over the set of profile CE, echoing the results of \cite{papadimitriou2008computing} for computing correlated equilibria in succinct multiplayer games.

\subsection{Related Work}

% Almost all of these results are constrained to the discrete setting for low dimension. Very recently  \cite{dagan2023external} and \cite{peng2023fast} presented algorithms that achieve full swap regret of $O(T/\log T)$, independent of $d$. This work improves on these results in the regime $d = o(\log(T)/\log\log(T))$. \cite{roth2024forecasting} and \cite{hu2024calibrationerrordecisionmaking} study the problem of designing forecasts such that any downstream agent incurs low swap regret; the problem faced by a single downstream agent (with potentially many actions but where the payoff only depends on a low-dimensional outcome) can be interpreted as a swap-regret minimization problem in a structured game. 

\paragraph{Swap Regret and $\Phi$-regret} Swap regret has long been an object of interest in normal-form games \citep{foster1997calibrated}. Early efficient methods to achieve bounded swap regret were established by~\cite{blum2007external}, who proposed efficient algorithms that guarantee low internal regret. More recently, concurrent work by ~\cite{dagan2023external} and ~\cite{peng2024swap} have shown that it is possible to minimize swap regret in any online learning setting where it is possible to minimize external regret. 

\cite{GordonGreenwaldMarks2008} introduced a generalization of swap regret to convex games called $\Phi$-regret, where one competes with a family $\Phi$ of functions mapping the action set into itself. This generalization captures many other swap regret notions of interest. Restricting $\Phi$ to only contain linear functions, we obtain linear swap regret. Linear swap regret has recently been studied extensively in both Bayesian games \citep{MMSSbayesian, fujii2023bayes, dann2023pseudonorm} and extensive form games \citep{Farina2023:Polynomial, Farina2024:eah, zhang2023mediator}, with \cite{daskalakis2024efficient} providing efficient algorithms for minimizing linear swap regret in general polytope games. Recently \cite{zhang2024efficient} studied $\Phi$-regret minimization for classes of functions $\Phi$ specified by low degree polynomials. Finally, \citep{dagan2023external, peng2024swap, fishelsonfull} study the notion of \emph{full swap regret}, allowing $\Phi$ to be the set of all (non-linear) functions mapping the action set into itself. There are some other variants of $\Phi$-regret minimization studied towards the goal of computing specific variants of correlated equilibria in Bayesian or extensive-form games; we survey those below. 

\paragraph{Strategizing in games} While no-swap regret algorithms were first conceptualized via their connection to correlated equilibrium, a more recent line of work has investigated the strategic properties of these algorithms in their own right.~\cite{braverman2018selling} initiated the study of non-myopic responses to learning algorithms in the context of single buyer auctions, demonstrating that when bidders run standard learning algorithms to choose their bids, they can be fully manipulated by a seller (who can extract the full surplus of the auction, leaving the buyer with zero utility). Since then, there has been a large line of recent work focused on understanding which learning algorithms provide provable game-theoretic guarantees in settings such as auctions \citep{deng2019prior, cai2023selling, lin2024persuading, rubinstein2024strategizing}, principal-agent problems \citep{guruganesh2024contracting, lin2024persuading}, general normal-form games \citep{deng2019strategizing, brown2024learning,  haghtalab2024calibrated, camara2020mechanisms}, and Bayesian games \cite{MMSSbayesian, rubinstein2024strategizing}.

\paragraph{Correlated equilibria in polytope games} The concept of correlated equilibria originates from~\cite{aumann1974subjectivity} as a generalization of the notion of a Nash equilibrium for players who can correlate their play. Correlated equilibria also have the nice property that unlike Nash equilibria, they are computable in polynomial time \citep{papadimitriou2008computing}, at least in normal-form games. One of the main motivations for designing no-swap-regret learning algorithms is to construct decentralized learning dynamics that provably converge to correlated equilibria at fast rates (e.g, \citep{anagnostides2022near}).

On the other hand, the simplest generalization of correlated equilibria to general polytope games -- that is, to \emph{normal-form correlated equilibria (NFCE)}, formed by each extremal strategy as a pure strategy in the corresponding game -- might blow up the size of the game exponentially, and there is therefore no known efficient algorithm for computing NFCE. Moreover, there is no clear way to ``correctly'' generalize the original definition of \cite{aumann1974subjectivity} to these settings -- \cite{forges1993five} introduces ``five legitimate definitions of correlated equilibrium'' for games with sequential imperfect information. In Bayesian games, \cite{bergemann2016bayes} introduce a notion of Bayes correlated equilibrium, but did not discuss computational aspects; more recently, \cite{fujii2023bayes} studies three refinements of this notion (agent-normal-form CE, communication equilibria, and strategic-form CE), and shows how to compute communication equilibria by minimizing ``untruthful swap regret'' (a variant of linear swap regret). In extensive-form games, \cite{von2008extensive} introduce the concept of \emph{extensive-form correlated equilibria}, which can be computed in polynomial-time in the representation of the game either directly \citep{huang2008computing} or by decentralized learning dynamics \citep{farina2022simple} (minimizing ``trigger swap regret''). 

\paragraph{Blackwell approachability} The main technique we use to design efficient learning algorithms for minimizing profile swap regret is an application of the semi-separation framework of \cite{daskalakis2024efficient} to the general problem of Blackwell approachability \citep{blackwell1956analog}. \cite{abernethy2011blackwell} demonstrated a reduction from Blackwell approachability to  regret-minimization that we use in this application. The orthant-approachability form of Blackwell approachability that we introduce in Section~\ref{sec:algorithms} appears implicitly in many follow-up works that focus on improving the rates of approachability algorithms \citep{perchet2013approachability, Perchet2015, Kwon2021, dann2023pseudonorm, dann2024rate}.

%  In this setting,~\cite{deng2019strategizing} showed that no-swap regret algorithms are \emph{non-manipulable}, and thus optimizing against a no-swap regret algorithm involves optimizing only over the class of static strategies that do not vary over time. 

% ~\cite{MMSSbayesian} introduced the study of non-myopic responses to learning algorithms in \emph{Bayesian} games and Polytope games more generally, identifying Polytope Swap Regret as a sufficient notion for non-manipulability in these games. \cite{rubinstein2024strategizing} prove that Polytope Swap Regret is also \emph{necessary} in general Bayesian games, but in the game-agnostic setting. By contrast, we show that when the structure of the game is fixed, the strictly weaker notion of~\emph{profile} swap regret is both necessary and sufficient for all Polytope games. 

 %In contrast, we show that a significant version of correlated equilibria in more general classes of games (polytope games) are achieved

\section{Model and Preliminaries}

\paragraph{Notation} Given two convex sets $\cC_1 \subseteq \Rset^{d_1}$ and $\cC_2 \subseteq \Rset^{d_2}$, we define their tensor product $\cC_1 \otimes \cC_2$ to be the subset of $\Rset^{d_1} \otimes \Rset^{d_2} \simeq \Rset^{d_1d_2}$ equal to the convex hull of all vectors of the form $c_1 \otimes c_2 = c_1 c_2^\top$ for $c_1 \in \cC_1$ and $c_2 \in \cC_2$. We write $\ball_{d}(R)$ to denote the $d$-dimensional ball of radius $R$ centered at the origin, and $\ball_{d}(x, R)$ to denote the $d$-dimensional ball of radius $R$ centered at $x$ (sometimes omitting $d$ when it is clear from context).

Unless otherwise specified, all norms $||\cdot||$ refer to the $\ell_2$ norm in the ambient space. Note that any bilinear function $f: \Rset^{d_1} \times \Rset^{d_2} \rightarrow \Rset$ corresponds to a vector $\hat{f}$ such that $f(x, y) = \langle \hat{f}, x \otimes y\rangle$; we define the various norms of $f$ (e.g. $||f||_{1}$, $||f||_{2}$, $||f||_{\infty}$) to equal those of $\hat{f}$. 
% Note that if $||f||_{1} \leq 1$, then $
% |f(x, y)| \leq 1$ for any $x, y$ satisfying $||x||_{\infty} \leq 1$ and $||y||_{\infty} \leq 1$.

Selected proofs are omitted and deferred to Appendix~\ref{sec:omitted}.

\subsection{Polytope Games}

We begin by introducing the notion of an \emph{polytope game}: a two-player game where the action sets of both players are convex polytopes\footnote{Almost all the results should extend to the slightly more general setting of arbitrary bounded convex sets. We focus on the specific setting of polytopes as this captures essentially all the primary settings of interest (e.g. Bayesian games and extensive-form games).} and where the payoffs of both players are provided by bilinear functions over these two sets. Formally, a polytope game $G$ is a game between two players; we call these two players the \emph{learner} and the \emph{optimizer}. The learner selects an action $x$ from the $d_L$-dimensional bounded convex polytope $\learnset \subset \ball_{d_L}(1)$ and the optimizer selects an action $y$ from the bounded convex polytope $\optset \subset \ball_{d_O}(1)$ (note that $\learnset$ and $\optset$ can have different dimensions). After doing so, the learner receives utility $u_L(x, y)$ and the optimizer receives utility $u_O(x, y)$, where $u_L$ and $u_O$ are both bounded bilinear functions satisfying $||u_L||_{\infty}, ||u_O||_{\infty} \leq 1$. These two action sets ($\learnset$ and $\optset$) and payoff functions ($u_L$ and $u_O$) define the game $G$.

We briefly note here that polytope games (for specific choices of $\learnset$ and $\optset$) capture a variety of different strategic settings. For example:

\begin{itemize}
    \item \textbf{Normal-form games}: When $\learnset = \Delta_m$ and $\optset = \Delta_n$, this captures the class of normal-form games where the learner has $m$ available actions and the optimizer has $n$ available actions.
    \item \textbf{Bayesian games}: When $\learnset = (\Delta_{m})^{c_L}$ and $\optset = (\Delta_{n})^{c_O}$, this captures the class of Bayesian games where again the learner and optimizer have $m$ and $n$ available actions respectively, but in addition the learner is one of $c_L$ types and the optimizer is one of $c_O$ types. Here we should interpret the set $(\Delta_{m})^{c_L}$ as representing all functions mapping a learner's type (an element of $[c_L]$) to a learner's mixed action (an element of $\Delta_m$); we can interpret the set $(\Delta_{n})^{c_O}$ similarly. 
    \item \textbf{Extensive-form games}: A two-player extensive form game can be written as a polytope game by letting $\learnset$ and $\optset$ be the sequence-form polytopes for the two players. See e.g. \cite{von1996efficient} for details.
\end{itemize}

Generally, we will be interested in polytope games that are played repeatedly for $T$ rounds (and even more generally, the limiting behavior of such games as $T$ approaches infinity). In a repeated setting, we will let $x_t \in \learnset$ denote the action taken by the learner at round $t$, and let $y_t \in \optset$ denote the action taken by the optimizer at round $t$. Given a transcript of a repeated game where the learner has played the sequence of actions $x_1, x_2, \dots, x_T$ and the optimizer has played the sequence of actions $y_1, y_2, \dots, y_T$, we say that the \emph{correlated strategy profile (CSP)} $\csp$ corresponding to this transcript is given by

\begin{equation}\label{eq:csp-def}
\csp = \frac{1}{T}(x_1 \otimes y_1 + x_2 \otimes y_2 + \dots + x_T \otimes y_T) \in \learnset \otimes \optset.
\end{equation}

\noindent
By construction, the CSP $\csp$ provides sufficient information to evaluate the average value of any bilinear function $f(x_t, y_t)$ over the transcript of play. Under the minor assumption\footnote{This assumption is true for the examples mentioned above, and can always be made to be true by augmenting $\learnset$ and $\optset$ with an additional dummy coordinate fixed to equal $1$.} that $\learnset$ and $\optset$ are contained within proper affine subspaces of $\Rset^{d_L}$ and $\Rset^{d_O}$ respectively, the CSP $\csp$ is also sufficient to evaluate the average of any bi-affine function $f(x_t, y_t)$ over this transcript. We will therefore assume that $\learnset$ and $\optset$ satisfy this assumption throughout the rest of the paper, and write $f(\csp) = \frac{1}{T}\sum_{t}f(x_t, y_t)$ for any such bi-affine function $f$ (e.g, $u_L(\csp)$ is the average learner utility over this transcript of play).

Given a polytope game, we will write $\BR_{L}(x)$ to denote the set of the learner's best responses to the optimizer's action $y$, i.e., $\BR_{L}(y) = \{x \in \learnset \,\mid\, u_L(x, y) = \max_{x^{*} \in \learnset} u_L(x^{*}, y)\}$. We say an action $x \in \learnset$ for the learner is \emph{strictly dominated} if it is not the best response to any action, i.e., there does not exist an $y \in \optset$ such that $x \in \BR_{L}(y)$. We say an action $x \in \learnset$ for the learner is \emph{weakly dominated} if it is \emph{not} strictly dominated but it is impossible for the optimizer to uniquely incentivize $x$, i.e., there does not exist an $y \in \optset$ such that $\BR_{L}(y) = \{x\}$. We say a polytope game $G$ is \emph{non-degenerate} if none of the vertices of $\learnset$ (the learner's extremal actions) are weakly dominated. 

\subsection{Learning Algorithms and Regret}

Thus far, we have made no distinction between the role of the optimizer and the learner. The difference, of course, is that the learner will play this repeated game by running a learning algorithm. 

A \emph{learning algorithm} $\cA$ for the game $G$ is a family $\{\cA^{T}\}_{T \in \mathbb{N}}$ of horizon-dependent learning algorithms. A horizon-dependent learning algorithm $\cA^{T}$ for time horizon $T$ is a collection of $T$ functions $\cA_1^{T}, \cA_2^{T}, \dots, \cA_T^T$, where $\cA_{t}^{T}$ describes the learner's play at time $t$ as a function of the optimizer's play up until round $t-1$, i.e., $\cA_{t}^{T}(y_1, y_2, \dots, y_{t-1}) = x_t$. Note that as written, a learning algorithm is specific to the given game $G$, and a learning algorithm for one game $G$ cannot necessarily easily be applied to a separate game $G'$. The important caveat to this is that a learning algorithm does not depend on the optimizer's utility function $u_O$, and we expect a robust learning algorithm $\cA$ for a game $G$ to perform well against several different opponents with different choices of $u_O$. 

We can measure the performance of a learning algorithm via some version of \emph{regret}. For example, the \emph{external regret} of a learning algorithm is the gap between its utility and the counterfactual utility it would have received if it played the best fixed action in hindsight. Formally, if the learner has played the sequence of actions $\bx = (x_1, x_2, \dots, x_T)$ and the optimizer has played the sequence of actions $\by = (y_1, y_2, \dots, y_T)$, the external regret of the learner is given by

\begin{equation*}
\Reg(\bx, \by) = \max_{x^{*} \in \cX} \sum_{t=1}^{T} u_L(x^{*}, y_t) - \sum_{t=1}^{T} u_L(x_t, y_t).
\end{equation*}

Note that since $u_L$ is bilinear, we can also compute the external regret directly from the CSP $\csp$ corresponding to this transcript of play. In particular, we can alternatively write

\begin{equation*}
\Reg(\csp) = T \cdot \left(\max_{x^{*} \in \cX}u_L(x^* \otimes\mathrm{proj}_{\cY}(\csp)  )  - u_L(\csp) \right),
\end{equation*}
% \mm{Strictly speaking, this is an abuse of notation here I think; in principle, we should write something like $\wt u_L$
% instead of $u$ for the corresponding bilinear function defined over the tensor product.}

\noindent
where $\mathrm{proj}_{\cY}(\csp) = \frac{1}{T}(y_1 + y_2 + \dots + y_T)$~\footnote{This operation is well defined -- we can extract the marginal action of the optimizer from the CSP $\phi$ 
 since it an average of a bi-affine function $g(x_t,y_t) = y_t$ applied to each entry in the transcript.} is the projection of $\csp$ onto the optimizer's action space (i.e., the average action taken by the optimizer over the course of the game).

In this paper we will primarily care about forms of regret stronger than external regret, namely variants of \emph{swap regret}. For the case of normal-form games (when the learner's action set $\learnset = \Delta_m$ is just the simplex over $m$ pure actions), the swap regret of a transcript of play can be defined via

\begin{equation}\label{eq:swap-reg}
\Swap(\bx, \by) = \max_{\pi^{*}:[m]\rightarrow [m]} \sum_{t=1}^{T} u_L(\pi^*(x_t), y_t) - \sum_{t=1}^{T} u_L(x_t, y_t).
\end{equation}

Intuitively, this should be thought of as the gap between the utility received by the learner and the maximum utility the learner could have received if they applied a specific ``swap function'' $\pi^*$ to their sequence of play: a function which transforms each pure strategy $i \in [m]$ of a learner to a new pure strategy $\pi(i) \in [m]$. Implicit in this definition is the fact that although $\pi^*$ is a function on \emph{pure strategies} (elements of $[m]$), we can uniquely extend $\pi^*$ to act on \emph{mixed strategies} (elements $x \in \Delta_m$) via $\pi^*(x)_{i} = \sum_{j \mid \pi(j) = i} x_j$. That is, the weight of action $i$ in $\pi^*(x)$ is equal to the total weight of actions $j$ in $x$ that map to $i$ under $\pi^*$. 

% \begin{equation}\label{eq:swap_pushforward}
% \pi^*(x)_{i} = \sum_{j \mid \pi(j) = i} x_j.
% \end{equation}

% \noindent

When the learner's action set $\learnset$ is an arbitrary polytope (and not a simplex), it is not clear how to perform this extension from swap functions on ``pure strategies'' (vertices of $\learnset$) to swap functions on ``mixed strategies'' (points within $\learnset$). For this reason, it is not clear how to directly generalize the notion of swap regret to polytope games, and a couple different plausible definitions have previously been proposed. We present three of these definitions below, in increasing order of strength\footnote{Meaning that linear swap regret will always be the smallest (and easiest to minimize) and normal-form swap regret will always be the largest (and hardest to minimize).}: \emph{linear swap regret}, \emph{polytope swap regret}, and \emph{normal-form swap regret}.

\paragraph{Linear swap regret}

In linear swap regret, we constrain all of our swap functions to be linear transformations, which can then be applied directly to the mixed strategies played by the learner. Specifically, let $\Psi$ be the set of all affine linear transformations that map the set $\cX$ to itself. The \emph{linear swap regret} of a transcript with learner actions $\bx = (x_1, x_2, \dots, x_T)$ and optimizer actions $\by = (y_1, y_2, \dots, y_T)$ is given by

\begin{equation*}
\LinSwap(\bx, \by) = \max_{\psi \in \Psi} \sum_{t=1}^{T} u_L(\psi(x_t), y_t) - \sum_{t=1}^{T} u_L(x_t, y_t).
\end{equation*}

As was the case with external regret (and swap regret over the simplex), we can write linear swap regret as a function of the CSP $\csp$ corresponding to this transcript of play via

\begin{equation*}
    \LinSwap(\csp) = T \cdot \left(\max_{\psi \in \Psi} u_L(\psi(\csp)) - u_L(\csp)\right),
\end{equation*}

\noindent
where we extend $\psi$ to act on CSPs $\csp$ via $\psi(x \otimes y) = \psi(x) \otimes y$.

\paragraph{Polytope swap regret}

Another way to adapt the definition of swap regret to polytope games is to still keep swap functions that act  (possibly non-linearly) on the set of pure strategies, but choose the decomposition of the learner's mixed actions into pure actions in the way that is optimal for the learner. This is the approach taken by \cite{MMSSbayesian} in the definition of polytope swap regret.

Formally, we define the \emph{polytope swap regret} $\PolySwap(\bx, \by)$ of a transcript with learner actions $\bx = (x_1, x_2, \dots, x_T)$ and optimizer actions $\by = (y_1, y_2, \dots, y_T)$ as follows. 

\begin{enumerate}
    \item First, decompose each of the learner's actions $x_t$ into a convex combination of the extreme points $\learnvert$ of $\learnset$. We will denote this decomposition by $\xV_t \in \Delta(\learnvert)$. Note that there may be many ways to do this decomposition: we are free to choose any of them and will eventually choose the decompositions that \emph{minimize} our eventual regret.

    \item For any swap function $\pi: \learnvert \rightarrow \learnvert$, we will let $\pi(\xV_t) \in \Delta(\learnvert)$ denote the resulting distribution over the extreme points of $\learnset$ after applying $\pi$ (i.e., in the same manner as in the definition of swap regret), and $\overline{\pi}(\xV_t)$ be the element of $\learnset$ formed by taking the average action in $\pi(\xV_t)$. 

    \item Finally, we define

    \begin{equation*}
    \PolySwap(\bx, \by) = \min_{\bxV} \max_{\pi: \learnvert \rightarrow \learnvert} \left(\sum_{t=1}^{T} u_L(\overline{\pi}(\xV_t), y_t) - u_L(x_t, y_t) \right).
    \end{equation*}

    That is, after the learner picks their preferred decomposition of actions into $\xV_t$, the adversary picks the swap function $\pi$ that maximizes the resulting regret from comparing the original sequence of actions $x_t$ to the transformed sequence of actions $\overline{\pi}(\xV_t)$.
\end{enumerate}

% Unlike linear swap regret, polytope swap regret \emph{cannot} be written as a direct function of the CSP $\csp$ corresponding to the transcript of play $(\bx, \by)$.

\paragraph{Normal-form swap regret}

Finally, we can attempt to directly use the original definition of swap regret by ``expanding'' any polytope game to a normal-form game where pure strategies for the optimizer and learner correspond to extreme points of $\learnset$ and $\optset$ (equivalently, forcing the learner to specify their own decomposition of their mixed strategies into pure strategies each round). In particular, we define the \emph{vertex game} corresponding to a polytope game $G$ to be the normal-form game where the learner plays mixtures of actions in $\learnvert$ and the optimizer plays mixtures of actions in $\optvert$ (i.e., with $\learnset' = \Delta(\learnvert)$ and $\optset' = \Delta(\optvert)$). If the learner plays a sequence of vertex game actions $\bxV = (\xV_1, \xV_2, \dots, \xV_T)$ and the optimizer plays a sequence of vertex game actions $\byV = (\yV_1, \yV_2, \dots, \yV_T)$, the \emph{normal-form swap regret} of this transcript of play is given by

\begin{equation}\label{eq:norm-swap-reg}
\NormSwap(\bxV, \byV) = \max_{\pi^{*}:\learnvert\rightarrow \learnvert} \sum_{t=1}^{T} u_L(\pi^*(\xV_t), \yV_t) - \sum_{t=1}^{T} u_L(\xV_t, \yV_t).
\end{equation}

\noindent
Note that this is simply the original definition of swap regret \eqref{eq:swap-reg} as applied to the vertex game.

Technically, this definition of swap regret is of a different flavor from the previous definitions, in that there is no clear way to take a transcript of play $\bx = (x_1, \dots, x_T)$ and $\by = (y_1, \dots, y_T)$ of the original polytope game and evaluate the normal-form swap regret of this transcript. However, note that we can always go the other direction -- given a transcript of play $(\bxV, \byV)$ of the vertex game, we can always construct a corresponding transcript of play $(\bx, \by)$ for the polytope game (e.g., by letting $y_t = \E[\yV_t] \in \optset$) and evaluate the other measures of swap regret on this transcript.

\paragraph{Comparisons between regret definitions}

We conclude this section with a couple of comparisons between regret notions. First, we note that for the case of normal-form games, all three of these notions reduce to the standard notion of swap regret (as we would expect).

\begin{theorem}\label{thm:nfg-swap-notions}
Fix a polytope game $G$ with $\learnset = \Delta_m$. Then for any transcript of play $\bx = (x_1, x_2, \dots, x_T)$ and $\by = (y_1, y_2, \dots, y_T)$ in $G$, we have that

$$\Swap(\bx, \by) = \LinSwap(\bx, \by) = \PolySwap(\bx, \by) = \NormSwap(\bx, \by).$$

\noindent
(Note that for normal-form games, the vertex game is identical to the original game, and therefore the quantity $\NormSwap(\bx, \by)$ is well-defined).
\end{theorem}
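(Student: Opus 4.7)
The plan is to establish the three equalities in turn, each exploiting a special feature of the simplex. The key structural fact is that for $\learnset = \Delta_m$, the vertex set $\learnvert$ is exactly $\{e_1,\ldots,e_m\}$, and every $x \in \Delta_m$ admits a \emph{unique} decomposition $x = \sum_i x_i e_i$ into a convex combination of vertices.

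First I would dispatch $\PolySwap(\bx,\by) = \Swap(\bx,\by)$ and $\NormSwap = \Swap$ essentially for free. For polytope swap regret, the outer minimization over decompositions $\bxV$ is vacuous because each $\xV_t$ is forced to be the distribution with weights $(x_{t,1},\ldots,x_{t,m})$ on $\learnvert$; then $\overline{\pi}(\xV_t)$ equals $\pi^{*}(x_t)$ under the standard extension of $\pi^{*}$ to mixed strategies, matching the definition \eqref{eq:swap-reg} exactly. For normal-form swap regret, I would note that for $\learnset = \Delta_m$ the vertex game coincides with the original game: a transcript $(\bx,\by)$ lifts canonically to $(\bxV,\byV)$ by identifying each $x_t$ with the corresponding distribution over $\learnvert$, and definition \eqref{eq:norm-swap-reg} then becomes definition \eqref{eq:swap-reg}.

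The main work is $\LinSwap = \Swap$. The inclusion $\Swap \leq \LinSwap$ is easy: any swap function $\pi:[m]\to[m]$ extends to the affine linear map $\psi_\pi \in \Psi$ whose matrix representation has $i$-th column $e_{\pi(i)}$, and $\psi_\pi$ agrees with the standard mixed-strategy extension of $\pi$. For the reverse inequality, I would identify $\Psi$ with the polytope $\cM \subset \Rset^{m \times m}$ of column-stochastic matrices: since $\{e_1,\ldots,e_m\}$ is a basis of $\Rset^m$, an affine map on $\Delta_m$ is uniquely determined by its values on the vertices, and the requirement that the image sit inside $\Delta_m$ is equivalent to each column of the corresponding matrix lying in $\Delta_m$ (convexity of $\Delta_m$ handles the rest). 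Then the optimization
\begin{equation*}
\LinSwap(\csp) = T \cdot \left(\max_{\psi \in \Psi} u_L(\psi(\csp)) - u_L(\csp)\right)
\end{equation*}
is a linear objective over the polytope $\cM$, so its maximum is attained at an extreme point. Since $\cM = (\Delta_m)^m$ is a product of simplices, its extreme points are exactly the matrices whose columns are standard basis vectors, which are in bijection with swap functions $\pi:[m]\to[m]$. Evaluating the objective at such an extreme point recovers exactly the standard swap regret expression, giving $\LinSwap \leq \Swap$.

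The main obstacle is the last step: verifying that the extreme points of $\cM$ coincide with swap matrices. I would handle this by the product-polytope fact above, noting that extreme points of a Cartesian product of convex sets are exactly tuples of extreme points of the factors, combined with the trivial observation that the extreme points of $\Delta_m$ are the $e_i$. Everything else is routine bookkeeping around the bilinearity of $u_L$ and the correspondence between affine maps on $\Delta_m$ and column-stochastic matrices.
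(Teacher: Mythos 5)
Your proposal is correct and follows essentially the same route as the paper's proof: uniqueness of the vertex decomposition over the simplex handles $\PolySwap$ and $\NormSwap$, and the identification of affine self-maps of $\Delta_m$ with stochastic matrices, whose extreme points are exactly the $0/1$ matrices corresponding to swap functions $\pi:[m]\to[m]$, gives $\LinSwap = \Swap$ (the paper phrases this with row-stochastic rather than column-stochastic matrices, a purely notational difference). Your extra detail on extreme points of the product of simplices is fine but not a departure in substance.
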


The following theorem shows that the above three forms of swap regret are ordered as originally described. Moreover, this ordering is ``strict'' -- minimizing one of these regret notions implies nothing about the larger regret notions. 

\begin{theorem}\label{thm:swap-notions-ordering}
Let $\bx = (x_1, x_2, \dots, x_T)$ and $\by = (y_1, y_2, \dots, y_T)$ be a transcript of play for a given polytope game. Let $\bxV = (\xV_1, \xV_2, \dots, \xV_T)$ and $\byV = (\yV_1, \yV_2, \dots, \yV_T)$ be a transcript of play for the corresponding vertex game such that $x_t = \E[\xV_t]$ and $y_t = \E[\yV_t]$ for each $t \in [T]$. Then:

$$\LinSwap(\bx, \by) \leq \PolySwap(\bx, \by) \leq \NormSwap(\bxV, \byV).$$

Moreover, all these inequalities are asymptotically strict in the following sense: for each two neighboring definitions of swap regret, there exists a family of transcripts for which the larger swap regret grows as $\Omega(T)$ but the smaller swap regret is zero.
\end{theorem}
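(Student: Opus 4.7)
The statement has two parts: the chain of inequalities and asymptotic separations for each neighboring pair. The orderings reduce to routine manipulations; the real work is in constructing explicit families of transcripts that realize the gaps.

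For $\LinSwap(\bx, \by) \leq \PolySwap(\bx, \by)$, the plan is to show that any affine linear $\psi \in \Psi$ is simulated, on \emph{any} decomposition $\bxV$, by a randomized vertex swap. For each vertex $v \in \learnvert$, write $\psi(v) = \sum_w \lambda_{v,w} w$ as a convex combination over vertices of $\learnset$, and let $\pi$ independently map $v \mapsto w$ with probability $\lambda_{v,w}$. Using $x_t = \E[\xV_t]$ and linearity of $\psi$, one gets $\E_\pi[\overline{\pi}(\xV_t)] = \psi(x_t)$; bilinearity of $u_L$ then yields $\E_\pi[\sum_t u_L(\overline{\pi}(\xV_t), y_t)] = \sum_t u_L(\psi(x_t), y_t)$, so some deterministic realization of $\pi$ attains at least this value. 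Taking $\max$ over $\psi$ and $\min$ over $\bxV$ gives the inequality. For $\PolySwap(\bx, \by) \leq \NormSwap(\bxV, \byV)$, bilinearity and the relations $x_t = \E[\xV_t]$, $y_t = \E[\yV_t]$ let me rewrite $\NormSwap(\bxV, \byV) = \max_\pi \sum_t [u_L(\overline{\pi}(\xV_t), y_t) - u_L(x_t, y_t)]$, which is exactly the expression inside the $\min$ defining $\PolySwap(\bx, \by)$, evaluated at the particular decomposition $\bxV$; the $\min$ over all decompositions can only shrink it.

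For the separation between $\LinSwap$ and $\PolySwap$, I use $\learnset = [0,1]^2$ (and a suitably scaled $\optset \subseteq [-1,1]^2$) with $u_L(x,y) = \langle x, y\rangle$, normalized so the paper's magnitude bounds hold. Split the horizon into four equal phases in which the learner plays the vertices $(0,0), (1,0), (0,1), (1,1)$ respectively, against optimizer plays $y^{(1)}=(1,1), y^{(2)}=(-1,-1), y^{(3)}=(-1,-1), y^{(4)}=(1,1)$. The vertex swap $\pi$ mapping $(0,0)\to(1,1),\ (1,0)\to(0,0),\ (0,1)\to(0,0),\ (1,1)\to(1,1)$ gains $(2,1,1,0)$ utility per round across the four phases, so $\PolySwap = \Omega(T)$ (the decompositions are forced since each $x_t$ is a vertex). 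A direct computation gives $\sum_t u_L(\psi(x_t), y_t) = \tfrac{T}{4}\langle \psi((0,0)) - \psi((1,0)) - \psi((0,1)) + \psi((1,1)),\, (1,1)\rangle$ for any affine $\psi$, and this vanishes by the identity $\psi((0,0)) + \psi((1,1)) = \psi((1,0)) + \psi((0,1))$ satisfied by every affine map. Since the learner's own total utility is also zero, $\LinSwap = 0$.

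For the separation between $\PolySwap$ and $\NormSwap$, take $\learnset = [0,1]^2$ again and have the learner play $x_t = (\tfrac12, \tfrac12)$ in every round. Split the horizon into two halves in which the optimizer plays $y^{(1)} = (1,1)$ and $y^{(2)} = (-1,-1)$, and consider the vertex-game decomposition $\xV_t = \tfrac12(0,0) + \tfrac12(1,1)$ in phase~$1$ and $\xV_t = \tfrac12(0,1) + \tfrac12(1,0)$ in phase~$2$. Since the two phase-supports are disjoint, a single global swap $\pi$ can simultaneously send the phase-$1$ vertices to $(1,1)$ and the phase-$2$ vertices to $(0,0)$, giving $\NormSwap = \Omega(T)$. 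Now instead consider the uniform decomposition $\wt{\xV}_t = \tfrac14 \sum_{v \in \learnvert} \delta_v$ used in \emph{both} phases. Here $\overline{\pi}(\wt{\xV}_t) = \tfrac14 \sum_v \pi(v) =: Z$ is a constant independent of $t$, and the two phase-wise utilities $\langle Z, (1,1)\rangle$ and $\langle Z, (-1,-1)\rangle$ cancel for every $\pi$, giving $\PolySwap = 0$. The main obstacle throughout the strictness proofs is identifying these small gadgets: the first gap rests on affine dependences among the cube's vertices, while the second exploits the freedom to choose a decomposition whose $\overline{\pi}$-image is the same constant in every round.
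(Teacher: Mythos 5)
Your proof is correct, and the chain of inequalities is established essentially as in the paper: for $\LinSwap \leq \PolySwap$ the paper simply relaxes the swap functions to maps $\pi\colon \learnvert \to \learnset$ and takes $\pi = \psi|_{\learnvert}$, whereas you randomly round each $\psi(v)$ to a vertex and invoke linearity of expectation -- an equivalent but slightly more roundabout version of the same idea; for $\PolySwap \leq \NormSwap$ both arguments observe that bilinearity collapses $\NormSwap(\bxV,\byV)$ to the inner expression of $\PolySwap$ at the particular decomposition $\bxV$. Where you genuinely diverge is in the separations. For $\LinSwap = 0$ versus $\PolySwap = \Omega(T)$ the paper does not construct an example at all: it cites Theorem~7 of \cite{MMSSbayesian} (a Bayesian-game construction), while you give a self-contained four-phase gadget on $[0,1]^2$ whose correctness rests on the affine dependence $\psi(0,0)+\psi(1,1)=\psi(1,0)+\psi(0,1)$; I checked the numbers ($\PolySwap \geq T$ via the forced point-mass decompositions, and every affine $\psi$ yields counterfactual utility exactly $0$), and it works, giving a more elementary and self-contained argument than the paper's citation. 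For $\PolySwap = 0$ versus $\NormSwap = \Omega(T)$ you use essentially the paper's example (learner fixed at $(\tfrac12,\tfrac12)$ on the square against two opposite optimizer phases; the paper takes $\optset = [-1,1]$ with $u_L = y(x_1-x_2)$, you take an inner-product utility), but your certificate that $\PolySwap = 0$ is different: the paper uses the decomposition $\tfrac12(0,0)+\tfrac12(1,1)$ and argues every vertex best-responds to the average optimizer play $\bar y = 0$, whereas you use the uniform decomposition over all four vertices so that $\overline{\pi}(\wt{\xV}_t)$ is a constant $Z$ and the two phases cancel for every $\pi$; both are valid, and your cancellation trick has the minor advantage of not needing any best-response reasoning. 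One small housekeeping point: as you note, the gadgets must be rescaled so that $\learnset,\optset$ sit in unit balls and $\norm{u_L}_{\infty}\leq 1$ per the paper's conventions, which affects nothing since positive rescaling preserves both ``zero'' and ``$\Omega(T)$''.
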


We will often be interested in the worst-case regret incurred by some algorithm. For a horizon-dependent learning algorithm $\cA^{T}$, we let $\Reg(\cA^{T})$ denote the maximum value of $\Reg(\bx, \by)$ over all transcripts of length $T$ obtainable by playing against $\cA^{T}$. For a general learning algorithm $\cA$, we will let $\Reg(\cA)$ denote the function mapping $T$ to $\Reg(\cA^{T})$ (so e.g., we may have $\Reg(\cA) = O(\sqrt{T})$). We extend this notation to other notions of regret (e.g., $\Swap$, $\LinSwap$, $\PolySwap$, $\NormSwap$) in the obvious way.

\subsection{Manipulability and Menus}\label{sec:manipulability}

Given the above range of possible definitions for swap regret in polytope games, which ones should we target when learning in games? Of course, we can always try to minimize the strongest form of swap regret above (normal-form swap regret), but this might come with trade-offs in the form of worse regret bounds and increased algorithmic complexity. 

Instead, a more principled approach is to understand why we might want to minimize swap regret in normal-form games in the first place, and then pick the variant of swap regret that gives us comparable guarantees for polytope games. In particular, swap-regret minimization has a number of nice game-theoretic consequences in the form of convergence to certain classes of equilibria and robustness of the learner to certain dynamic manipulations of the optimizer. Informally, we can state some of these consequences as follows:

\begin{enumerate}
\item \textbf{(Non-manipulability)} No-swap-regret learning algorithms are ``non-manipulable'', in the sense that if a learner is running a no-swap-regret learning algorithm, the optimizer can do nothing asymptotically better than playing a fixed (possibly mixed) action every round. Moreover, every non-manipulable no-regret learning algorithm must also be no-swap-regret.

\item \textbf{(Minimality)} No-swap-regret learning algorithms form a ``minimal core'' of all no-regret learning algorithms: any CSP $\csp$ that an optimizer can implement against a no-swap-regret learning algorithm can be implemented against any no-regret algorithm.

\item \textbf{(Pareto-optimality)} No-swap-regret learning algorithms are Pareto-optimal: there is no learning algorithm that performs asymptotically better than a no-swap-regret learning algorithm against every possible optimizer. Notably, some classic no-regret learning algorithms like Follow-The-Regularized-Leader are \emph{not} Pareto-optimal in this sense.

\item \textbf{(Correlated equilibria)} When both players in the game run no-swap-regret learning algorithms, the time-averaged CSP converges to the set of correlated equilibria of the game.
\end{enumerate}

We will table the discussion of equilibria until Section \ref{sec:equilibria} -- it is complicated by the fact that it also is not exactly clear what the exact definition of correlated equilibria should be for polytope games. Instead, we will focus on the first three points for now, which all have clear utility-theoretic interpretations that we can extend to general polytope games. To more formally define them, it is useful to introduce the concept of menus.

Given a horizon-dependent learning algorithm $\cA^{T}$, we define the \emph{menu} $\cM(\cA^{T})$ of this algorithm to be the convex hull of all CSPs that an optimizer can implement against $\cA^{T}$. That is, $\cM(\cA^{T}) \subseteq \learnset \otimes \optset$ is the convex hull of all points of the form $\frac{1}{T}\sum_{t=1}^{T} x_t \otimes y_t$ where $x_t = \cA_t^{T}(y_1, y_2, \dots, y_{t-1})$, and $y_1, y_2, \dots, y_T$ is an arbitrary sequence of optimizer actions in $\optset$. For a general learning algorithm $\cA$, we define the \emph{asymptotic menu} $\cM(\cA)$ to be the limit of the sequence of menus $\cM(\cA^{1}), \cM(\cA^{2}), \dots$ in the Hausdorff norm\footnote{Throughout this paper, for simplicity, we will assume that this sequence of convex sets always converges. In general, one can always take some subsequence of this sequence that converges -- see Appendix D of \cite{paretooptimal} for details.}.

There is a simple test (coming from Blackwell approachability) for whether a convex set $\cM$ is the asymptotic menu of some learning algorithm: it suffices that $\cM$ contain some CSP of the form $x \otimes y$ for each $y \in \optset$. 
\begin{theorem}\label{thm:menu_char}
    A convex set $\cM$ is the asymptotic menu of some learning algorithm iff $\cM$ contains some CSP of the form $x \otimes y$ for each $y \in \optset$. 
\end{theorem}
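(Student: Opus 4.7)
The plan is to prove both directions separately. For the forward (necessity) direction, I would fix any $y \in \optset$ and consider the constant optimizer strategy that plays $y$ every round. The induced CSP at horizon $T$ simplifies to $\bar{x}_T \otimes y$, where $\bar{x}_T = \frac{1}{T}\sum_{t=1}^T x_t \in \learnset$ is the learner's average play. Since $\learnset$ is compact, some subsequence $\bar{x}_{T_k}$ converges to some $x \in \learnset$, and the corresponding CSPs $\bar{x}_{T_k} \otimes y$ (each lying in $\cM(\cA^{T_k})$) converge to $x \otimes y$, which therefore belongs to the Hausdorff limit $\cM$.

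For the backward (sufficiency) direction, I would translate the hypothesis into the Blackwell halfspace condition and then invoke approachability. Specifically, for any halfspace $H = \{v : \langle w, v\rangle \leq c\}$ containing $\cM$, the hypothesis yields that for each $y \in \optset$ there exists $x \in \learnset$ with $\langle w, x \otimes y\rangle \leq c$, i.e., $\max_{y}\min_{x}\langle w, x \otimes y\rangle \leq c$. Sion's minimax theorem applied to the bilinear function $(x,y) \mapsto \langle w, x \otimes y\rangle$ on the convex compact sets $\learnset \times \optset$ swaps the min and the max, producing a single $x^{*} \in \learnset$ such that $x^{*} \otimes y \in H$ for every $y \in \optset$. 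This is precisely the Blackwell B-set condition for the vector-valued payoff $r(x, y) = x \otimes y$, so Blackwell's approachability theorem (combined with the approachability-to-regret reduction of \cite{abernethy2011blackwell}) yields an explicit algorithm $\cA$ whose realized CSP $\csp_T$ satisfies $d(\csp_T, \cM) \to 0$ regardless of optimizer play, giving $\cM(\cA) \subseteq \cM$.

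The main obstacle is establishing the reverse inclusion $\cM \subseteq \cM(\cA)$, since a generic approachability algorithm might produce an asymptotic menu that is a strict convex subset of $\cM$ (e.g., containing only the one canonical response $x(y) \otimes y$ per direction $y$, rather than all of $\cM$). To close this gap I would design $\cA$ with enough structure that the optimizer can steer the realized CSP to any target $\csp^{*} \in \cM$ by adaptively sequencing their $y_t$'s; the calibrated approachability framework described in Appendix~D of \cite{paretooptimal} provides precisely this extra flexibility. Since the menu at each horizon is the convex hull of achievable CSPs and the asymptotic menu is its Hausdorff limit, it suffices to exhibit, for every extremal point of $\cM$, an optimizer strategy whose induced CSP converges to it; by convexity and closure this then yields $\cM(\cA) \supseteq \cM$, completing the proof.
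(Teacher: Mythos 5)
Your necessity direction and your Blackwell step match the paper's route exactly (the paper defers to Theorem~3.3 of \cite{paretooptimal} and sketches precisely this: constant play $y_t = y$ gives the forward direction, and response-satisfiability of the payoff $(x,y) \mapsto x \otimes y$ with target $\cM$ gives, via approachability, that \emph{some subset} $\cM' \subseteq \cM$ is a valid asymptotic menu). Up to that point your argument is sound, and your use of Sion's minimax theorem to verify the halfspace condition is a fine way to phrase response-satisfiability.

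The genuine gap is in the reverse inclusion $\cM \subseteq \cM(\cA)$, which you correctly identify as the crux but do not actually resolve. Saying you would ``design $\cA$ with enough structure that the optimizer can steer the realized CSP to any target $\csp^{*} \in \cM$'' is the statement of what must be proved, not a construction, and the citation you lean on is off: Appendix~D of \cite{paretooptimal} concerns taking convergent subsequences of the menus $\cM(\cA^{T})$ (the issue flagged in the paper's footnote when defining asymptotic menus), not a ``calibrated approachability'' device for steering. The concrete mechanism the paper relies on is the menu-extension lemma (Lemma~3.5 of \cite{paretooptimal}): starting from the approachability algorithm whose menu is $\cM' \subseteq \cM$, one modifies the learner so that designated optimizer behavior is interpreted as a request for a specific CSP $\csp^{*} \in \cM \setminus \cM'$, after which the learner plays the $\learnset$-components of a product decomposition of $\csp^{*}$ in the prescribed proportions (reverting to the approachability strategy upon any deviation). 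This is what reconciles the two conflicting requirements you face -- the same algorithm must prevent the optimizer from realizing anything outside $\cM$ while simultaneously permitting every extreme point of $\cM$ -- and a generic approachability algorithm for target $\cM$ satisfies the first but not the second. Without this (or an equivalent) construction, your proof establishes only the weaker statement that some subset of $\cM$ is an asymptotic menu, which is in fact all the paper itself needs downstream, but is not the full ``if'' direction of the theorem.
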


Given a polytope game $G$ (with a specific optimizer payoff $u_O$), we define the \emph{Stackelberg value} $\Stack(G, u_O)$ to be the optimal payoff the optimizer can receive if they commit to playing a fixed strategy $y \in \optset$ and the learner best responds (breaking ties in favor of the optimizer). That is,

\begin{equation*}
\Stack(G, u_O) = \max_{y \in \optset} \max_{x \in \BR_{L}(y)} u_O(x, y).
\end{equation*}

\noindent
We make the dependence on $u_O$ explicit here because we will often want to consider the effect of changing $u_O$ while keeping all other parameters of the game ($\learnset$, $\optset$, and $u_L$) the same -- this captures the strategic problem of facing an optimizer with unknown rewards. 

Against any no-(external)-regret learning algorithm $\cA$, an optimizer can asymptotically achieve $\Stack(G, u_O)$ utility per round by simply playing their optimal Stackelberg action every round. Intuitively, an algorithm of the learner is non-manipulable if the optimizer cannot significantly increase their utility beyond this Stackelberg value by playing a strategy that changes over time. 

It is convenient to phrase this concept of non-manipulability in the language of menus. For any menu $\cM$, define $V_O(\cM, u_O) = \max_{\csp \in \cM} u_O(\csp)$ to be the maximum optimizer utility of any CSP $\csp$ in $\cM$. If $\cM$ is the menu of an algorithm $\cA$, note that this is just the maximum utility an optimizer can achieve by playing against algorithm $\cA$. We therefore say the menu $\cM$ is \emph{non-manipulable} if, for any optimizer payoff $u_O$,

\begin{equation*}
V_O(\cM, u_O) \leq \Stack(G, u_O).
\end{equation*}

The first point can be now rephrased as follows\footnote{The remaining theorems in this section all follow as consequences of results in \cite{deng2019strategizing}, \cite{MMSSbayesian}, and \cite{paretooptimal}. Since they are also special cases of theorems we prove later in this paper for polytope games (Theorems \ref{thm:poly_nonmanip}, \ref{thm:poly_minimal}, and \ref{thm:poly_pareto}), we omit their proofs.}.

\begin{theorem}\label{thm:nfg_nonmanip}
Fix a normal-form game $G$ (i.e., where $\learnset = \Delta_{m}$ and $\optset = \Delta_{n}$). Let $\cA$ be a no-swap-regret algorithm for $G$. Then the asymptotic menu $\cM(\cA)$ is non-manipulable. Conversely, if the asymptotic menu $\cM(\cA)$ of an algorithm $\cA$ is non-manipulable, then the algorithm $\cA$ must additionally be a no-swap-regret algorithm. 
\end{theorem}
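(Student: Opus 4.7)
The plan is to prove both directions through a common menu-theoretic characterization. Let $\cK \subseteq \learnset \otimes \optset$ denote the convex hull of all \emph{best-response product profiles},
\[
\cK = \conv\bigl\{e \otimes y : y \in \optset,\ e \in \BR_L(y) \cap \learnvert\bigr\}.
\]
I will show that both non-manipulability of $\cM(\cA)$ and the no-swap-regret property of $\cA$ are equivalent to the single geometric containment $\cM(\cA) \subseteq \cK$; the theorem is then immediate.

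For the equivalence of non-manipulability and $\cM(\cA) \subseteq \cK$, observe that for any bilinear $u_O$, $\Stack(G, u_O) = \max_{\csp \in \cK} u_O(\csp)$ (the Stackelberg optimum can be taken over pure learner best responses, and the maximum of a linear functional over a set equals its maximum over the convex hull). Hence non-manipulability $V_O(\cM(\cA), u_O) \leq \Stack(G, u_O)$ for all $u_O$ says that every linear functional attains no larger value on $\cM(\cA)$ than on $\cK$, which by the supporting-hyperplane characterization of convex containment is equivalent to $\cM(\cA) \subseteq \cK$ (using that $\cK$ is a closed convex set---a finite union of convex slices indexed by the vertices of $\learnset$, reconvexified).

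For the equivalence of no-swap-regret and $\cM(\cA) \subseteq \cK$, I decompose any CSP $\csp \in \cM(\cA)$ realized by a $T$-round transcript along the learner's pure actions: $\csp = \sum_{i=1}^m e_i \otimes z_i$ with $z_i \in \Rset^n_{\geq 0}$ the unnormalized empirical optimizer marginal on rounds where the learner places mass on vertex $e_i$, $p_i = \|z_i\|_1$, and $y_i = z_i/p_i$. A direct bilinearity computation gives that the gain from the swap $i \mapsto j$ equals $T\bigl(u_L(e_j, z_i) - u_L(e_i, z_i)\bigr)$. Taking a max over swaps, $o(T)$ swap regret forces $e_i \in \BR_L(y_i)$ up to vanishing slack for every $i$ with $p_i > 0$, hence $\csp \in \cK$ asymptotically. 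Conversely, if $\cA$ has $\Omega(T)$ swap regret then the same identity exhibits some $i$ with $e_i$ bounded away from $\BR_L(y_i)$, placing $\csp_T$ at positive distance from $\cK$ and hence $\cM(\cA) \not\subseteq \cK$.

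The main obstacle is making the converse quantitative: one must verify that an asymptotic-menu point outside $\cK$ translates via hyperplane separation into a specific bilinear $u_O$ witnessing manipulability, and that $\Omega(T)$ swap regret indeed corresponds to $\Omega(1)$ distance from $\cK$ rather than to a sequence of CSPs drifting toward $\cK$. Both facts follow from compactness of $\cK$ and continuity of the pure-action decomposition. The result is ultimately a direct specialization of the polytope-game argument in Theorem~\ref{thm:poly_nonmanip}, whose proof transfers verbatim to the normal-form case once one observes that $\learnvert$ is finite.
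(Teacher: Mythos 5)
Your proposal is correct and follows essentially the same route as the paper: your set $\cK$ is exactly the no-swap-regret menu of Lemma~\ref{lem:menu_char} (specialized to $\learnset=\Delta_m$), and the two equivalences you prove --- $\Stack(G,u_O)=\max_{\csp\in\cK}u_O(\csp)$ plus hyperplane separation for non-manipulability, and the pure-action decomposition of the CSP for the regret side --- are the same ingredients the paper uses in Theorem~\ref{thm:poly_nonmanip}, of which Theorem~\ref{thm:nfg_nonmanip} is stated as a special case. The only addition is that you make explicit the (correct) observation that in normal-form games swap regret is a continuous function of the CSP vanishing exactly on $\cK$, a step the paper leaves implicit via its definition of profile swap regret.
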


We can also provide a version of Theorem~\ref{thm:nfg_nonmanip} quantifying the degree of manipulability of a finite-horizon algorithm. To this end, we define a menu $\cM$ to be \textit{$\alpha$-non-manipulable} if, for any payoff $u_O$, $V_O(\cM, u_O) \leq \Stack(G, u_O) + \alpha$.

% \noindent
% We define a menu $\cM$ to be \textit{$\alpha$-manipulable} if it is not $\alpha$-non-manipulable.  

Interestingly, the relevant swap-theoretic quantity for tightly characterizing the degree of manipulability of an algorithm is not swap regret itself\footnote{To see why, note that the swap regret of an algorithm scales with the learner's utility function $u_L$, but the degree of manipulability is independent of the scale of $u_L$.}, but the \emph{swap regret distance}, which captures the maximum distance between the menu of this algorithm and the no-swap-regret menu. Formally, for any CSP $\csp$, we define $\SwapDist(\csp)$ to equal the maximum $\ell_2$-distance $\mathrm{dist}(\csp, \cM_{NSR})$ from $\csp$ to the no-swap-regret menu $\cM_{NSR}$. Then, for any algorithm $\cA$, we define $\SwapDist(\cA) = \SwapDist(\cM(\cA)) = \max_{\csp \in \cM(\cA)} \SwapDist(\csp)$. 

\begin{theorem}\label{thm:nfg_nonmanip_quant}
Fix a normal-form game $G$ (i.e., where $\learnset = \Delta_{m}$ and $\optset = \Delta_{n}$). If $\cA$ is a learning algorithm for $G$ with the guarantee that $\SwapDist(\cA^{T}) \leq R(T)$, then $\cM(\cA^{T})$ is $\sqrt{mn} \cdot (R(T)/T)$-non-manipulable. Conversely, if $\cM(\cA^{T})$ is $(R(T)/T)$-non-manipulable, then the algorithm $\cA$ must satisfy $\SwapDist(\cA^{T}) \leq R(T)$. 
\end{theorem}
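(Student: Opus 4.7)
The plan is to prove both directions via geometric arguments leveraging Theorem~\ref{thm:nfg_nonmanip} together with the key identity $V_O(\cM_{NSR}, u_O) = \Stack(G, u_O)$ for every optimizer payoff $u_O$ with $\|u_O\|_\infty \leq 1$. The inequality $V_O(\cM_{NSR}, u_O) \leq \Stack(G, u_O)$ is just the non-manipulability of $\cM_{NSR}$ from Theorem~\ref{thm:nfg_nonmanip}. The reverse inequality holds because the Stackelberg CSP $x^* \otimes y^*$, where $y^*$ is an optimal commitment and $x^* \in \BR_L(y^*)$ breaks ties in favor of the optimizer, lies in $\cM_{NSR}$: if the learner plays $x^*$ and the optimizer plays $y^*$ in every round, no swap function improves the learner's utility, so the transcript has zero swap regret.

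For the forward direction, fix any $\csp \in \cM(\cA^T)$. By assumption there is some $\csp' \in \cM_{NSR}$ with $\|\csp - \csp'\|_2 \leq R(T)/T$. Identifying any payoff $u_O$ with its corresponding vector in $\Rset^{mn}$, I bound
\begin{equation*}
u_O(\csp) = u_O(\csp') + \langle u_O,\, \csp - \csp'\rangle \leq \Stack(G, u_O) + \|u_O\|_2 \cdot \|\csp - \csp'\|_2,
\end{equation*}
using the key identity on the first summand and Cauchy--Schwarz on the second. Since $\|u_O\|_2 \leq \sqrt{mn}\cdot\|u_O\|_\infty \leq \sqrt{mn}$, maximizing over $\csp \in \cM(\cA^T)$ establishes $\sqrt{mn}\cdot R(T)/T$-non-manipulability.

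For the converse, I fix any $\csp \in \cM(\cA^T)$, let $\csp_0$ be its Euclidean projection onto $\cM_{NSR}$, and set $D = \|\csp - \csp_0\|_2$; the goal is to show $D \leq R(T)/T$. The projection characterization of convex sets gives a unit vector $v = (\csp - \csp_0)/D$ with $\langle v, \csp'\rangle \leq \langle v, \csp_0\rangle$ for every $\csp' \in \cM_{NSR}$. Taking $u_O = v$ (a valid adversarial payoff since $\|u_O\|_\infty \leq \|u_O\|_2 = 1$), the projection inequality together with the key identity yields $u_O(\csp_0) = V_O(\cM_{NSR}, u_O) = \Stack(G, u_O)$, and hence $u_O(\csp) = \Stack(G, u_O) + D$. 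Applying the $(R(T)/T)$-non-manipulability assumption then gives $u_O(\csp) \leq V_O(\cM(\cA^T), u_O) \leq \Stack(G, u_O) + R(T)/T$, so $D \leq R(T)/T$. The main subtlety is in the converse: constructing the adversarial payoff from a separating hyperplane requires care about its norm, since $\alpha$-non-manipulability only constrains payoffs with $\|u_O\|_\infty \leq 1$. The $\ell_2$-normalization automatically satisfies this and is also responsible for the asymmetric $\sqrt{mn}$ factor: the forward direction pays an $\ell_2$-versus-$\ell_\infty$ conversion cost in bounding $\|u_O\|_2$, whereas the converse does not.
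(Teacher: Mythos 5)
Your proof is correct and follows essentially the same route as the paper, which omits a direct proof of Theorem~\ref{thm:nfg_nonmanip_quant} and instead proves the polytope generalization (Theorem~\ref{thm:poly_nonmanip}) via the identity $\Stack(G,u_O)=V_O(\cM_{NSR},u_O)$, Cauchy--Schwarz with the $\ell_\infty$-to-$\ell_2$ norm conversion for the forward direction, and a unit-norm separating hyperplane for the converse. Your only deviations are cosmetic: you derive the key identity from Theorem~\ref{thm:nfg_nonmanip} plus the observation that the Stackelberg profile has zero swap regret (rather than citing the menu characterization of Lemma~\ref{lem:menu_char}), and you instantiate the separating hyperplane explicitly via Euclidean projection onto $\cM_{NSR}$.
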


% Moreover, 

Moving onto the second point, we say that a menu $\cM$ is \emph{minimal} if: i. it is the asymptotic menu of some learning algorithm $\cA$, and ii. there is no strict sub-menu $\cM' \subset \cM$ which is the asymptotic menu of a learning algorithm. We can now rephrase the second point as follows. %It is worth mentioning here that there is a simple test (coming from Blackwell approachability) for whether a menu $\cM$ is the asymptotic menu of some learning algorithm: it suffices that $\cM$ contain some CSP of the form $x \otimes y$ for each $x \in \optset$. 

\begin{theorem}\label{thm:nfg_minimal}
Fix a normal-form game $G$ (i.e., where $\learnset = \Delta_{m}$ and $\optset = \Delta_{n}$). All no-swap-regret algorithms $\cA$ for $G$ share the same asymptotic menu $\cM(\cA) = \cM_{NSR}$. Moreover, if $\cA$ is a no-regret algorithm, then $\cM_{NSR} \subseteq \cM(\cA)$.
\end{theorem}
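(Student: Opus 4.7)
The plan is to define $\cM_{NSR} \subseteq \learnset \otimes \optset$ to be the set of CSPs $\csp$ with $\Swap(\csp) = 0$ and then show (i) that $\cM_{NSR}$ is itself the asymptotic menu of some no-swap-regret algorithm, (ii) that every no-swap-regret menu is contained in $\cM_{NSR}$, and (iii) that every no-regret menu contains $\cM_{NSR}$. Together these yield both statements of the theorem.

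The first step is to establish the structural lemma that $\cM_{NSR}$ is a closed convex set equal to $\conv\{e_i \otimes y : y \in \optset,\; i \in [m],\; e_i \in \BR_L(y)\}$. Convexity follows because $\Swap(\csp)$ is a maximum of linear functions of $\csp$ (via \eqref{eq:swap-reg}). The inclusion $\supseteq$ is immediate since each $e_i \otimes y$ with $e_i \in \BR_L(y)$ has zero swap regret. For the reverse inclusion, given $\csp \in \cM_{NSR}$, view it as a joint distribution $p$ on $[m] \times [n]$ and decompose $\csp = \sum_i q_i (e_i \otimes y_i)$ where $q_i = \sum_j p_{ij}$ and $y_i$ is the conditional on $i$; the condition $\Swap(\csp) = 0$ is exactly the statement that $e_i \in \BR_L(y_i)$ whenever $q_i > 0$.

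Next I would use Theorem~\ref{thm:menu_char} to show $\cM_{NSR}$ is itself realizable. For each $y \in \optset$, pick $x \in \BR_L(y)$; then $x \otimes y \in \cM_{NSR}$, so the criterion of Theorem~\ref{thm:menu_char} is met and there exists a learning algorithm $\cA^{NSR}$ with $\cM(\cA^{NSR}) = \cM_{NSR}$. Every CSP in this menu has zero swap regret, so $\cA^{NSR}$ is a no-swap-regret algorithm. The reverse containment -- that $\cM(\cA) \subseteq \cM_{NSR}$ for any no-swap-regret $\cA$ -- is immediate from the definition of no-swap-regret (every achievable CSP has asymptotically vanishing swap regret, and $\cM(\cA)$ is closed).

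The main obstacle is the ``moreover'' part: showing $\cM_{NSR} \subseteq \cM(\cA)$ for an arbitrary no-regret algorithm $\cA$. By the structural lemma and convexity of $\cM(\cA)$, it suffices to show that each extremal generator $e_i \otimes y$ (with $e_i \in \BR_L(y)$) lies in $\cM(\cA)$. Assuming non-degeneracy, I would produce a sequence $y^{(k)} \to y$ of optimizer actions for which $i$ is the \emph{unique} best response (such perturbations exist because $e_i$, being a vertex not weakly dominated, can be uniquely incentivized by some $y'$, and we can take convex combinations of $y$ and $y'$). Against the no-regret algorithm $\cA$, any round with substantial mass on actions outside $\BR_L(y^{(k)}) = \{i\}$ would contribute linear external regret as $T \to \infty$, so the learner's average action must converge to $e_i$, giving $e_i \otimes y^{(k)} \in \cM(\cA)$. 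Taking $k \to \infty$ and using that $\cM(\cA)$ is closed (as a Hausdorff limit of convex compact sets) yields $e_i \otimes y \in \cM(\cA)$, completing the proof. The delicate point is the perturbation argument -- without non-degeneracy one must restrict attention to extreme points of $\BR_L(y)$ and argue via limits, which is the step most analogous to the polytope-game arguments developed later in the paper.
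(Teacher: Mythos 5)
Your proposal is correct and follows essentially the same route as the paper, which proves this statement as the normal-form special case of Theorem~\ref{thm:poly_minimal}: characterize $\cM_{NSR}$ as the convex hull of best-response profiles (the normal-form version of Lemma~\ref{lem:menu_char}), get $\cM(\cA) \subseteq \cM_{NSR}$ directly from the definition of no-swap-regret, and prove $\cM_{NSR} \subseteq \cM(\cA)$ for no-regret $\cA$ by handling unique best responses via Theorem~\ref{thm:menu_char} and non-unique ones via the non-degeneracy perturbation $y(\alpha) = \alpha y^* + (1-\alpha)\hat{y}$ together with closedness of the asymptotic menu. Your "average action must converge to the unique best response" step is just a rephrasing of the paper's contradiction argument, so there is no substantive difference.
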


Finally, the result on non-manipulability (Theorem~\ref{thm:nfg_nonmanip}) concerns the utility the \textit{optimizer} can obtain when playing against a no-swap-regret learning algorithm. But we may also wonder when it is in the interest of the learner to play such a learning algorithm. To this end, define 

$$V_L(\cM, u_O) = \max \left\{u_L(\csp) \mid \csp \in \cM, u_O(\csp) = V_O(\cM, u_O)\right\}$$ 

\noindent
to be the utility the learner receives when an optimizer chooses their favorite CSP in $\cM$ (breaking ties in favor of the learner).

\begin{theorem}\label{thm:nfg_pareto}
Fix a normal-form game $G$ (i.e., where $\learnset = \Delta_{m}$ and $\optset = \Delta_{n}$). Let $\cA$ be a learning algorithm for $G$ that incurs $\Omega(T)$ swap regret in the worst-case. Then there exists an optimizer utility $u_O$ such that $V_{L}(\cM_{NSR}, u_O) > V_{L}(\cM(\cA), u_O)$; i.e., there exists an optimizer against whom it is strictly better to play any no-swap-regret algorithm than $\cA$.

% \begin{enumerate}
%     \item There exists an optimizer utility $u_O$ such that $V_L(\cM_{NSR}, u_O) > V_L(\cM(\cA), u_O)$.
%     \item For every $u_O$, $V_L(\cM_{NSR}, u_O) = V_L(\cM(\cA), u_O)$ (i.e., $\cM_{NSR}$ and $\cM(\cA)$ are indistinguishable from the point of view of the learner). 
% \end{enumerate}
\end{theorem}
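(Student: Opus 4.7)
The plan is to exploit the fact that any CSP outside $\cM_{NSR}$ must violate one of the learner's swap-regret constraints, which gives us a natural construction for an adversarial optimizer utility. First I would reduce to the case where $\cA$ is no-external-regret: if $\cA$ has $\Omega(T)$ external regret, an optimizer utility rewarding a single fixed action suffices, since any NSR algorithm (being automatically no-regret) matches the Stackelberg learner utility while $\cA$ falls strictly short. Under the no-external-regret assumption, Theorem~\ref{thm:nfg_minimal} gives $\cM_{NSR} \subseteq \cM(\cA)$, and Theorem~\ref{thm:nfg_nonmanip_quant} upgrades this to strict inclusion via the $\Omega(T)$ swap regret hypothesis, so we can pick $\csp^* \in \cM(\cA) \setminus \cM_{NSR}$.

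Since $\csp^* \notin \cM_{NSR}$, at least one swap-regret constraint is violated: there exists a swap function $\pi: [m] \to [m]$ with $r := u_L(\pi \cdot \csp^*) - u_L(\csp^*) > 0$. I would define the bilinear function $\tilde{u}_L^\pi(x, y) = u_L(\pi(x), y) - u_L(x, y)$, which satisfies $\tilde{u}_L^\pi(\csp^*) = r$ and $\tilde{u}_L^\pi(\csp) \leq 0$ for every $\csp \in \cM_{NSR}$. I would then take the adversarial optimizer utility to be $u_O = \tilde{u}_L^\pi + \eps \cdot u_L$ for a carefully chosen small $\eps > 0$. The dominant $\tilde{u}_L^\pi$ term forces the optimizer's maximizer over $\cM(\cA)$ to lie outside $\cM_{NSR}$, while the $\eps u_L$ tie-breaker ensures that the optimizer's maximizer over $\cM_{NSR}$ is a CSP on the face $\{\tilde{u}_L^\pi = 0\}$ with maximal learner utility.

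The $V_L$ comparison then proceeds as follows. For small $\eps$, $V_L(\cM_{NSR}, u_O)$ is approximately $\max\{u_L(\csp) : \csp \in \cM_{NSR},\ \tilde{u}_L^\pi(\csp) = 0\}$, and this can be lower-bounded by iterating swap-improvements starting from $\csp^*$: each swap strictly increases $u_L$ while preserving the optimizer's marginal, so the process terminates at some $\csp^{\mathrm{fix}} \in \cM_{NSR}$ with $u_L(\csp^{\mathrm{fix}}) \geq u_L(\csp^*) + r$. On the other hand, any $\csp'$ attaining $V_L(\cM(\cA), u_O)$ must satisfy $\tilde{u}_L^\pi(\csp') \geq r$ and hence $u_L(\csp') \leq u_L(\pi \cdot \csp') - r$; choosing $\csp^*$ as an extremal witness (for instance, a vertex of $\cM(\cA) \setminus \cM_{NSR}$ with minimum $u_L$) then forces $u_L(\csp') \leq u_L(\csp^*) < u_L(\csp^{\mathrm{fix}}) \leq V_L(\cM_{NSR}, u_O)$.

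The main obstacle is making the iterated-swap-improvement argument precise --- in particular, verifying that swap-improvements starting from $\csp^*$ indeed converge to some $\csp^{\mathrm{fix}} \in \cM_{NSR}$ with the claimed learner-utility gain, and simultaneously tuning $\eps$ so that the $u_L$ tie-breaker resolves ties in the learner's favor without disturbing the separation induced by $\tilde{u}_L^\pi$. This mirrors the Pareto-optimality arguments of \cite{paretooptimal} in the normal-form setting, where iterated swap improvements are precisely the mechanism driving arbitrary CSPs into $\cM_{NSR}$ with monotone learner-utility progression.
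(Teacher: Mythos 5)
Your overall strategy --- pick $\csp^{*} \in \cM(\cA)\setminus\cM_{NSR}$, take a violated swap $\pi$, and set $u_O = \tilde{u}_L^{\pi} + \eps u_L$ --- has two genuine gaps, located exactly at the points you yourself flag as ``the main obstacle.'' First, the link $u_L(\csp^{\mathrm{fix}}) \leq V_L(\cM_{NSR}, u_O)$ does not follow. Membership $\csp^{\mathrm{fix}} \in \cM_{NSR}$ only gives $\tilde{u}_L^{\pi}(\csp^{\mathrm{fix}}) \leq 0$; but $V_L$ is evaluated only on the \emph{exact} maximizers of $u_O$ over $\cM_{NSR}$, and for small $\eps$ those maximizers lie (near) the face $\argmax_{\csp \in \cM_{NSR}} \tilde{u}_L^{\pi}(\csp)$. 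That face need not contain $\csp^{\mathrm{fix}}$, need not be $\{\tilde{u}_L^{\pi} = 0\}$ (the maximum of $\tilde{u}_L^{\pi}$ over $\cM_{NSR}$ can be strictly negative), and can consist of CSPs with low learner utility; so your approximation of $V_L(\cM_{NSR},u_O)$ and the last link of your chain are unsupported. Second, the bound $u_L(\csp') \leq u_L(\csp^{*})$ for the optimizer's choice $\csp'$ in $\cM(\cA)$ is also unsupported: near-optimality of $\csp'$ only yields $\tilde{u}_L^{\pi}(\csp') \geq r - O(\eps)$, hence $u_L(\csp') \leq u_L^{\max} - r + O(\eps)$, which does not compare to $u_L(\csp^{\mathrm{fix}})$; and choosing $\csp^{*}$ to have \emph{minimum} $u_L$ among witnesses makes the desired inequality harder, not easier. (Your preliminary reduction is also shaky: a $u_O$ rewarding one fixed optimizer action only pins down the optimizer's marginal, and an algorithm with $\Omega(T)$ worst-case external regret may still respond optimally to every constant sequence; but this step is not actually needed, since $\Omega(T)$ worst-case swap regret already puts some $\csp^{*}$ outside $\cM_{NSR}$ in the asymptotic menu.)

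For contrast, the paper does not attempt this direct construction. It forms $\overline{\cM} = \conv(\cM(\cA) \cup \cM^{+}_{NSR})$ so that $\overline{\cM}$ and $\cM_{NSR}$ share the same $u_L$-maximal face, invokes Lemma 4.4 of \cite{paretooptimal} (Lemma~\ref{lem:other_paper}) as a black box to obtain $u_O$ with $V_L(\cM_{NSR}, u_O) > V_L(\overline{\cM}, u_O)$, and then transfers the conclusion from $\overline{\cM}$ back to $\cM(\cA)$ by a short case analysis (this is the proof of Theorem~\ref{thm:poly_pareto}, of which the normal-form statement is a special case). The delicate choice of $u_O$ and the tie-breaking issues you defer are precisely what that cited lemma handles; a self-contained argument would essentially have to reprove it, and the single-swap-plus-$\eps u_L$ construction is not known to suffice as stated.
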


Theorem \ref{thm:nfg_pareto} can also be thought of as saying that the no-swap-regret menu is \emph{Pareto-optimal}: there is no other learning algorithm which is asymptotically at least as good as swap regret minimization against every single possible optimizer (while strictly better for at least one optimizer). Note that unlike the other two points, this is not a tight characterization of no-swap-regret algorithms -- there exist other Pareto-optimal learning algorithms that incur $\Omega(T)$ swap regret in the worst-case \cite{paretooptimal}. 

% \subsection{Convex Optimization}

% \jon{todo: add preliminaries for convex optimization (e.g. standard oracles)?}

\section{Profile Swap Regret}

\subsection{Defining Profile Swap Regret}

Like polytope swap regret, profile swap regret involves performing a convex decomposition of the sequence of play of the learner into elements that we can then compute the swap regret of and aggregate. However, unlike polytope swap regret (which decomposes the strategy of the learner in each round of the game), in profile swap regret we will do this decomposition on the average CSP of play. In particular, profile swap regret can be computed as a function of just the resulting CSP $\csp$ of play -- note that this is a property shared with linear swap regret, but that polytope swap regret and normal-form swap regret do not possess.

To define the \emph{profile swap regret} $\CorrSwap(\csp)$ of a CSP $\csp$, we perform the following steps\footnote{In Appendix~\ref{app:alternate-formulation}, we present an alternate formulation of profile swap regret in a way that is more directly comparable with the definition of polytope swap regret above.}:

\begin{enumerate}
    \item First, decompose $\csp$ into a convex combination of independent (product) strategy profiles

    \begin{equation}\label{eq:csp-decomp}
        \csp = \sum_{k=1}^{K} \lambda_{k} (x_{(k)} \otimes y_{(k)}).
    \end{equation}

    for some choice of $x_{(k)} \in \learnset$, $y_{(k)} \in \optset$, and $\lambda_k \geq 0$ with $\sum_{k} \lambda_k = 1$.  As with polytope swap regret, there are likely multiple ways to do this decomposition; we will eventually choose the decomposition that minimizes our eventual regret. Also note that we can freely choose the number of parts $K$ in this decomposition.

    \item For any $x \in \learnset$ and $y \in \optset$, let $\Reg(x, y) = \max_{x^* \in \learnset} u_L(x^{*}, y) - u_L(x, y)$ (this can be thought of as the ``instantaneous'' regret from playing $x$ in response to $y$). We define

    \begin{equation}
    \label{eq:profile_swap_regret}
    \CorrSwap(\csp) = T \cdot \left(\min_{x_{(k)}, y_{(k)}, \lambda_k} \sum_{k=1}^{K} \lambda_{k}\Reg(x_{(k)}, y_{(k)})\right),
    \end{equation}

    \noindent
    where, as mentioned above, this minimum is over all valid convex decompositions of $\csp$ into the form in \eqref{eq:csp-decomp}.
\end{enumerate}

We will eventually demonstrate that profile swap regret is the analogue of swap regret in polytope games that preserves the game-theoretic properties mentioned in Section~\ref{sec:manipulability}. Before we do so, we first remark that profile swap regret is bounded between linear swap regret and polytope swap regret.%Before we do so, we present a couple of clarifying remarks on the above definition. (UNCOMMENT / MOVE BACK for arxiv)

\begin{theorem}\label{thm:comp-prof-swap}
Let $\bx = (x_1, x_2, \dots, x_T)$ and $\by = (y_1, y_2, \dots, y_T)$ be a transcript of play for a given polytope game. Then $\LinSwap(\bx, \by) \leq \CorrSwap(\bx, \by) \leq \PolySwap(\bx, \by)$.

Moreover, both of these inequalities are asymptotically strict in the following sense: for each inequality, there exists a family of transcripts for which the larger swap regret grows as $\Omega(T)$ but the smaller swap regret is zero. 
\end{theorem}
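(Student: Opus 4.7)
The plan is to establish the two inequalities by directly constructing appropriate decompositions and then separately exhibit families of transcripts witnessing strict gaps.

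For the upper bound $\CorrSwap(\bx, \by) \leq \PolySwap(\bx, \by)$, I will start from an arbitrary per-round vertex decomposition $\bxV = (\xV_1, \dots, \xV_T)$ feasible in the polytope-swap-regret objective and build a CSP-level decomposition from it. Let $W_v = \sum_{t} \xV_{t,v}$ and, for each $v$ with $W_v > 0$, define $y_v = W_v^{-1}\sum_{t}\xV_{t,v} y_t \in \optset$; bilinearity immediately gives
\begin{equation*}
\csp = \frac{1}{T}\sum_{t}x_t \otimes y_t = \frac{1}{T}\sum_{t}\sum_{v}\xV_{t,v}(v \otimes y_t) = \sum_{v}\frac{W_v}{T}(v \otimes y_v),
\end{equation*}
which is a valid product decomposition of $\csp$. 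Evaluating the profile objective on this decomposition and expanding $\Reg(v, y_v)$ via bilinearity of $u_L$ in the second argument yields $\sum_v \max_{x^* \in \learnset}\sum_{t} \xV_{t,v}[u_L(x^*, y_t) - u_L(v, y_t)]$. Since the inner quantity is linear in $x^*$, the maximum over $\learnset$ coincides with the maximum over $\learnvert$, which is precisely the value of the polytope-swap-regret objective at $\bxV$ (with the swap $\pi(v)$ chosen vertex-by-vertex). Taking the minimum over $\bxV$ on both sides yields the inequality.

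For the lower bound $\LinSwap(\bx, \by) \leq \CorrSwap(\bx, \by)$, I fix any product decomposition $\csp = \sum_{k} \lambda_k (x_{(k)}\otimes y_{(k)})$ and any $\psi \in \Psi$. By bilinearity of $u_L$ and linearity of $\psi$,
\begin{equation*}
u_L(\psi(\csp)) - u_L(\csp) = \sum_{k}\lambda_k [u_L(\psi(x_{(k)}), y_{(k)}) - u_L(x_{(k)}, y_{(k)})] \leq \sum_{k}\lambda_k \Reg(x_{(k)}, y_{(k)}),
\end{equation*}
using that $\psi(x_{(k)}) \in \learnset$ so its utility is bounded by the best response to $y_{(k)}$. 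Maximizing over $\psi$ on the left (which does not depend on the decomposition) and minimizing over decompositions on the right (which does not depend on $\psi$), then scaling by $T$, completes this direction.

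For the strict separations, I will construct explicit families of transcripts in low-dimensional Bayesian games with $\learnset = (\Delta_m)^c$ for small $m,c$. For the gap $\CorrSwap < \PolySwap$, I exploit that $\CorrSwap$ depends only on the CSP while $\PolySwap$ sees per-round structure: choosing a transcript whose CSP decomposes into product profiles each already at a best response gives $\CorrSwap = 0$, while arranging the per-round actions so that a common vertex appears against optimizer actions with conflicting best responses forces $\PolySwap = \Omega(T)$. For the gap $\LinSwap < \CorrSwap$, I construct a transcript whose CSP is not a product but for which no $\psi \in \Psi$ improves utility (e.g., by a symmetry argument in which any swap benefiting one product summand hurts another by the same amount), while every product decomposition contains a summand with strictly positive instantaneous regret. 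The main obstacle lies in this last construction: verifying $\LinSwap = 0$ requires ruling out \emph{all} affine linear maps $\psi$, which is feasible by working in low enough dimension that the extreme points of $\Psi$ can be enumerated directly, and then showing no such extremal $\psi$ yields a net improvement on the CSP.
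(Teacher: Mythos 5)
Your proofs of the two inequalities are correct and follow essentially the same route as the paper's. For $\CorrSwap \leq \PolySwap$, you aggregate a per-round vertex decomposition $\bxV$ into a CSP-level decomposition via $W_v = \sum_t \xV_{t,v}$ and $y_v = W_v^{-1}\sum_t \xV_{t,v} y_t$, then observe that the profile objective on this decomposition equals the polytope objective at $\bxV$; the paper compresses this same computation into a one-line rewrite of $\PolySwap$. For $\LinSwap \leq \CorrSwap$, your observation that any affine $\psi$ yields a per-summand regret bound, followed by separating the max over $\psi$ and the min over decompositions, is verbatim the paper's argument.

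The strict separations, however, are the place where the gap lies: you have a plan but no completed construction for either. Your sketch for $\CorrSwap < \PolySwap$ (a CSP admitting a best-response-aligned product decomposition while the \emph{per-round} play forces a common vertex against conflicting optimizer targets) is the right idea and matches the paper's explicit example over $\learnset = \optset = \Delta_2^2$, where the crucial identity $v_{11} + v_{22} = v_{12} + v_{21}$ lets the CSP be re-decomposed to have zero profile regret even though every per-round decomposition forces $v_{22}$ against $v_{11}$. You would need to actually exhibit such a game/transcript and verify the identity. The more serious concern is the $\LinSwap < \CorrSwap$ separation: you propose to verify $\LinSwap = 0$ directly by enumerating extreme points of $\Psi$, but as you note this requires ruling out \emph{all} affine self-maps of $\learnset$, and the paper explicitly flags this direct check as ``computationally somewhat messy.'' The paper instead avoids the verification entirely by an indirect argument: it reuses an existing transcript from Theorem 8 of \cite{MMSSbayesian} that is already known to satisfy $\LinSwap = 0$ and to be manipulable, and then invokes Theorem~\ref{thm:poly_nonmanip} (non-manipulability $\Leftrightarrow$ no-profile-swap-regret) to conclude $\CorrSwap(\bx, \by) > 0$ without ever computing it. If you want a self-contained path, either adopt this indirect non-manipulability argument (which requires Theorem~\ref{thm:poly_nonmanip} as a black box) or accept that the brute-force $\Psi$-enumeration will be substantially heavier than the rest of the proof.
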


Similar to the no-swap-regret menu, we will define the \emph{no-profile-swap-regret menu} $\cM_{NPSR}$ to be the set of CSPs $\csp$ satisfying $\CorrSwap(\csp) = 0$. Likewise, just as $\SwapDist$ is the relevant quantity in Theorem~\ref{thm:nfg_nonmanip_quant}, in many of our more quantitative results it will be more convenient to work with the following ``distance'' variant of $\CorrSwap$, where we define the \emph{profile swap distance} $\CorrDist(\csp) = \mathrm{dist}(\csp, \cM_{NPSR})$ to equal the minimal Euclidean distance from $\csp$ to the no-profile-swap-regret menu $\cM_{NPSR}$. 

% NOTE: commenting out this lemma for now, but we should still try to understand this
% \begin{lemma}
% If $\learnset$ and $\optset$ are both bounded and $\norm{u_L}_{1} \leq 1$, then for any CSP $\csp \in \learnset \otimes \optset$,

% $$C_L\CorrDist(\csp) \leq \CorrSwap(\csp) \leq C_H\CorrDist(\csp),$$
% \end{lemma}
% \begin{proof}
% \jon{what are the correct constant factors in this lemma? do we need this lemma?}
% \end{proof}

\subsection{Game-Theoretic Properties of Profile Swap Regret}

We will now show that the three game-theoretic properties possessed by no-swap-regret algorithms in normal-form games -- non-manipulability, minimality, and Pareto-optimality -- hold for no-profile-swap-regret algorithms in general polytope games. 

Our main tool to prove these results will be the same menu-based techniques used to prove Theorems \ref{thm:nfg_nonmanip}, \ref{thm:nfg_minimal}, and \ref{thm:nfg_pareto}, albeit extended from the standard normal-form setting to the setting of polytope games. Underlying most of these results will be the following characterization of the no-profile-swap-regret menu as the convex hull of all ``best-response CSPs'' for the learner. 

\begin{lemma}\label{lem:menu_char}
The no-profile-swap-regret menu $\cM_{NPSR}$ is the convex hull of all CSPs of the form $x \otimes y$ where $y \in \optset$ and $x \in \BR_{L}(y)$. In particular, $\Stack(G, u_O) = V_{O}(\cM_{NPSR}, u_{O})$.
\end{lemma}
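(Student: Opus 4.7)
The plan is to directly unpack the definition \eqref{eq:profile_swap_regret} of profile swap regret. The key observation is that for any pair $(x,y) \in \learnset \times \optset$, the instantaneous regret $\Reg(x,y) = \max_{x^* \in \learnset} u_L(x^*, y) - u_L(x,y)$ is always non-negative, with equality holding exactly when $x \in \BR_L(y)$. This immediately suggests proving the two containments between $\cM_{NPSR}$ and $\conv\{x \otimes y : y \in \optset,\, x \in \BR_L(y)\}$ by inspecting when the weighted regret sum in \eqref{eq:profile_swap_regret} can vanish.

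For the direction $\cM_{NPSR} \supseteq \conv\{x \otimes y : y \in \optset,\, x \in \BR_L(y)\}$, I would argue that if $\csp = \sum_k \lambda_k (x_{(k)} \otimes y_{(k)})$ with each $x_{(k)} \in \BR_L(y_{(k)})$, then substituting this decomposition into \eqref{eq:profile_swap_regret} as a feasible witness yields $\sum_k \lambda_k \Reg(x_{(k)}, y_{(k)}) = 0$; since the minimum cannot be negative, $\CorrSwap(\csp) = 0$. For the reverse containment, if $\csp \in \cM_{NPSR}$, then some decomposition $\csp = \sum_k \lambda_k (x_{(k)} \otimes y_{(k)})$ achieves $\sum_k \lambda_k \Reg(x_{(k)}, y_{(k)}) = 0$; since every summand is non-negative, every index $k$ with $\lambda_k > 0$ must satisfy $\Reg(x_{(k)}, y_{(k)}) = 0$, i.e., $x_{(k)} \in \BR_L(y_{(k)})$. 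Discarding the vanishing terms exhibits $\csp$ as a convex combination of best-response CSPs.

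For the Stackelberg equality I would exploit that $u_O$ is bilinear, and hence (via $u_O(x \otimes y) = u_O(x,y)$) extends to a linear functional on the ambient space $\learnset \otimes \optset$. A linear functional attains its maximum over a convex hull at an extreme point, so
\[
V_O(\cM_{NPSR}, u_O) \;=\; \max_{\csp \in \cM_{NPSR}} u_O(\csp) \;=\; \max_{y \in \optset}\,\max_{x \in \BR_L(y)} u_O(x,y) \;=\; \Stack(G, u_O),
\]
matching the Stackelberg definition (including the tie-breaking convention in favor of the optimizer).

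Honestly, there is no substantive obstacle: the lemma is essentially a direct unpacking of the definition of $\CorrSwap$, with the only mild subtleties being (i) that a feasible decomposition always exists since $\csp$ lies in $\learnset \otimes \optset$ and so is by definition a convex combination of rank-one tensors, and (ii) being careful to discard the $\lambda_k = 0$ terms before claiming the remaining pieces are best responses. The real value of the lemma is conceptual: it furnishes a clean geometric description of $\cM_{NPSR}$ as the convex hull of ``best-response product CSPs,'' independent of $u_L$, which is exactly what will let us port the menu-based arguments from the normal-form setting (Theorems \ref{thm:nfg_nonmanip}, \ref{thm:nfg_minimal}, \ref{thm:nfg_pareto}) to polytope games in the coming subsections.
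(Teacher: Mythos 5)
Your proposal is correct and follows essentially the same argument as the paper: both directions are obtained by directly unpacking the definition of $\CorrSwap$, using that $\Reg(x,y)\ge 0$ with equality iff $x\in\BR_L(y)$, exactly as in the paper's proof. Your explicit verification of the Stackelberg equality (which the paper states without further comment) is a natural consequence of the first part, since the bilinear $u_O$ attains its maximum over the convex hull on the generating best-response CSPs.
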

\begin{proof}
\sloppy{This follows almost directly from the definition of profile swap regret. Note that if $\CorrSwap(\csp) = 0$, then this implies that we can write it in the form $\csp = \sum \lambda_{k} (x'_k \otimes y'_k)$ where each pair $(x'_k, y'_k)$ satisfies $\Reg(x'_k, y'_k) = 0$, and therefore that $x'_k \in \BR_{L}(y'_k)$ (and so we have expressed $\csp$ as a convex combination of such points). Conversely, if $x \in \BR_{L}(y)$, then $\CorrSwap(x \otimes y) = 0$ (since $\Reg(x, y) = 0$).}
\end{proof}

\subsubsection{Non-manipulability}

We begin by establishing the analogue of Theorem~\ref{thm:nfg_nonmanip} for polytope games: that for learning algorithms in polytope games, the property of having no-profile-swap-regret is equivalent to the property of being non-manipulable by a dynamic optimizer. At the same time, we establish the analogue of Theorem~\ref{thm:nfg_nonmanip_quant}, providing quantitative bounds on the degree of manipulation in terms of profile swap distance (importantly, this is the quantity we will minimize in Section \ref{sec:algorithms}, when we turn our attention to designing no-profile-swap-regret learning algorithms). 

\begin{theorem}\label{thm:poly_nonmanip}
Fix a polytope game $G$. Let $\cA$ be a no-profile-swap-regret algorithm for $G$. Then the asymptotic menu $\cM(\cA)$ is non-manipulable. Conversely, if the asymptotic menu $\cM(\cA)$ of an algorithm $\cA$ is non-manipulable, then the algorithm $\cA$ must be a no-profile-swap-regret algorithm. 

Moreover, if $\cA$ has the guarantee that $\CorrDist(\cA^{T}) \leq R(T)$, then $\cM(\cA^{T})$ is $\sqrt{d_Ld_O} (R(T)/T)$-non-manipulable. Conversely, if $\cM(\cA^{T})$ is $R(T)/T$-non-manipulable, then $\CorrDist(\cA^{T}) \leq R(T)$.
\end{theorem}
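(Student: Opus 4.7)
The plan is to apply Lemma~\ref{lem:menu_char} to reduce the theorem to a purely geometric equivalence: $\cM(\cA^T)$ is (approximately) contained in the no-profile-swap-regret menu $\cM_{NPSR}$ if and only if it is (approximately) non-manipulable. The key identity provided by that lemma is $V_O(\cM_{NPSR}, u_O) = \Stack(G, u_O)$ for every bilinear utility $u_O$, so the Stackelberg value is exactly the support function of $\cM_{NPSR}$ in the direction $u_O$. Both directions will then follow from standard convex-analytic tools (an $\ell_2$-Lipschitz estimate for the forward direction and a separating-hyperplane argument for the converse), once the norm bookkeeping between bilinear utilities and their vector representations is handled correctly.

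For the forward direction, I would fix an arbitrary $\csp \in \cM(\cA^T)$ and an optimizer utility $u_O$ with $\|u_O\|_\infty \le 1$, then let $\csp^* = \Proj_{\cM_{NPSR}}(\csp)$, so $\|\csp - \csp^*\|_2 \le R(T)$. Since the vector representation $\hat{u}_O \in \Rset^{d_L d_O}$ has $\|\hat{u}_O\|_\infty \le 1$ and hence $\|\hat{u}_O\|_2 \le \sqrt{d_L d_O}$, a Cauchy--Schwarz estimate gives
\[
u_O(\csp) - u_O(\csp^*) = \langle \hat{u}_O, \csp - \csp^* \rangle \le \sqrt{d_L d_O}\, R(T).
\]
Combined with $u_O(\csp^*) \le V_O(\cM_{NPSR}, u_O) = \Stack(G, u_O)$ from Lemma~\ref{lem:menu_char}, this yields the claimed non-manipulability bound, and the asymptotic forward statement follows by sending $R(T) \to 0$.

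For the converse, I would begin with a $\csp \in \cM(\cA^T)$ having $\CorrDist(\csp) = \delta > 0$ and explicitly construct a separating bilinear utility. Setting $\csp^* = \Proj_{\cM_{NPSR}}(\csp)$ and $\hat{v} = (\csp - \csp^*)/\delta$, the first-order optimality condition for Euclidean projection onto the closed convex set $\cM_{NPSR}$ gives $\langle \hat{v}, \csp' - \csp^* \rangle \le 0$ for every $\csp' \in \cM_{NPSR}$, with $\langle \hat{v}, \csp - \csp^* \rangle = \delta$. Defining the bilinear utility $u_O(x,y) = \langle \hat{v}, x \otimes y\rangle$, the bound $\|u_O\|_\infty = \|\hat{v}\|_\infty \le \|\hat{v}\|_2 = 1$ certifies that $u_O$ is a valid optimizer payoff. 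Lemma~\ref{lem:menu_char} then gives $\Stack(G, u_O) = V_O(\cM_{NPSR}, u_O) = \langle \hat{v}, \csp^* \rangle$, so $V_O(\cM(\cA^T), u_O) \ge u_O(\csp) = \Stack(G, u_O) + \delta$. Contrapositively, any $(R(T)/T)$-non-manipulable menu must satisfy $\CorrDist(\cA^T) \le R(T)/T \le R(T)$, and the asymptotic converse follows by the same contrapositive.

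The main obstacle I anticipate is the norm bookkeeping in the converse: the constructed $u_O$ must satisfy $\|u_O\|_\infty \le 1$ to be a valid optimizer payoff, which is what forces $\ell_2$-normalization of the separating direction (leveraging $\|\cdot\|_\infty \le \|\cdot\|_2$ in $\Rset^{d_L d_O}$) rather than any other normalization that might seem more natural. Beyond that, the proof should be a fairly direct polytope-game lift of the menu-based technique from \cite{paretooptimal}, with Lemma~\ref{lem:menu_char} carrying the structural weight that in the normal-form setting was handled by the explicit vertex decomposition of the simplex.
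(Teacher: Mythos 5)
Your proposal takes essentially the same route as the paper: Lemma~\ref{lem:menu_char} turns $\Stack(G,u_O)$ into the support function of $\cM_{NPSR}$, Cauchy--Schwarz with $\norm{u_O}_2 \leq \sqrt{d_Ld_O}$ gives the forward quantitative bound, and a unit-$\ell_2$-norm separating direction (which you obtain via the projection optimality condition rather than citing the separating hyperplane theorem directly, and whose validity as a payoff you correctly certify via $\norm{\cdot}_\infty \leq \norm{\cdot}_2$) gives the converse. The one slip is bookkeeping: in this paper $\CorrDist(\cA^T)$ is a cumulative (regret-scaled) quantity, so $\CorrDist(\cA^T) \leq R(T)$ means each $\csp \in \cM(\cA^T)$ is within Euclidean distance $R(T)/T$ of $\cM_{NPSR}$ (this is what makes Theorem~\ref{thm:upper-approachability}'s bound $\CorrDist(\cA^T) \leq 2\sqrt{T}$ meaningful); as written, your forward direction assumes distance $R(T)$ and hence proves $\sqrt{d_Ld_O}\,R(T)$-non-manipulability rather than the stated $\sqrt{d_Ld_O}\,(R(T)/T)$, and your converse likewise mixes the two conventions in the step ``$\CorrDist(\cA^T) \leq R(T)/T \leq R(T)$.'' Restoring the factor of $T$ throughout makes your argument coincide with the paper's.
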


\begin{proof}
We will first show that the no-profile-swap-regret menu $\cM_{NPSR}$ is non-manipulable (therefore implying that any no-profile-swap-regret algorithm is non-manipulable). It suffices to show that, for any optimizer payoff $u_O$ and any CSP $\csp \in \cM_{NPSR}$, $u_O(\csp) \leq \Stack(G, u_O)$. But since $\Stack(G, u_O) = V_{O}(\cM_{NPSR}, u_O)$ (by Lemma~\ref{lem:menu_char}), this is immediately true. 

Conversely, if $\cA$ is not a no-profile-swap-regret algorithm, then $\cM(\cA)$ must contain a CSP $\csp \not\in \cM_{NPSR}$. By the separating hyperplane theorem, there must exist a linear payoff function $u_O$ such that $u_O(\csp) > 0$ but $u_O(\csp') \leq 0$ for all $\csp' \in \cM_{NPSR}$. But if $u_O(\csp') \leq 0$ for all $\csp' \in \cM_{NPSR}$, then we must also have $\Stack(G, u_O) \leq 0$ (by Lemma~\ref{lem:menu_char}). Therefore $V_{O}(\cM(\cA), u_O) \geq u_{O}(\csp) > \Stack(G, u_O)$, and $\cM(\cA)$ is therefore manipulable. 

\sloppy{Finally, we can quantify the degree of manipulation in the above argument as follows. If $\CorrDist(\cA^{T}) \leq R(T)$, then for any $\csp \in \cM(\cA^{T})$, there exists a $\csp' \in \cM_{NPSR}$ such that $||\csp - \csp'||_{2} \leq R(T)/T$. In particular, for any optimizer payoff $u_O$, we have that $u_{O}(\csp) - \Stack(G, u_O) \leq u_{O}(\csp) - u_{O}(\csp') \leq ||u_O||_{2}\cdot ||\csp - \csp'||_{2} \leq \sqrt{d_Ld_O}(R(T)/T)$ (where the last inequality follows from the fact that $||u_O||_{2} \leq \sqrt{d_Ld_O}$ if $||u_{O}||_{\infty} \leq 1$). }

Likewise, if $\CorrDist(\cA^{T}) \geq R(T)$, then there is a $\csp \in \cM$ such that $||\csp - \csp'||_{2} \geq R(T)/T$ for any $\csp' \in \cM_{NPSR}$. By the separating hyperplane theorem there then exists a $u_O$ with unit $||u_{O}||_{2} = 1$ such that $u_{O}(\csp) - \max_{\csp' \in \cM_{NPSR}}u_O(\csp') \geq R(T)/T$. Since $\Stack(G, u_O) = \max_{\csp' \in \cM_{NPSR}} u_O(\csp')$, this implies that the menu $\cM$ is at least $(R(T)/T)$-manipulable.
\end{proof}

We briefly remark that the $\sqrt{d_Ld_O}$ gap between the two bounds in Theorem~\ref{thm:poly_nonmanip} is an artifact of the specific boundedness constraints we placed on the optimizer's utility $u_O$ (namely, that $||u_O||_{\infty} \leq 1$). If we instead imposed the constraint $||u_O||_2 \leq 1$, then $\CorrDist$ would be an exact measure of non-manipulability.

\subsubsection{Minimality}

Next, we establish the analogue of Theorem \ref{thm:nfg_minimal}, proving that the no-profile-swap-regret menu is \emph{minimal}: in particular, for any polytope game, every CSP $\csp$ in the no-profile-swap-regret menu is a CSP that is implementable against any no-external-regret algorithm. 

\begin{theorem}\label{thm:poly_minimal}
Fix a polytope game $G$. All no-profile-swap-regret algorithms $\cA$ for $G$ share the same asymptotic menu $\cM(\cA) = \cM_{NPSR}$. Moreover, if $\cA$ is a no-regret algorithm, then $\cM_{NPSR} \subseteq \cM(\cA)$.
\end{theorem}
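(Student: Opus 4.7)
The plan is to split the theorem into two inclusions that together imply both claims. The first inclusion, $\cM(\cA) \subseteq \cM_{NPSR}$ whenever $\cA$ is a no-profile-swap-regret algorithm, is immediate: every CSP reachable against $\cA^T$ has $\CorrSwap \to 0$ by hypothesis, so the Hausdorff limit $\cM(\cA)$ lies in the closed set $\{\csp : \CorrSwap(\csp) = 0\} = \cM_{NPSR}$. The second (and more substantial) inclusion is $\cM_{NPSR} \subseteq \cM(\cA)$ for any no-regret $\cA$; together with the first, this gives equality when $\cA$ is no-profile-swap-regret, since every no-profile-swap-regret algorithm is in particular no-regret (external regret is bounded by linear swap regret, which is bounded by $\CorrSwap$ via Theorem~\ref{thm:comp-prof-swap}).

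For the main inclusion, I would use Lemma~\ref{lem:menu_char} to write $\cM_{NPSR}$ as the convex hull of $\{x \otimes y : y \in \optset,\ x \in \BR_L(y)\}$. Because $\cM(\cA)$ is convex (as a Hausdorff limit of convex sets), it suffices to show that each such product $x \otimes y$ belongs to $\cM(\cA)$. I would first handle the case when $x$ is a vertex of $\learnset$: appealing to non-degeneracy, pick $y^* \in \optset$ with $\BR_L(y^*) = \{x\}$ and set $y_\epsilon = (1-\epsilon)y + \epsilon y^*$. A short computation using $u_L(\cdot, y_\epsilon) = (1-\epsilon) u_L(\cdot, y) + \epsilon u_L(\cdot, y^*)$ shows that for every sufficiently small $\epsilon > 0$ we have $\BR_L(y_\epsilon) = \{x\}$: among the actions tied on $y$, only $x$ strictly maximizes $u_L(\cdot, y^*)$, and for $\epsilon$ small enough this slack dominates the gap to any non-best-response of $y$. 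Playing $y_\epsilon$ every round against $\cA$, the no-regret guarantee forces the time-averaged learner action $\bar{x}_T$ to satisfy $u_L(\bar{x}_T, y_\epsilon) \to \max_{x'} u_L(x', y_\epsilon)$; since this maximum is uniquely attained at $x$, we get $\bar{x}_T \to x$, placing $x \otimes y_\epsilon \in \cM(\cA)$. Letting $\epsilon \to 0^+$ and invoking closedness of the Hausdorff limit places $x \otimes y$ in $\cM(\cA)$. For a general $x \in \BR_L(y)$, decompose $x$ as a convex combination of vertices of the face $\BR_L(y)$ --- each of which is a vertex of $\learnset$ --- and conclude by convexity of $\cM(\cA)$.

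The main obstacle is the perturbation step, which relies on every non-strictly-dominated vertex of $\learnset$ being uniquely incentivizable by some $y^*$ --- precisely the non-degeneracy hypothesis from Section~2, which I expect implicitly underlies the theorem statement. Without it, a weakly dominated vertex $x \in \BR_L(y)$ admits no $y^*$ with $\BR_L(y^*) = \{x\}$ and the perturbation argument breaks (as already occurs for Hedge on a $2 \times 2$ game with a tied payoff column, where the extremal $x \otimes y$ is genuinely absent from Hedge's asymptotic menu); in the fully degenerate case, the inclusion must be weakened to a closure statement over such weakly dominated directions. Everything else in the argument --- the reduction via Lemma~\ref{lem:menu_char}, the convexity of asymptotic menus, and the per-round application of the no-regret guarantee --- is robust and requires no further structural assumption on the polytopes.
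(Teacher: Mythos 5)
Your proof is correct and follows essentially the same route as the paper's: reduce via Lemma~\ref{lem:menu_char} to showing each $x \otimes y$ with $x \in \BR_{L}(y)$ lies in $\cM(\cA)$ for any no-regret $\cA$, then use the non-degeneracy assumption to perturb $y$ toward a uniquely-incentivizing $y^{*}$ and pass to the limit by closedness of the asymptotic menu. The only cosmetic differences are that where you argue directly that repeated play of $y_\epsilon$ forces the time-averaged learner action to the unique best response, the paper instead invokes Theorem~\ref{thm:menu_char} together with a zero-external-regret contradiction, and your explicit reduction of non-vertex $x \in \BR_{L}(y)$ to vertices of the best-response face is a detail the paper leaves implicit.
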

\begin{proof}
We will begin by showing the second part of this theorem: that any no-(external)-regret algorithm $\cA$ for $G$ has the property that $\cM_{NPSR} \subseteq \cM(\cA)$. 

Recall that $\cM_{NPSR}$ is the convex hull of all points of the form $\BR_L(y) \otimes y$. We will show that every point of this form is in $\cM(\cA)$. For every $\hat{x} \otimes \hat{y}$ pair such that $\hat{x} = \BR_L(\hat{y})$, there are two possibilities:
\begin{itemize}
\item $\hat{x}$ is the unique best response to $\hat{y}$. Assume for contradiction that $\hat{x} \otimes \hat{y}$ is not in $\cM(\cA)$. By Theorem~\ref{thm:menu_char}, there must exist at least one point of the form $x \otimes \hat{y}$ in $\cM(\cA)$. Thus, there exists a point of the form $x \otimes \hat{y}$ such that $x$ is not a best response to $\hat{y}$. But this point incurs external regret, and therefore this is a contradiction.   
\item $\hat{x}$ is not the unique best response to $\hat{y}$. Then, note that by our non-degeneracy assumption on $u_{L}$, $\hat{x}$ is not weakly dominated. Thus there must exist a $y^{*} \in \optset$ that uniquely incentivizes $\hat{x}$. Therefore, all points of the form $y(\alpha) = \alpha y^{*} + (1 - \alpha) \hat{y}$ uniquely incentivize $\hat{x}$ for $\alpha \in [0,1]$. Therefore, all points $\hat{x} \otimes y(\alpha)$ are contained within $\cM(\cA)$. As $\cM(\cA)$ is closed, taking the limit of $\alpha \rightarrow 0$ implies that $\hat{x} \otimes y(0) = \hat{x} \otimes \hat{y} \in \cM(\cA)$.  
\end{itemize}

To establish the first part of this theorem, note that by the definition of $\cM_{NPSR}$, if $\cA$ is a no-profile-swap-regret algorithm, then $\cM(\cA) \subseteq \cM_{NPSR}$. On the other hand, since no-profile-swap-regret implies no-external-regret, by the above argument we have that $\cM_{NPSR}  \subseteq \cM(\cA)$. Therefore, $\cM_{NPSR} = \cM(\cA)$.
\end{proof}

\subsubsection{Pareto-optimality}
Finally, we establish the analogue of Theorem~\ref{thm:nfg_pareto}, and show that in polytope games, no-profile-swap-regret algorithms are Pareto optimal. This provides a natural counterpart to Theorem~\ref{thm:nfg_nonmanip} (not only can no-profile-swap-regret algorithms never be manipulated by a strategizing opponent, there are always situations where the learner strictly prefers not to be manipulated).

\begin{theorem}\label{thm:poly_pareto}
Fix a polytope game $G$. Let $\cA$ be a learning algorithm for $G$ with $\CorrSwap(\cA) = \Omega(T)$. Then there exists an optimizer utility $u_O$ such that $V_{L}(\cM_{NPSR}, u_O) > V_{L}(\cM(\cA), u_O)$; i.e., there exists an optimizer against whom it is strictly better to play any no-profile-swap-regret algorithm than $\cA$.
\end{theorem}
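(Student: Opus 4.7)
The plan is to adapt the menu-theoretic proof of Theorem~\ref{thm:nfg_pareto} to the polytope setting, using the characterization of $\cM_{NPSR}$ from Lemma~\ref{lem:menu_char} (namely, that $\cM_{NPSR}$ is the convex hull of CSPs of the form $x \otimes y$ with $x \in \BR_L(y)$). First, I would use the hypothesis $\CorrSwap(\cA) = \Omega(T)$ to conclude that $\cM(\cA) \not\subseteq \cM_{NPSR}$, and then select an extreme point $\csp^* \in \cM(\cA) \setminus \cM_{NPSR}$; such an extreme point must exist, since otherwise $\cM(\cA)$ would equal the convex hull of its extreme points, all lying inside $\cM_{NPSR}$.

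Next I would establish the \emph{learner suboptimality gap} at $\csp^*$: by Lemma~\ref{lem:menu_char}, in any convex decomposition $\csp^* = \sum_k \lambda_k \, x_k \otimes y_k$ at least one index $k$ must satisfy $x_k \notin \BR_L(y_k)$. Fixing one such decomposition and defining the BR-correction $\csp^{*, BR} := \sum_k \lambda_k \, \BR_L(y_k) \otimes y_k$, we get $\csp^{*, BR} \in \cM_{NPSR}$ with the same $\optset$-marginal as $\csp^*$ and $u_L(\csp^{*, BR}) > u_L(\csp^*)$ strictly.

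The main construction is then an optimizer utility $u_O := \tilde u_O + \delta \cdot u_L$ for a sufficiently small $\delta > 0$, where $\tilde u_O$ is a bilinear function chosen so that (i) $\csp^*$ is the unique $\tilde u_O$-maximizer on $\cM(\cA)$ and (ii) $\csp^{*, BR}$ lies in the $\tilde u_O$-maximizing face of $\cM_{NPSR}$. For sufficiently small $\delta$, continuity keeps $\csp^*$ as the unique $u_O$-maximizer on $\cM(\cA)$, so $V_L(\cM(\cA), u_O) = u_L(\csp^*)$; simultaneously, on $\cM_{NPSR}$ the perturbation selects the $u_L$-maximizer inside the $\tilde u_O$-max face, whose $u_L$-value is at least $u_L(\csp^{*, BR}) > u_L(\csp^*)$. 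Combining these gives the strict separation $V_L(\cM_{NPSR}, u_O) > V_L(\cM(\cA), u_O)$ claimed by the theorem.

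I expect the main obstacle to be the joint construction of $\tilde u_O$ in the previous step: it must simultaneously be a strict supporting hyperplane of $\cM(\cA)$ at $\csp^*$ and a (non-strict) supporting hyperplane of $\cM_{NPSR}$ at $\csp^{*, BR}$. I plan to establish existence via a normal-cone argument: the normal cone of $\cM(\cA)$ at the extreme point $\csp^*$ is full-dimensional, and it suffices to find any element of the normal cone of $\cM_{NPSR}$ at $\csp^{*, BR}$ having positive inner product with the direction $\csp^* - \csp^{*, BR}$ (which points outward from $\cM_{NPSR}$). Should this direct construction fail for a specific $\csp^*$ (e.g.\ because $\csp^* - \csp^{*, BR}$ unexpectedly lies in the tangent cone of $\cM_{NPSR}$ at $\csp^{*, BR}$), I would vary either the decomposition of $\csp^*$ or the choice of extreme point of $\cM(\cA) \setminus \cM_{NPSR}$; some such choice must succeed, for otherwise every extreme point of $\cM(\cA)$ would lie in $\cM_{NPSR}$, contradicting our choice of $\csp^*$.
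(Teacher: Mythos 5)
Your setup (extracting an extreme point $\csp^* \in \cM(\cA) \setminus \cM_{NPSR}$, forming the best-response correction $\csp^{*,BR}$ with the same $\optset$-marginal and $u_L(\csp^{*,BR}) > u_L(\csp^*)$, and perturbing by $\delta u_L$) is a reasonable start, and the final perturbation step would indeed work \emph{if} the functional $\tilde u_O$ you posit existed. But the existence of $\tilde u_O$ is exactly the crux, and your argument for it has genuine holes. First, asymptotic menus are arbitrary compact convex sets (Hausdorff limits), not polytopes, so the claim that the normal cone of $\cM(\cA)$ at an extreme point is full-dimensional is false in general (think of a disc: every boundary point is extreme with a one-dimensional normal cone; extreme points need not even be exposed). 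Second, finding an element of the normal cone of $\cM_{NPSR}$ at $\csp^{*,BR}$ with positive inner product against $\csp^* - \csp^{*,BR}$ only gives $\tilde u_O(\csp^*) > \tilde u_O(\csp^{*,BR})$; it does not make $\csp^*$ a maximizer of $\tilde u_O$ over $\cM(\cA)$, let alone the unique one. Third, $\csp^{*,BR}$ may lie in the relative interior of $\cM_{NPSR}$, in which case any $\tilde u_O$ supported there must be constant on $\cM_{NPSR}$; if moreover $\cM_{NPSR} \subseteq \cM(\cA)$ (which happens whenever $\cA$ is no-external-regret but has high profile swap regret) and $\csp^*$ lies in the affine hull of $\cM_{NPSR}$, no such $\tilde u_O$ can uniquely expose $\csp^*$, so your requirements (i) and (ii) are jointly unsatisfiable for that choice. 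Your fallback ("vary the decomposition or the extreme point; some choice must succeed, else every extreme point of $\cM(\cA)$ lies in $\cM_{NPSR}$") is a non sequitur: failure of the normal-cone construction is caused by the geometry of normal cones and of the correction point, not by membership of $\csp^*$ in $\cM_{NPSR}$. (A smaller repairable slip: for non-polytopal menus, small $\delta$ need not keep $\csp^*$ the unique $u_O$-maximizer; you only get $V_L(\cM(\cA), u_O) \leq u_L(\csp^*) + \epsilon(\delta)$ by upper semicontinuity, which suffices.)

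The paper avoids these difficulties entirely by not trying to expose a particular bad CSP. It invokes Lemma 4.4 of \cite{paretooptimal} (stated as Lemma~\ref{lem:other_paper}), which produces the desired $u_O$ whenever two menus share the same $u_L$-maximal face and one strictly contains a point outside the other; to meet its hypothesis, the paper applies it to $\cM_{NPSR}$ and the padded menu $\overline{\cM} = \conv(\cM(\cA) \cup \cM^{+}_{NPSR})$, and then a short case analysis transfers the conclusion from $\overline{\cM}$ back to $\cM(\cA)$. The padding with $\cM^{+}_{NPSR}$ is precisely what neutralizes the learner-favorable tie-breaking in $V_L$ that your construction tries to handle via uniqueness of the maximizer. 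As written, your proposal essentially attempts to reprove that lemma from scratch, and the sketch of the key existence step does not go through; you would either need to import that lemma (as the paper does) or supply a substantially more careful separation argument.
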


The proof of Theorem~\ref{thm:poly_pareto} uses Lemma 4.4 from~\cite{paretooptimal}. While~\cite{paretooptimal} study the setting of normal-form games, in which the strategy polytopes are simplices, their proof of this lemma makes no such assumptions. Given a menu $\cM$, let $\cM^{+}$ represent the set of all extreme points of $\cM$ where $u_{L}$ achieves its maximum value. Furthermore, let $\cM^{-}$ represent the set of all extreme points of $\cM$ where $u_L$ achieves its minimum value.

\begin{lemma}[Lemma 4.4 in~\cite{paretooptimal}]\label{lem:other_paper}
    Let $\cM_{1}$ and $\cM_{2}$ be two distinct asymptotic menus where $\cM_{1}^{+} = \cM_{2}^{+}$. If 
      $\cM_{2} \setminus \cM_{1} \neq \emptyset$, then there exists a $u_{O}$ for which $V_{L}(\cM_{1},u_{O}) > V_{L}(\cM_{2},u_{O})$.
\end{lemma}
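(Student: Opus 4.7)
The plan is to exhibit an explicit optimizer utility $u_O$ for which $V_L(\cM_1, u_O) > V_L(\cM_2, u_O)$, by combining the learner utility $u_L$ with a linear functional $u'$ that separates a well-chosen extreme point of $\cM_2 \setminus \cM_1$ from $\cM_1$. First I would extract a witness: Krein--Milman applied to the compact convex set $\cM_2$ produces an extreme point $\csp^{*} \in \cM_2 \setminus \cM_1$ (otherwise every extreme point of $\cM_2$ would lie in the convex set $\cM_1$, forcing $\cM_2 \subseteq \cM_1$). Because $\cM_2^{+} = \cM_1^{+} \subseteq \cM_1$ and $\csp^{*} \notin \cM_1$, we have $\csp^{*} \notin \cM_2^{+}$, so $u_L(\csp^{*}) < V_L^{\max} := \max_{\cM_1} u_L = \max_{\cM_2} u_L$. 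Let $F^{+} := \mathrm{conv}(\cM_1^{+}) = \mathrm{conv}(\cM_2^{+})$ denote the common top face, on which $u_L \equiv V_L^{\max}$. By the separating hyperplane theorem, there exists a linear functional $u'$ with $u'(\csp^{*}) > \max_{\csp \in \cM_1} u'(\csp) \geq \max_{\csp \in F^{+}} u'(\csp)$.

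Next I would consider the one-parameter family $u_O(K) := K u_L + u'$ for $K \geq 0$ and argue that some choice of $K$ yields the strict inequality. A universal observation is that $V_L(\cM_2, u_O(K)) < V_L^{\max}$ whenever the argmax of $u_O(K)$ on $\cM_2$ is disjoint from $F^{+}$, which in turn holds whenever $u_O(K)(\csp^{*}) > \max_{F^{+}} u_O(K)$, i.e.\ for $K$ below the threshold $K^{*} := (u'(\csp^{*}) - \max_{F^{+}} u')/(V_L^{\max} - u_L(\csp^{*}))$. Complementarily, on the polytope $\cM_1$, the argmax of $u_O(K)$ is forced into $F^{+}$ once $K$ is large enough: every extreme point of $\cM_1$ outside $F^{+}$ has $u_L$-value bounded above by $V_L^{\max} - \eta_1$ for a uniform $\eta_1 > 0$, so once $K\eta_1$ dominates the bounded oscillation of $u'$ on $\cM_1$, the argmax collapses into $F^{+}$ and $V_L(\cM_1, u_O(K)) = V_L^{\max}$. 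Any $K$ in the overlap of these two $K$-windows then yields $V_L(\cM_1, u_O(K)) - V_L(\cM_2, u_O(K)) > 0$.

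The main obstacle is to ensure that the two $K$-windows actually overlap, and this forces a careful choice of $u'$. I would take $u'$ along the outward normal to $\cM_1$ at the Euclidean projection $\pi$ of $\csp^{*}$ onto $\cM_1$: when $\pi \in F^{+}$, this $u'$ is already maximized on $\cM_1$ at a point of $F^{+}$, the lower $K$-threshold collapses to $0$, and the overlap is automatic. When $\pi \notin F^{+}$, the argument branches: either some vertex $v \in \cM_1^{+}$ admits a separating $u'$ also maximized on $\cM_1$ at $v$, in which case $u_O = u'$ alone suffices (forcing $V_L(\cM_1, u') = V_L^{\max}$ while $V_L(\cM_2, u') < V_L^{\max}$ by the universal observation), or no such vertex exists, and one uses monotonicity of $V_L(\cM_1, u_O(K))$ in $K$ on the interval $[0, K^{*})$—during which $V_L(\cM_2, u_O(K))$ remains strictly below $V_L^{\max}$—to extract a compatible $K$. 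This case analysis exactly reproduces the argument of~\cite{paretooptimal} Lemma~4.4, which invokes only the common top face $F^{+}$ and the separability of $\csp^{*}$ rather than any simplex-specific property, and therefore extends verbatim to general polytope menus $\cM_1, \cM_2$.
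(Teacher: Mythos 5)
The host paper does not actually prove this statement — it simply cites Lemma 4.4 of \cite{paretooptimal} and remarks that the cited proof carries over from normal-form to polytope games — so there is no in-paper argument to compare against. I will therefore assess your argument on its own terms, and it has a genuine gap.

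Your strategy is to pick a separating functional $u'$ for $\csp^{*}$ and then find $K$ so that the argmax of $u_O(K) = Ku_L + u'$ on $\cM_1$ lands in $F^{+}$ (giving $V_L(\cM_1, u_O(K)) = V_L^{\max}$) while the argmax on $\cM_2$ avoids $F^{+}$ (giving $V_L(\cM_2, u_O(K)) < V_L^{\max}$). The window-overlap obstacle you flag is real and your case analysis does not close it. Concretely, in $\Rset^3$ with $u_L$ the first coordinate, take $A = (1,0,0)$, $B = (0,0,0)$, $E = (1-\eps, 0, 1)$ with small $\eps$, $D = (0,0,2)$, and $\cM_1 = \conv(A,B,E)$, $\cM_2 = \conv(A,B,E,D)$. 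Then $\cM_1^{+} = \cM_2^{+} = \{A\}$, $\csp^{*} = D$, $V_L^{\max} = 1 > 0 = u_L(D)$. Any separating $u' = (a, \cdot, c)$ must have $c > 0$ (from $u'(D) > u'(B)$). For the $\cM_1$-argmax to contain $A$ and beat $E$ one needs $K \ge c/\eps - a$; for the $\cM_2$-argmax to avoid $A$ (i.e. $D$ beats $A$) one needs $K < 2c - a$. These never overlap when $\eps < 1/2$. Your Case 1 also fails (the projection of $D$ onto $\cM_1$ is $E \notin F^{+}$), and your Case 2a fails (no separating $u'$ is simultaneously maximized on $\cM_1$ at $A$: $u'(A) \ge u'(E)$ forces $c \le \eps a$ while $u'(D) > u'(A)$ forces $2c > a$, contradictory for $\eps < 1/2$). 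This leaves only your Case 2b, where ``monotonicity of $V_L(\cM_1, u_O(K))$ in $K$'' gives no mechanism at all for comparing $V_L(\cM_1, u_O(K))$ against $V_L(\cM_2, u_O(K))$: both curves are nondecreasing in $K$, which says nothing about their relative order at any fixed $K$.

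Meanwhile the lemma's conclusion does hold in this instance — $u_O = (1,0,1)$ has argmax $E$ on $\cM_1$ and argmax $D$ on $\cM_2$, so $V_L(\cM_1, u_O) = 1 - \eps > 0 = V_L(\cM_2, u_O)$ — even though \emph{both} argmaxes miss $F^{+}$. So a correct proof must compare the two $V_L$ values directly rather than trying to push $V_L(\cM_1, u_O)$ all the way to $V_L^{\max}$, and your strategy cannot produce such a comparison. A secondary issue: the uniform gap $\eta_1$ in your step (``every extreme point of $\cM_1$ outside $F^{+}$ has $u_L$-value bounded above by $V_L^{\max} - \eta_1$'') requires $\cM_1$ to be a polytope, but asymptotic menus are arbitrary compact convex sets; a smooth menu whose top face is a single point has extreme points with $u_L$-value arbitrarily close to $V_L^{\max}$, so no such $\eta_1$ exists in general. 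Your closing claim that this ``exactly reproduces'' the cited proof is therefore unsubstantiated.
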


We can now prove Theorem~\ref{thm:poly_pareto}.

\begin{proof}[Proof of Theorem~\ref{thm:poly_pareto}]
    Consider an algorithm $\cA$ with menu $\cM = \cM(\cA)$ which incurs $\Omega(T)$ profile swap regret in the worst case. Then, by definition, $\cM \setminus \cM_{NPSR} \neq \emptyset$. Let $\max_{x \in \learnset, y \in \optset}u_{L}(x,y) = u_{L}^{\max}$. By Lemma~\ref{lem:menu_char}, $\cM_{NPSR}$ is the convex hull of all CSPs of the form $x \otimes y$, where $y \in \optset$ and $x \in \BR_{L}(y)$. Thus, $\cM^{+}_{NPSR}$ is the convex hull of all CSPs $x^{*} \otimes y^{*}$ where $u_{L}(x^{*},y^{*}) = u_{L}^{\max}$ (in particular, $\cM^{+}_{NPSR}$ contains every CSP $\csp$ where $u_{L}(\csp) = u_{L}^{\max}$). 
    
    Now, let $\overline{\cM} = \conv(\cM(\cA) \cup \cM^{+}_{NPSR})$. Note that as $\cM(\cA) \subseteq \overline{\cM}$, $\overline{\cM} \setminus \cM_{NPSR} \neq \emptyset$. Furthermore, by construction, $\cM^{+}_{NPSR} = \overline{\cM}^{+}$. Thus, we can invoke Lemma~\ref{lem:other_paper} to show that there exists a $u_{O}$ such that $V_{L}(\cM_{NPSR},u_{O}) > V_{L}(\overline{\cM},u_{O})$. Now, there are two possibilities:
    \begin{itemize}
        \item The extreme point of $u_{O}$ in $\overline{\cM}$ is the same as that in $\cM(\cA)$. Then $V_{L}(\cM(\cA), u_{O}) = V_{L}(\overline{\cM},u_{O}) < V_{L}(\cM_{NPSR}, u_{O})$. 
        \item The extreme point of $u_{O}$ in $\overline{\cM}$ is \emph{not} the same as that in $\cM(\cA)$. Then, the extreme point of $u_{O}$ in $\overline{\cM}$ must belong to $\cM_{NPSR}^{+}$. But for all CSPs $\csp \in \cM_{NPSR}^{+}$, $u_{L}(\csp) = u_{L}^{\max}$, which would imply that $V_{L}(\cM_{NPSR},u_{O}) > u_{L}^{\max}$. This is a contradiction, and therefore this case cannot occur. 
    \end{itemize}
\end{proof}

% \jon{Probably move this section to the appendix. Todos for this section:

% \begin{enumerate}
%     \item Discuss how polytope swap regret can be computed and minimized just as a function of the reward vectors $r_{t} \in \learnset^*$ the learner faces, independent of the actual actions $y_{t} \in \optset$ of the Optimizer (``game-agnostic'').
%     \item Point out that it is not sufficient to run a profile-swap-regret minimizing algorithm under the assumption $\cY = \cX^{*}$ to minimize polytope swap regret
%     \item Can we get efficient polytope-swap-regret minimization algorithms under some assumptions?
% \end{enumerate}

% }

\section{Minimizing Profile Swap Distance}\label{sec:algorithms}

We now switch our attention to the problem of designing learning algorithms that minimize profile swap regret over finite time horizons. More specifically, we are interested in the following two questions:

\begin{enumerate}
    \item What is the best bound on profile swap regret  that we can guarantee after $T$ rounds? Can we guarantee that this quantity is sublinear in $T$ and polynomial in the dimensions $d_L$ and $d_O$?
    \item Can we construct \emph{computationally efficient} algorithms with these regret bounds? That is, can we construct learning algorithms that run in per-iteration time that is polynomial in $T$, $d_L$, and $d_O$?
\end{enumerate}

Note that for linear swap regret, we have positive answers to both of these questions. In particular, linear swap regret is known to be tractable to minimize: there is a line of work \citep{GordonGreenwaldMarks2008, Farina2023:Polynomial, Farina2024:eah, dann2023pseudonorm} that provides efficient learning algorithms achieving $\poly(d)\sqrt{T}$ linear swap regret in specific classes of games, with \cite{daskalakis2024efficient} now showing that this is possible as long as you have standard oracle access to the learner's action set $\learnset$. On the other side of the spectrum, the best known algorithms for minimizing normal-form swap regret generally incur regret bounds that are either polynomial in the number of \emph{vertices} of $\learnset$ (which easily can be exponential in dimension for Bayesian games and extensive-form games) or scale as $T/(\log T)^{O(1)}$ (i.e., requiring exponential in $1/\eps$ rounds to guarantee $\eps T$ regret). Recent lower bounds \citep{daskalakis2024lowerboundswapregret} show that these rates are necessary for minimizing normal-form swap regret in extensive-form games, implying a mostly negative answer to the above questions.

\subsection{Information-theoretic regret bounds via Blackwell approachability}\label{sec:it-bounds}

In contrast to normal-form swap regret, we will show that it is possible to design learning algorithms that incur at most $O(\sqrt{T})$ profile swap regret. Our main technique is to frame the problem of minimizing profile swap regret as a specific instance of Blackwell approachability where the goal is to approach the no-profile-swap-regret menu $\cM_{NPSR}$.

We therefore begin by briefly reviewing the standard theory of Blackwell approachability. An instance of Blackwell approachability is specified by a convex action set $\learnset$ for the learner, a convex action set $\optset$ for the adversary, a bilinear \emph{vector-valued} payoff function $v: \learnset \times \optset \rightarrow \Rset^{d}$, and a convex target set $\cS \subseteq \Rset^{d}$. The goal of the learner is to run a learning algorithm $\cA$ (outputting a sequence of values $x_t$ in response to the actions $y_1, y_2, \dots, y_{t-1}$ played by the adversary so far) that guarantees that the average payoff $\frac{1}{T}\sum_{t=1}^{T}v(x_t, y_t)$ is close to the target set $\cS$.

In order for this to be possible at all, the function $v$ must be \emph{response-satisfiable} with respect to the target set $\cS$: that is, for every possible action $y \in \optset$ played by the adversary, there must be a response $x(y) \in \learnset$ for the learner with the property that $v(x(y), y) \in \cS$. If $v$ is not response-satisfiable for some $y$, then if the adversary repeatedly plays this action, it is impossible for the learner to force the average payoff to converge to the set $\cS$. Blackwell's approachability theorem states that response-satisfiability is the only necessary requirement -- if $v$ is response-satisfiable with respect to $\cS$, then it is always possible for the learner to steer the average payoff to approach $\cS$. Below we cite a quantitative, algorithmic version of Blackwell's approachability theorem that appears in \cite{mannor2013approachability}.

\begin{theorem}[Blackwell's approachability theorem; Proposition 1 of \cite{mannor2013approachability}]\label{thm:blackwell}
Let $v\colon \cX \times \cY \rightarrow \Rset^{d}$ be a vector-valued bilinear payoff function and $\cS$ be a convex subset of $\Rset^{d}$ such that $v$ is response-satisfiable with respect to $\cS$. Then there exists a learning algorithm $\cA$ (choosing $x_{t} \in \cX$ in response to $y_{1}, y_{2}, \dots, y_{t-1} \in \cY$) with the property that:

$$T \cdot \mathrm{dist}\left(\frac{1}{T}\sum_{i=1}^{t} v(x_t, y_t), \cS\right) \leq 2\norm{v}_{\infty}\sqrt{T},$$
where $\norm{v}_{\infty} = \max_{x \in \cX, y \in \cY}\norm{v(x, y)}_{2}$  and $\mathrm{dist}(z, \cS)$ is the minimum ($\ell_2$) distance from $x$ to the point $\cS$. %\mm{Is there really a square here? It doesn't look homogeneous.}

Moreover, this algorithm can be implemented efficiently (in time polynomial in $d$, $\dim(\cX)$, and $\dim(\cY)$) given an efficient separation oracle for the set $\cS$. 
\end{theorem}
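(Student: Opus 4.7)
The plan is to prove this via the standard reduction from Blackwell approachability to online linear optimization, due to Abernethy, Bartlett, and Hazan. The first step is to convert response-satisfiability into a \emph{halfspace-satisfiability} condition: for every direction $w \in \Rset^{d}$ with $\|w\|_{2} \leq 1$, there exists an action $x^{*}(w) \in \cX$ such that $\langle w, v(x^{*}(w), y)\rangle \leq \sigma_{\cS}(w)$ for every $y \in \cY$, where $\sigma_{\cS}(w) = \sup_{s \in \cS}\langle w, s\rangle$ is the support function of $\cS$. To prove this, observe that by response-satisfiability, for every \emph{mixed} strategy $\mu \in \Delta(\cY)$ there exists an $x \in \cX$ with $\E_{y \sim \mu}[v(x,y)] \in \cS$, and hence $\E_{y \sim \mu}\langle w, v(x,y)\rangle \leq \sigma_{\cS}(w)$. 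Applying Sion's minimax theorem to the bilinear function $(x, \mu) \mapsto \E_{y \sim \mu}\langle w, v(x,y)\rangle$ on the convex compact sets $\cX$ and $\Delta(\cY)$ lets us swap the $\min_{x}$ and $\max_{\mu}$, yielding the desired $x^{*}(w)$.

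Next, I would invoke the Abernethy--Bartlett--Hazan reduction. Run online gradient descent on the unit ball $B = \{w : \|w\|_{2} \leq 1\}$, producing $w_{t} \in B$ at each round; the learner plays $x_{t} = x^{*}(w_{t})$ via the halfspace oracle of Step 1. After observing $y_{t}$, feed the concave reward $r_{t}(w) = \langle w, v(x_{t}, y_{t})\rangle - \sigma_{\cS}(w)$ (whose subgradients are bounded in $\ell_{2}$-norm by $2\|v\|_{\infty}$) back to the learning algorithm. Standard OGD bounds yield external regret $O(\|v\|_{\infty}\sqrt{T})$, and by the construction of $x^{*}(w_{t})$ we have $r_{t}(w_{t}) \leq 0$ for every $t$. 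Combining these, $\sum_{t} r_{t}(w) \leq O(\|v\|_{\infty}\sqrt{T})$ for every $w \in B$. The dual formula $\mathrm{dist}(z, \cS) = \max_{w \in B}\bigl(\langle w, z\rangle - \sigma_{\cS}(w)\bigr)$ applied to $z = \sum_{t} v(x_{t}, y_{t})$ then gives exactly the claimed bound $T \cdot \mathrm{dist}(\bar{v}_{T}, \cS) \leq 2\|v\|_{\infty}\sqrt{T}$ (after tightening constants by picking the optimal OGD step size).

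For the efficient implementation, the OGD step is trivial (projection onto the unit ball is closed-form), so the only non-trivial subroutine is the halfspace oracle. Given a separation oracle for $\cS$, one can evaluate $\sigma_{\cS}(w)$ by linear optimization over $\cS$ via the equivalence of separation and optimization, and the extremal $s^{*} \in \cS$ achieving the maximum can also be recovered. Finding $x^{*}(w)$ then amounts to solving the bilinear zero-sum game $\min_{x \in \cX}\max_{y \in \cY}\langle w, v(x,y)\rangle$; since $\cX$ and $\cY$ are polytopes and $v$ is bilinear, this is a polynomial-sized linear program in $\dim(\cX)$ and $\dim(\cY)$, so $x^{*}(w)$ is computable in polynomial time.

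The main obstacle is Step 1: the minimax swap combined with the efficient implementation of $x^{*}(w)$. Sion's theorem applies cleanly because $v$ is bilinear and $\cX$, $\Delta(\cY)$ are convex, but the computational aspect needs care --- one must avoid optimizing over the infinite set $\Delta(\cY)$ directly, which is why we reformulate as a compact LP using the bilinearity of $v$ and polyhedrality of $\cY$. Once Step 1 is in hand, the remaining ingredients (OGD regret analysis and the support-function identity) are routine.
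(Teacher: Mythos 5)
The paper does not actually prove this statement -- it is imported verbatim as Proposition~1 of Mannor et al., and the paper elsewhere credits exactly the reduction you describe (Abernethy--Bartlett--Hazan) as the mechanism behind it. Your proposal is a correct reconstruction of that standard argument: response-satisfiability plus the minimax theorem gives the halfspace oracle $x^{*}(w)$, online gradient descent on the dual unit ball with reward $r_t(w)=\langle w, v(x_t,y_t)\rangle-\sigma_{\cS}(w)$ gives $O(\norm{v}_{\infty}\sqrt{T})$ regret while $r_t(w_t)\leq 0$ by construction, and the identity $\mathrm{dist}(z,\cS)=\max_{\norm{w}\leq 1}\bigl(\langle w,z\rangle-\sigma_{\cS}(w)\bigr)$ converts this into the approachability bound; the efficiency claims via separation/optimization equivalence and an LP for the bilinear minimax are also fine. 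The only point to tighten: your subgradient bound of $2\norm{v}_{\infty}$ (and indeed finiteness of $\sigma_{\cS}$) implicitly assumes $\cS$ is bounded by roughly $\norm{v}_{\infty}$; for general convex $\cS$ one should first replace $\cS$ by its intersection with a ball of radius $O(\norm{v}_{\infty})$, which leaves all relevant distances unchanged since the average payoff lies in the payoff ball and $\cS$ contains a point of that ball by response-satisfiability. This is a standard normalization (and automatic in the paper's application, where $\cS=\cM_{NPSR}$ sits inside the unit ball), so it is a presentational caveat rather than a genuine gap.
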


There is a clear similarity between the Blackwell approachability problem stated above and the problem of minimizing profile swap regret: whereas in Blackwell approachability, the learner wants to guide the average vector-valued payoff $\frac{1}{T}\sum v(x_t, y_t)$ to converge to the set $\cS$, when minimizing profile swap regret, the learner wants to guide the average CSP $\frac{1}{T}\sum x_t \otimes y_t$ to converge to the no-profile-swap-regret menu $\cM_{NPSR}$. In fact, the function mapping a pair $(x, y)$ of learner and optimizer strategies to the corresponding CSP $x \otimes y$ is itself a vector-valued bilinear payoff function that is response-satisfiable with respect to the no-profile swap regret menu. This allows us to immediately apply Theorem~\ref{thm:blackwell} to get bounds on profile swap regret, which we do below (note that this Theorem will mostly be later subsumed by Theorem~\ref{thm:upper-semi-separation}, which achieves similar bounds but efficiently).

\begin{theorem}\label{thm:upper-approachability}
For any polytope game $G$ (where $\learnset$ and $\optset$ are bounded in unit norm), there exists a learning algorithm $\cA$ that guarantees $\CorrDist(\mathcal{A}^{T}) \leq 2\sqrt{T}$.
\end{theorem}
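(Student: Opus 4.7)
The plan is to directly instantiate Blackwell's approachability theorem (Theorem~\ref{thm:blackwell}) with the bilinear payoff $v(x,y) = x \otimes y \in \Rset^{d_L d_O}$ and the target set $\cS = \cM_{NPSR}$. This setup is natural because the running Blackwell average $\frac{1}{T}\sum_{t=1}^{T} v(x_t, y_t)$ is, by definition (see Equation~\eqref{eq:csp-def}), exactly the CSP $\csp$ of the transcript, and the profile swap distance $\CorrDist(\csp)$ is by definition $\mathrm{dist}(\csp, \cM_{NPSR})$. So driving the Blackwell iterate to $\cS$ is literally the same as driving the CSP to the no-profile-swap-regret menu.

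To apply the theorem I need to verify two things. First, response-satisfiability: for each $y \in \optset$ there must exist $x(y) \in \learnset$ with $x(y) \otimes y \in \cM_{NPSR}$. This is where Lemma~\ref{lem:menu_char} does all the work: it characterizes $\cM_{NPSR}$ as the convex hull of all CSPs $x \otimes y$ with $x \in \BR_L(y)$, so any best response $x(y) \in \BR_L(y)$ suffices (and $\BR_L(y)$ is always nonempty because $\learnset$ is compact and $u_L$ is continuous). Second, I need to bound $\norm{v}_\infty$. Since $\learnset \subset \ball_{d_L}(1)$ and $\optset \subset \ball_{d_O}(1)$, the outer product satisfies $\norm{v(x,y)}_2 = \norm{x \otimes y}_2 = \norm{x}_2\,\norm{y}_2 \leq 1$, so $\norm{v}_\infty \leq 1$.

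Applying Theorem~\ref{thm:blackwell} with these parameters yields an algorithm $\cA$ such that, against any sequence of optimizer actions, the induced CSP satisfies $T \cdot \mathrm{dist}(\csp, \cM_{NPSR}) \leq 2\sqrt{T}$, which under the paper's convention (where $\CorrDist(\cA^T)$ tracks the total, not per-round, distance over the menu, consistent with the way the bound $R(T)/T$ appears in the proof of Theorem~\ref{thm:poly_nonmanip}) is precisely the claimed bound $\CorrDist(\cA^T) \leq 2\sqrt{T}$. I don't anticipate a serious technical obstacle: Lemma~\ref{lem:menu_char} was set up so that response-satisfiability is immediate, and bilinearity of $v$ together with boundedness of $\learnset, \optset$ are given. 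The one caveat worth flagging is that this proof is only an existence statement: implementing $\cA$ efficiently requires an efficient (semi-)separation oracle for $\cM_{NPSR}$, which is nontrivial given that profile swap regret will be shown to be NP-hard to evaluate, and handling that is exactly what the subsequent Theorem~\ref{thm:upper-semi-separation} must do.
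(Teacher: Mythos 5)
Your proof is essentially identical to the paper's: both instantiate Theorem~\ref{thm:blackwell} with $v(x,y) = x \otimes y$ and target $\cS = \cM_{NPSR}$, use Lemma~\ref{lem:menu_char} for response-satisfiability, bound $\norm{v}_\infty \leq 1$ via the unit-norm assumption on $\learnset, \optset$, and read off $\CorrDist(\cA^T) \leq 2\sqrt{T}$. The closing caveat about efficiency is apt but not part of the proof; no issues.
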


\begin{proof}
Consider the vector-valued payoff function $v: \learnset \times \optset \rightarrow (\learnset \otimes \optset) \subset \Rset^{d_Ld_O}$ defined via $v(x, y) = x \otimes y$, and let the target set $\cS$ equal the no-profile-swap-regret menu $\cM_{NPSR}$ for the game $G$. Note that $v$ is response-satisfiable for $\cS$, since given any $y \in \optset$, the CSPs in $\BR_{L}(y) \otimes y$ belong to $\cM_{NPSR}$. Note also that $\norm{v}_{\infty} \leq \max_{x \in \cX}\max_{y \in \cY} \norm{x \otimes y} = (\max_{x \in \cX} \norm{x})(\max_{y \in \cY} \norm{y}) \leq 1$, since both $\learnset$ and $\optset$ are bounded in unit norm by assumption.

By Theorem~\ref{thm:blackwell}, there exists a learning algorithm $\cA$ which guarantees that $\mathrm{dist}\left(\frac{1}{T}\sum_{t=1}^{T} v(x_t, y_t), \cS\right) \leq 2\norm{v}_{\infty}\sqrt{T} \leq 2\sqrt{T}$. But $\frac{1}{T}\sum_{t=1}^{T} v(x_t, y_t) = \frac{1}{T}\sum_{t=1}^{T} x_{t} \otimes y_t$ is simply the CSP $\csp$ corresponding to this transcript of play; since $\cS = \cM_{NPSR}$, this distance is exactly the minimum distance from this CSP to the no-profile-swap-regret menu, and is therefore equal to $\CorrDist(\csp)$. It follows that the worst-case profile swap distance $\CorrDist(\mathcal{A}^{T})$ is at most $2\sqrt{T}$. 
\end{proof}

Combining this with our theorem on non-manipulability (Theorem~\ref{thm:poly_nonmanip}), this implies that there exists an $\alpha$-non-manipulable learning algorithm for $\alpha = O(\sqrt{d_Ld_O/T})$.

\subsection{Hardness of computing profile swap regret}

Theorem~\ref{thm:upper-approachability} indicates that it is possible to construct a learning algorithm that achieves good profile swap regret guarantees, but states nothing about the computational complexity of running such an algorithm. The instantiation of Blackwell's approachability in Theorem~\ref{thm:blackwell} is algorithmic, and can be run in polynomial-time as long as one has efficient oracle access to the target set in question. 

In our case, this target set is the no-profile-swap-regret menu $\cM_{NPSR}$. In some cases, we can construct efficient convex oracles for the set $\cM_{NPSR}$; for example, if our polytope game is actually a normal-form game, the menu $\cM_{NPSR}$ reduces to the no-swap-regret polytope $\cM_{NSR}$, which can be written as the explicit intersection of $O(n^2)$ half-spaces when the learner has $n$ actions (i.e., constraints guaranteeing the learner cannot improve their utility by swapping action $i \in [n]$ to action $i' \in [n]$). 

However, we will show that for general polytope games, we cannot hope for any computationally efficient characterization of the no-profile-swap-regret menu (even when we have succinct and efficient descriptions of the sets $\learnset$ and $\optset$). In particular, we will show that even for the setting of Bayesian games, it is impossible to even compute the profile swap distance $\CorrDist(\csp)$ (note that this quantity is simply the distance from $\csp$ to $\cM_{NPSR}$, and would be efficient to compute given standard convex oracles for this set). Our main tool is the following APX-hardness result of computing Stackelberg equilibria in Bayesian games.

\begin{lemma}[Theorem 14 in \cite{MMSSbayesian}]\label{lem:apx-hardness}
It is APX-hard to compute the Stackelberg value for the optimizer in a Bayesian game. That is, there exists a constant $\eps > 0$ such that given a Bayesian game $G$ (with a specific optimizer utility $u_{O}$) and a value $V > 0$ it is NP-hard to distinguish between the cases $\Stack(G, u_O) \leq (1-\eps)V$ and $\Stack(G, u_O) \geq V$. 
\end{lemma}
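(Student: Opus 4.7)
The plan is a polynomial-time gap-preserving reduction from a standard APX-hard problem, for concreteness MAX-3SAT with a bounded number of occurrences of each variable (APX-hard via the PCP theorem). Given a formula $\varphi$ with $N$ variables and $M$ clauses, I would construct in polynomial time a Bayesian game $G$ with bilinear learner utility $u_L$, together with an optimizer utility $u_O$ and a threshold $V > 0$, such that $\Stack(G, u_O) \geq V$ when $\varphi$ is satisfiable, while $\Stack(G, u_O) \leq (1 - \eps) V$ when at most a $(1 - \eps')$-fraction of the clauses are simultaneously satisfiable; here $\eps, \eps' > 0$ are the constants inherited from the PCP gap.

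For the encoding, I would give the optimizer a single type with $2N$ pure actions (one per literal), so $\optset \cong \Delta_{2N}$, and give the learner $M$ types (one per clause $C_j$), each with $3$ pure actions (one per literal of $C_j$). The learner utility $u_L$ would be designed so that the learner of type $C_j$ best-responds to any commitment $y \in \optset$ by choosing the literal of $C_j$ with largest marginal mass in $y$ -- a straightforwardly bilinear payoff. The optimizer utility $u_O$ would then award (roughly) $1/M$ whenever the learner's best response in type $C_j$ is a literal that the Boolean assignment encoded by $y$ actually satisfies. With suitable choices of these payoffs, $\Stack(G, u_O)$ should coincide (up to a $1/M$ rescaling) with the maximum fraction of clauses satisfiable by a Boolean assignment, so the PCP gap transfers directly.

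The step I expect to be the main obstacle is the integrality argument: showing that the optimizer's optimal commitment can be taken to put all mass on a single literal per variable, rather than spreading it across both $x_i$ and $\lnot x_i$. Because Stackelberg leaders can genuinely benefit from randomization, payoffs must be designed so that fractional assignments are strictly dominated by integer ones -- for instance, by adding a small additive penalty to $u_O$ proportional to $\min(p_i^{+}, p_i^{-})$ for each variable, or by structuring $u_O$ so that credit is granted only when the selected literal strictly dominates its negation in marginal mass (exploiting the optimizer-favored tie-breaking in $\BR_L$). Once integrality is secured, soundness and completeness of the reduction follow by direct inspection. Secondary bookkeeping includes rescaling $u_L$ and $u_O$ to lie in $[-1, 1]$ as required by the model, and verifying that the construction fits the polytope-game form $(\Delta_m)^{c_L} \times (\Delta_n)^{c_O}$ with $c_O = 1$, $n = 2N$, $c_L = M$, and learner action count $m = 3$, so that the instance size is polynomial in $|\varphi|$.
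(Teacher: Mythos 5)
The paper does not actually prove this lemma: it is imported verbatim as Theorem~14 of \cite{MMSSbayesian}, so there is no internal argument to compare yours against. Judged on its own terms, your sketch has the right general shape (a gap-preserving reduction from bounded-occurrence MAX-3SAT, clause types for the learner, literal actions for the optimizer, learner utility that makes each clause-type pick the literal of largest marginal mass), but the step you yourself flag as the main obstacle is exactly where the proof lives, and the fixes you propose do not go through as stated. In a Bayesian (polytope) game, $u_O$ must be a bilinear function of $(x,y)$. A penalty proportional to $\min(p_i^{+},p_i^{-})$ is not linear in $y$, and ``credit only when the selected literal strictly dominates its negation in marginal mass'' is a non-linear indicator of $y$; neither can simply be added to $u_O$. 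Any such consistency check has to be routed through the \emph{learner's} best-response structure (extra learner types/gadgets whose best responses reveal or punish inconsistency), and designing that gadget while preserving a constant multiplicative gap is the substantive content of the hardness proof, which your proposal leaves open.

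There is also a quantitative soundness problem with the natural bilinear implementation of your ``award $1/M$ per satisfied clause'' payoff. The only bilinear way to grant that credit is something like $u_O(x,y)\propto \sum_j \sum_{\ell \in C_j} x_{j,\ell}\, y_\ell$, so the optimizer's reward per clause scales with the marginal mass $y_\ell$ on the chosen literal. Nothing in your construction forces the optimizer to spread mass as $1/N$ per variable: concentrating the commitment on a few frequently occurring literals can yield more bilinear credit than the intended encoding of a (nearly) satisfying assignment, so completeness and soundness are no longer aligned under a common normalization, and the PCP gap does not transfer ``directly.'' Fixing this requires redesigning the payoffs (or adding learner types) so that the optimizer's achievable value is genuinely controlled by the best Boolean assignment, not by fractional or concentrated commitments — a missing idea rather than bookkeeping.
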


Lemma~\ref{lem:apx-hardness} almost immediately rules out the possibility of weak optimization oracles for $\cM_{NPSR}$, since by Lemma~\ref{lem:menu_char}, the Stackelberg value $\Stack(G, u_{O})$ is simply the maximum value $u_O$ takes over the menu $\cM_{NPSR}$. Since most weak convex oracles are equivalent (given mild assumptions), this allows us to rule out the existence of any algorithm that can efficiently compute $\CorrDist(\csp)$.

\begin{theorem}\label{thm:hardness}
Define a \emph{profile swap distance oracle} to be an algorithm that takes as input a Bayesian game $G$ (with $\learnset = \Delta_{m}^{c_L}$ and $\optset = \Delta_{n}^{c_O}$), a CSP $\csp \in \learnset \otimes \optset$, and a precision parameter $\delta$, and returns a $\delta$-additive approximation to $\CorrDist(\csp)$. If $P \neq NP$, then there does not exist a profile swap distance oracle that runs in time $\poly(n, m, c_L, c_O, 1/\delta)$.
\end{theorem}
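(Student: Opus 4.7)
The plan is to reduce from the APX-hardness of Stackelberg value computation in Bayesian games (Lemma~\ref{lem:apx-hardness}). The key observation comes from Lemma~\ref{lem:menu_char}, which gives $\Stack(G, u_O) = \max_{\csp \in \cM_{NPSR}} u_O(\csp)$; hence computing the Stackelberg value is equivalent to maximizing the linear functional $u_O$ over the convex set $\cM_{NPSR}$. My strategy is to show that a profile swap distance oracle yields an efficient algorithm for this linear optimization problem, and then calibrate precision against the APX gap of Lemma~\ref{lem:apx-hardness}.

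First I would argue that a profile swap distance oracle is precisely a weak distance oracle for $\cM_{NPSR}$ in the ambient space $\learnset \otimes \optset$. The standard next step is to upgrade this into a weak separation oracle for $\cM_{NPSR}$: given a query $\csp \notin \cM_{NPSR}$, one can locate an approximate nearest point $\csp^{*} \in \cM_{NPSR}$ by binary searching along the ray from $\csp$ in various directions (using only $O(\log(1/\delta))$ distance queries per coordinate), and then return the hyperplane normal to $\csp - \csp^{*}$, which separates $\csp$ from $\cM_{NPSR}$ up to additive error polynomial in $\delta$. Next I would invoke the Gr\"otschel--Lov\'asz--Schrijver equivalence via the ellipsoid method, which converts a weak separation oracle for a well-bounded convex body into a weak optimization oracle in polynomial time. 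Applied to the linear functional $u_O$, this yields a polynomial-time algorithm that approximates $\max_{\csp \in \cM_{NPSR}} u_O(\csp) = \Stack(G, u_O)$ to additive error $\delta$.

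To conclude, I would instantiate this with $\delta$ smaller than the APX-hardness gap $\eps V$ from Lemma~\ref{lem:apx-hardness}, so that the resulting weak optimization oracle distinguishes $\Stack(G, u_O) \leq (1-\eps) V$ from $\Stack(G, u_O) \geq V$. Since $V$ and $1/\eps$ are polynomial in the input size, $\delta$ can be too, and the overall reduction runs in time $\poly(n, m, c_L, c_O, 1/\delta)$, contradicting $P \neq NP$.

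The main obstacle is ensuring that $\cM_{NPSR}$ satisfies the well-boundedness conditions required for the Gr\"otschel--Lov\'asz--Schrijver machinery: it must be contained in a ball of known polynomial radius (trivial, since $\learnset \otimes \optset$ lies in the unit ball) \emph{and} contain a ball of inverse-polynomial radius around an explicitly computable center. The latter can fail when $\cM_{NPSR}$ is lower-dimensional in its ambient space, e.g.\ for degenerate games. I would address this by either (i) restricting throughout to the affine hull of $\cM_{NPSR}$, which is computable from the affine hulls of $\learnset$ and $\optset$ and in which $\cM_{NPSR}$ is full-dimensional under the non-degeneracy assumption of Section~2.1, or (ii) slightly perturbing the hard instance from Lemma~\ref{lem:apx-hardness} by appending dummy types/actions that guarantee an inner ball of inverse-polynomial radius without shifting the Stackelberg value by more than $\eps V/2$, preserving the APX gap.
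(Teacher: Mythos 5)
Your overall route is the same as the paper's: use Lemma~\ref{lem:menu_char} to identify $\Stack(G,u_O)$ with $\max_{\csp \in \cM_{NPSR}} u_O(\csp)$, convert the profile swap distance oracle into a weak separation oracle for $\cM_{NPSR}$, pass from weak separation to weak optimization via the ellipsoid machinery, and then contradict the APX-hardness of Lemma~\ref{lem:apx-hardness}; your calibration of $\delta$ against the hardness gap also matches the paper.

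The one step where your sketch has a real gap is the conversion from the distance oracle to a separation oracle. Locating an approximate nearest point of $\cM_{NPSR}$ by ``binary searching along the ray from $\csp$ in various directions'' with $O(\log(1/\delta))$ queries \emph{per coordinate} does not work as described: the projection of $\csp$ onto $\cM_{NPSR}$ is not obtained coordinate-wise, and a distance oracle by itself does not reveal the direction in which the body lies, so one would have to search over directions in a $d_L d_O$-dimensional space and control the error quantitatively --- that is exactly the nontrivial content of this step. The paper handles it by noting that $\CorrDist(\cdot)$ is a $1$-Lipschitz convex function for which your oracle is a weak evaluation oracle, and invoking the black-box conversion of weak evaluation oracles into weak subgradient oracles (Lemma 20 of \cite{lee2018efficient}); the returned affine minorant $\alpha + c^{\top}(\csp' - \csp) \le \CorrDist(\csp') + \delta$ then directly gives a halfspace containing $\cM_{NPSR}$ (where $\CorrDist$ vanishes) while excluding $\csp$, unless $\alpha - \delta \le 0$, in which case $\CorrDist(\csp) \le 2\delta$ and one may certify approximate membership --- precisely a weak separation oracle. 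Your intuition that the nearest point yields the separating normal (i.e., the gradient of the distance function) is correct, but making it rigorous requires some such evaluation-to-subgradient conversion rather than coordinate-wise binary search. On the positive side, your explicit attention to the well-boundedness (inner ball) hypothesis needed for the Gr\"otschel--Lov\'asz--Schrijver equivalence is a legitimate point the paper passes over with ``most weak convex oracles are equivalent (given mild assumptions)'', and either of your proposed remedies would serve.
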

\begin{proof}
Assume to the contrary that such an efficient profile swap distance oracle exists. We first argue that we can use such an algorithm to construct a weak separation oracle for $\cM_{NPSR}$ that also runs in time $\poly(n, m, c_L, c_O, 1/\delta)$.

This follows as a consequence of black-box convex oracle reductions established in \cite{lee2018efficient}. In particular, our profile swap distance oracle is a weak evaluation oracle for the $1$-Lipschitz, convex function $\CorrDist(\csp)$. By Lemma 20 of \cite{lee2018efficient}, it is possible (with polynomially many calls to the original oracle) to extend this to a weak subgradient oracle for the same function. The weak subgradient oracle (when run with precision $\delta$ on input $\csp$) returns a real number $\alpha$ satisfying $|\alpha - \CorrDist(\csp)| \leq \delta$ and a vector $c \in \Rset^{d_Ld_O}$ with the property that $\alpha + c^{T}(\csp' - \csp) \leq \CorrDist(\csp') + \delta$ for all $\csp' \in \learnset \otimes \optset$. 

Now, consider the halfspace given by the inequality $(\alpha - \delta) + c^{T}(\csp' - \csp) \leq 0$. We claim that this is a halfspace that $\delta$-weakly separates $\cM_{NPSR}$ from $\phi$. In particular, every $\csp' \in \cM_{NPSR}$ satisfies $\CorrDist(\csp') = 0$, and thus satisfies this inequality. However, unless $\alpha - \delta \leq 0$, the CSP $\csp' = \csp$ does not satisfy this inequality. But if $\alpha - \delta \leq 0$, then since $|\alpha - \CorrDist(\csp)| \leq \delta$, we must have that $\CorrDist(\csp) \leq 2\delta$. We have therefore either provided a hyperplane separating $\cM_{NPSR}$ from $\phi$, or certified that $\phi$ is distance at most $2\delta$ from $\cM_{NPSR}$, completing the construction of our weak separation oracle.

With a weak separation oracle, we can construct a weak optimization oracle, and approximately optimize any linear function over $\cM_{NPSR}$. But by Lemma~\ref{lem:menu_char}, the optimal value of the linear function $u_{O}$ over $\cM_{NPSR}$ is $\Stack(G, u_O)$. Thus, this oracle would allow us to distinguish between the two cases in Lemma~\ref{lem:apx-hardness} in polynomial time, contradicting the APX-hardness of the result unless $P = NP$.
\end{proof}

\subsection{Efficient learning algorithms via semi-separation}\label{sec:alg-semisep}

Nevertheless, despite the hardness of computing profile swap regret (and thus of running the algorithm of Theorem \ref{thm:upper-approachability}), we will now show that it is still possible to design efficient learning algorithms $\cA$ that obtain profile swap regret guarantees of the form $\poly(d_L, d_O)\sqrt{T}$. 

To describe the idea behind semi-separation, it will be convenient to express the problem of Blackwell approachability in slightly different language. Previously, we described an approachability instance as being parameterized by (in addition to the two action sets of the learner and adversary) a vector-valued bilinear payoff function $v$ and a target set $\cS$, with the idea that the learner is trying to steer the average vector-valued payoff to be close to $\cS$. In learning-theoretic contexts, it is frequently mathematically more convenient to work with another formulation of approachability that we will refer to as \emph{orthant-approachability}. An \emph{orthant-approachability} instance is parameterized by a single \emph{convex set} $\cU$ of \emph{one-dimensional} biaffine payoff functions $u: \learnset \times \optset \rightarrow \Rset$. The goal of the learner is now to minimize the worst-case average value of $u(x_t, y_t)$ over \emph{all} biaffine functions $u \in \cU$; that is, they wish to minimize the approachability loss given by $\AppLoss(\bx, \by) = \max_{u \in \cU} \frac{1}{T}\sum_{t=1}^{T} u(x_t, y_t)$.

In particular, the learner would like to guarantee that $\AppLoss(\bx, \by)$ grows sublinearly in $T$. In order for this to be possible, $\cU$ must contain only \emph{response-satisfiable} functions $u$. In the language of orthant-approachability, this means that for any $y \in \cY$, there exists an $x \in \cX$ such that $u(x, y) \leq 0$ for all $u \in \cU$ (we call this ``orthant''-approachability since this can be thought of approaching the negative orthant for the infinite-dimensional bilinear payoff function whose entries are given by $u(x, y)$ for different $u \in \cU$).

It can be shown that orthant-approachability and our previous formulation of approachability are essentially equivalent (in fact, orthant-approachability is slightly more general, in that it can capture other distance metrics and some infinite-dimensional bilinear payoffs). In particular, we show below (following an argument in \cite{dann2024rate}) that any instance of our previous formulation of approachability can be equivalently expressed as an orthant-approachability problem. 

\begin{lemma}\label{lem:orthant-approach}
Let $v: \learnset \times \optset \rightarrow \Rset^{d}$ be a vector-valued bilinear payoff function that is response-satisfiable with respect to the target set $\cS \subseteq \Rset^{d}$. There exists a convex subset $\cU$ of response-satisfiable biaffine functions $u: \learnset \times \optset \rightarrow \Rset$ such that, for any sequences $x_t \in \learnset$ and $y_t \in \optset$,

$$\max_{u \in \cU} \frac{1}{T}\sum_{t=1}^{T} u(x_t, y_t) = \mathrm{dist}\left(\frac{1}{T}\sum_{i=1}^{t} v(x_t, y_t), \cS\right)$$
\end{lemma}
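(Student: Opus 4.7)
The plan rests on the standard support-function representation of the Euclidean distance to a convex set. Let $h_{\cS}(w) := \sup_{s \in \cS}\langle w, s\rangle$ denote the support function of $\cS$. The first step is to establish the duality
\[
\mathrm{dist}(z, \cS) \;=\; \max_{\|w\|_2 \leq 1}\bigl(\langle w, z\rangle - h_{\cS}(w)\bigr) \qquad \text{for every } z \in \Rset^d,
\]
by writing $\|z - s\|_2 = \max_{\|w\|_2 \leq 1}\langle w, z - s\rangle$ and applying Sion's minimax theorem to swap $\inf_{s\in\cS}$ with the outer $\max$ (the inner product is linear in each argument and the unit ball is convex and compact). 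When $z \in \cS$ both sides vanish (take $w = 0$); when $z \notin \cS$ the swap is standard.

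Given this identity, the candidate family of payoff functions essentially writes itself. For each $w \in \Rset^d$ with $\|w\|_2 \leq 1$, define the biaffine function
\[
u_w(x, y) \;:=\; \langle w, v(x, y)\rangle \;-\; h_{\cS}(w),
\]
which is biaffine because $v$ is bilinear and $h_{\cS}(w)$ is a constant once $w$ is fixed. I would then take $\cU := \conv\{u_w : \|w\|_2 \leq 1\}$, the convex hull inside the vector space of biaffine functions $\learnset \times \optset \to \Rset$; this is convex by construction.

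Two verifications then complete the proof. For response-satisfiability, fix $y \in \optset$ and use response-satisfiability of $v$ to pick $x \in \learnset$ with $v(x, y) \in \cS$; then $\langle w, v(x, y)\rangle \leq h_{\cS}(w)$, so $u_w(x, y) \leq 0$ for every $w$ in the unit ball, and since this inequality is preserved under convex combinations, the \emph{same} $x$ witnesses $u(x, y) \leq 0$ for every $u \in \cU$. For the loss identity, the functional $u \mapsto \frac{1}{T}\sum_t u(x_t, y_t)$ is linear in $u$, so its supremum over the convex hull $\cU$ equals its supremum over the generators, and
\[
\max_{\|w\|_2 \leq 1}\left(\left\langle w,\, \tfrac{1}{T}\sum_{t=1}^{T} v(x_t, y_t)\right\rangle - h_{\cS}(w)\right) \;=\; \mathrm{dist}\!\left(\tfrac{1}{T}\sum_{t=1}^{T} v(x_t, y_t),\, \cS\right)
\]
by the duality established in the first step. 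The main technical ingredient is the minimax swap used to derive the support-function identity, but this is classical; everything else reduces to linearity of the inner product and the fact that a linear functional on a convex hull is maximized at a generator.
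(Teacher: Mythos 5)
Your proof is essentially the same as the paper's: both define $u_w(x,y)=\langle w, v(x,y)\rangle - h_{\cS}(w)$ and take $\cU$ to be the convex hull, then use the support-function representation of Euclidean distance. One small improvement in your version: by indexing over the closed unit ball $\|w\|_2\le 1$ (which includes $w=0$, i.e., the zero function) rather than just unit vectors, you cleanly handle the edge case where $\frac{1}{T}\sum_t v(x_t,y_t)$ lies in the interior of $\cS$ — there the distance is $0$ but the supremum over unit directions alone could be strictly negative.
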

\begin{proof}
For each unit vector $h \in \Rset^{d}$, consider the biaffine function $u_{h}(x, y) = \langle v(x, y), h\rangle - \sigma_{h}(\cS)$, where $\sigma_{h}(\cS) = \max_{z \in \cS}\langle h, z\rangle$. Let $\cU$ be the convex hull of all such functions $u_{h}$. 

If $\frac{1}{T}\sum_{t=1}^{T} u_{h}(x_t, y_t) \geq R$, then $\left\langle \frac{1}{T}\sum_{t=1}^{T} v(x_t, y_t), h \right\rangle - \sigma_{h}(\cS) \geq R$, and therefore $\mathrm{dist}\left(\frac{1}{T}\sum_{i=1}^{t} v(x_t, y_t), \cS\right)$ is at least $R$. On the other hand, if $\mathrm{dist}\left(\frac{1}{T}\sum_{i=1}^{t} v(x_t, y_t), \cS\right)$ is at least $R$, then there must be a direction $h$ in which $\left\langle \frac{1}{T}\sum_{t=1}^{T} v(x_t, y_t), h \right\rangle - \sigma_{h}(\cS) \geq R$, and therefore we have $\frac{1}{T}\sum_{t=1}^{T} u_{h}(x_t, y_t) \geq R$.
\end{proof}

As with classical Blackwell approachability, there exist efficient learning algorithms for minimizing approachability loss in orthant-approachability -- however, just as the earlier algorithms required separation oracles for the target set $\cS$, the algorithms for orthant-approachability require separation oracles for the set of payoff functions $\cU$. And likewise, just as Theorem~\ref{thm:hardness} precludes the existence of efficient separation oracles for the no-profile-swap-regret menu $\cM_{NPSR}$, it also precludes the existence of separation oracles for the corresponding convex set $\cU$.

\cite{daskalakis2024efficient} bypass a similar obstacle for minimizing linear swap regret via a technique they call \emph{semi-separation}. We introduce a version of this same technique here for orthant-approachability. To motivate this technique, note that if we want to achieve orthant-approachability with respect to a specific (perhaps intractible) set of biaffine functions $\cU$, it suffices to achieve orthant-approachability with respect to any superset $\cU'$ containing $\cU$ (since the maximum over all $u'$ in $\cU'$ will always be at least as large as the maximum over all $u$ in $\cU$). In particular, if we could find a superset $\cU'$ of $\cU$ with more tractable convex oracles, then we could simply run orthant-approachability over that superset.

Unfortunately, it is not too hard to show that the minimality of the menu $\cM_{NPSR}$ implies that no superset $\cU'$ of $\cU$ can contain only response-satisfiable functions. The idea of \cite{daskalakis2024efficient} is to run an approachability algorithm on some tractable superset $\cU'$ anyway, despite $\cU'$ not being an approachable set. The approachability algorithm will either work without issue (in which case it will provide our desired guarantee on the approachability loss), or at some point the algorithm will identify a $u' \in \cU'$ that is not response-satisfiable, and thus not in $\cU$. If we can then construct a separating hyperplane that separates $u'$ from $\cU$, we can use this to refine  our superset $\cU'$. This algorithm -- which takes a non-response-satisfiable $u$, and returns a hyperplane separating $u$ from $\cU$ -- is what we refer to as a semi-separation oracle (unlike a standard separation oracle, it does not work for all $u \not \in \cU$, but only non-response-satisfiable $u$).

% Since we can achieve our approachability guarantees as long as all the functions $u' \in \cU'$ are response-satisfiable, it is natural to take $\cU'$ to be the set of all (bounded) response-satisfiable functions. 

% This choice of $\cU'$ has the advantage that we can, for example, efficiently check whether a specific function $u'$ is response-satisfiable (this amounts to solving an explicit linear program). However, $\cU'$ has one major disadvantage, which is that it is not a convex set -- it is possible for the convex combination of two response-satisfiable functions to not be response-satisfiable, ... .

Formally, a \emph{semi-separation oracle} for a set of response-satisfiable payoff functions $\cU$ is an algorithm which takes an arbitrary bi-affine function $u: \learnset \times \optset \rightarrow \Rset$ and either (i) returns that the function $u$ is response-satisfiable, or (ii) returns a hyperplane separating $u$ from $\cU$. The following theorem states that efficient semi-separation oracles suffice for performing orthant-approachability efficiently. The proof closely follows the proof of \cite{daskalakis2024efficient} for the case of linear swap functions, and is deferred to Appendix \ref{app:semi-sep}.

% \begin{enumerate}
%     \item 
%     \item 
% \end{enumerate}

\begin{theorem}[Approachability with semi-separation oracles]\label{thm:approach-semisep}
Consider an orthant-approachability instance $(\cX, \cY, \cU)$ where $\cX \subseteq \cB(D_{\cX})$, $\cY \subseteq \cB(D_{\cY})$, and $\cB(u_{0}, \rho) \subseteq \cU \subseteq \cB(D)$ (for some known $u_{0}$ and radius $\rho$). If we have access to efficient separation oracles for $\cX$ and $\cY$, and an efficient semi-separation oracle for the set $\cU$, then Algorithm~\ref{algo:semisep-approach} has the guarantee that

$$\max_{u \in \cU} \frac{1}{T}\sum_{t=1}^{T} u\left(x_t, y_t \right) \leq O\left(\frac{D_{\cX}D_{\cY}D}{\sqrt{T}}\right),$$

\noindent
and runs in time $\poly(D_{\cX}, D_{\cY}, D, \rho^{-1}, \dim(\cU), T)$.
\end{theorem}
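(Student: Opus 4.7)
The plan is to reduce the orthant-approachability problem to a no-regret learning problem over the set $\cU$, extending the standard Abernethy--Bartlett--Hazan reduction from approachability to regret minimization. I would maintain a no-regret algorithm (e.g.\ projected online gradient descent) producing a sequence of iterates $u_1,u_2,\dots\in\cU$. At the start of round $t$, the learner uses $u_t$ to pick $x_t \in \cX$ satisfying the response-satisfiability condition $u_t(x_t,y)\le 0$ for every $y\in\cY$. After the adversary reveals $y_t$, the algorithm receives the gradient of the linear loss $u\mapsto -u(x_t,y_t)$, which is just the vector $-\,x_t\otimes y_t$ viewed in the ambient space of $\cU$, and updates to produce $u_{t+1}$.

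\textbf{Regret analysis.} The per-round loss $u\mapsto u(x_t,y_t)$ is linear in $u$ and, using $\|x_t\|\le D_{\cX}$ and $\|y_t\|\le D_{\cY}$, has gradient norm bounded by $D_{\cX}D_{\cY}$. With $\cU\subseteq\cB(D)$, standard projected OGD analysis yields
\[
\sum_{t=1}^T u(x_t,y_t) - \sum_{t=1}^T u_t(x_t,y_t) \;\le\; O\!\left(D_{\cX}D_{\cY}D\sqrt{T}\right)
\]
for every $u\in\cU$. Since each $u_t$ is response-satisfiable against $y_t$, we have $u_t(x_t,y_t)\le 0$, so $\max_{u\in\cU}\sum_{t=1}^{T} u(x_t,y_t)\le O(D_{\cX}D_{\cY}D\sqrt{T})$, which is the claimed bound after dividing by $T$.

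\textbf{Implementing each round efficiently via semi-separation.} The main obstacle is carrying out two ingredients per round using only the oracles available. First, after each gradient step we must project into $\cU$, but we only have a semi-separation oracle. Following the cutting-plane idea of \cite{daskalakis2024efficient} in the linear-swap setting, I would maintain an explicit polytopal outer relaxation $\cU'_t\supseteq\cU$ (accumulated from halfspaces returned so far), project the unconstrained iterate onto $\cU'_t$, and query the semi-separation oracle on the projected point $\hat u_t$: if the oracle declares $\hat u_t$ response-satisfiable, we accept $u_t:=\hat u_t$; otherwise the oracle returns a hyperplane separating $\hat u_t$ from $\cU$, which we append to $\cU'_t$ and re-project, iterating until acceptance. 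Because $\cU$ contains a ball of radius $\rho$ around $u_0$, a standard cutting-plane argument bounds the number of refinements per round by $\poly(\dim(\cU),\log(D/\rho))$. Second, given an accepted $u_t$, we must actually produce $x_t\in\cX$ with $\max_{y\in\cY}u_t(x_t,y)\le 0$: the inner maximum over $y\in\cY$ is linear and can be evaluated via the separation oracle for $\cY$, which yields a separation oracle for the convex feasibility region $\{x\in\cX:\max_{y\in\cY}u_t(x,y)\le 0\}$, and this feasibility problem is then solved by the ellipsoid method combined with the separation oracle for $\cX$. The delicate analytic point is that the no-regret guarantee is now being run with projections onto the supersets $\cU'_t$ of $\cU$ rather than onto $\cU$ itself; but since every comparator $u^{\star}\in\cU$ lies in $\cU'_t$, projection onto $\cU'_t$ is still non-expansive with respect to $\|u_t-u^{\star}\|$, and the usual OGD potential argument goes through with no loss.
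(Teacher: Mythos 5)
Your high-level structure mirrors the paper's — reduce orthant-approachability to online linear optimization over $\cU$ via the Abernethy--Bartlett--Hazan reduction, use the semi-separation oracle to implement an (approximate) projection onto a convex superset of $\cU$, observe that projected gradient descent onto changing supersets still satisfies the usual potential inequality so long as the comparator lies in every projection target, and solve a linear feasibility problem each round to obtain $x_t$ with $\sup_{y\in\cY}u_t(x_t,y)\le 0$. That part is correct and matches the paper's ``Shell Gradient Descent'' ingredient.

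The gap is in the per-round projection subroutine. You describe the loop as: project the unconstrained iterate onto the current outer relaxation $\cU'_t$, query the semi-separation oracle at the projected point $\hat u_t$, and if it is not response-satisfiable, append the returned halfspace and re-project. You then assert that ``a standard cutting-plane argument'' bounds the number of refinements by $\poly(\dim(\cU),\log(D/\rho))$. This is not justified by what you wrote, and the claim fails as stated. The semi-separation oracle is only required to return \emph{some} halfspace separating the queried point from $\cU$ — not from $\cU_{RS}$, and not a deep cut. An adversarial oracle can return a halfspace tangent to $\cU$ that barely excludes $\hat u_t$, so the re-projected point moves by an arbitrarily small amount, and the loop makes no measurable volume progress. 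Standard cutting-plane termination arguments rely on querying a \emph{center} (center of gravity, ellipsoid center, volumetric center) so that each cut eliminates a constant fraction of the feasible region; querying a Euclidean projection onto the current polytope gives no such guarantee, since the projection can lie arbitrarily close to a face and a cut through it can leave the set essentially unchanged.

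This is exactly the difficulty that the ``Shell Ellipsoid'' and ``Shell Projection'' machinery of \cite{daskalakis2024efficient} (invoked in Theorem~\ref{thm:dffps-proj} of this paper) is built to resolve. Shell Ellipsoid queries ellipsoid centers of a controlled set $\cK$ (so each negative answer does shave a constant fraction of volume), and either finds a response-satisfiable point in $\cK$ or certifies that $\cK$ is entirely separable from $\cU$. Shell Projection then wraps this in a search over balls of increasing radius around the pre-projection iterate to locate an $\varepsilon$-approximate projection onto \emph{some} superset $\widetilde{\cU}\supseteq\cU$ that is also response-satisfiable; the accepted point is not the exact projection onto $\cU'_t$, and the analysis has to absorb the extra $\varepsilon$ per round (the paper takes $\varepsilon=\Theta(T^{-1/2})$). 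So to make your proof go through, you would need to replace the naive ``project-then-cut'' inner loop with a center-point cutting-plane method, argue explicitly that its run-time is polynomial using the $\cB(u_0,\rho)\subseteq\cU$ inner ball as a volume lower bound, and then account in the OGD analysis for the fact that the point you actually output is only an approximate projection onto a shell set containing $\cU$ rather than the exact projection onto a single accumulated relaxation.
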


In the remainder of this section, we will show that we can construct a semi-separation oracle for the approachability problem corresponding to minimizing profile swap regret, thus allowing us to efficiently minimize it via Theorem~\ref{thm:approach-semisep}. In particular, consider the set $\cU$ containing all bi-affine functions $u: \learnset \times \optset \rightarrow \Rset$ of the form $u(x, y) = \langle h, x \otimes y \rangle - b$ that satisfy: i. $\norm{h} \leq 1$, ii. $|b| \leq 1$, and iii. $u(x, y) \leq 0$ for all $x \in \BR_{L}(y)$. Note that this is a convex set of bi-affine functions, is response-satisfiable (via the third constraint), and for any CSP $\csp$ satisfies $\max_{u \in \cU} u(\csp) = \CorrDist(\csp)$ (in particular, it is equivalent to the set constructed in the proof of Lemma~\ref{lem:orthant-approach}). We produce an efficient semi-separation oracle for this choice of $\cU$.

\begin{lemma}\label{lem:semi-sep-exist}
Given efficient separation oracles for the sets $\learnset$ and $\optset$, it is possible to construct an efficient semi-separation oracle for the set $\cU$ defined above.
\end{lemma}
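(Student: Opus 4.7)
The plan is to reduce the semi-separation check to a single bilinear saddle-point computation followed by a linear programming step. Given a candidate biaffine function $u(x,y) = \langle h, x \otimes y\rangle - b$, the oracle first computes
\[
V^* \;=\; \min_{x \in \learnset}\,\max_{y \in \optset}\, u(x, y),
\]
together with a minimax pair $(x^*, y^*)$ achieving $u(x^*, y^*) = V^*$. Because $u$ is bilinear and $\learnset, \optset$ are bounded convex sets equipped with separation oracles, this saddle-point problem is efficiently solvable (e.g.\ via the ellipsoid method applied to an LP reformulation, or via any standard bilinear saddle-point routine such as mirror-prox). By Sion's minimax theorem, $(x^*, y^*)$ is a genuine saddle point, so $u(x^*, y) \le V^*$ for every $y \in \optset$ and $u(x, y^*) \ge V^*$ for every $x \in \learnset$.

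If $V^* \le 0$, then $x^*$ is a single response that makes $u(x^*, y) \le 0$ for all $y \in \optset$, so $u$ is response-satisfiable and the oracle returns YES. Otherwise $V^* > 0$, and the saddle-point property gives $u(x, y^*) \ge V^* > 0$ for every $x \in \learnset$. The oracle then computes any best response $x^\dagger \in \BR_{L}(y^*)$ by solving the LP $\max_{x \in \learnset} u_L(x, y^*)$ (efficient using the separation oracle for $\learnset$) and outputs the linear functional $L : \phi \mapsto \phi(x^\dagger, y^*)$ as the separating hyperplane, along with the inequality $L(\phi) \le 0$. In the $(h', b')$ parameterization $\phi(x,y) = \langle h', x \otimes y\rangle - b'$, this is simply the linear inequality $\langle h', x^\dagger \otimes y^*\rangle - b' \le 0$. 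Condition (iii) defining $\cU$ forces $\phi(x^\dagger, y^*) \le 0$ for every $\phi \in \cU$, while $u(x^\dagger, y^*) > 0$ by construction, so $L$ strictly separates $u$ from $\cU$.

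\emph{Main obstacle.} Theorem~\ref{thm:hardness} precludes an efficient genuine separation oracle for $\cU$, so the argument cannot hope to directly certify $u \notin \cU$ by optimizing over $\cU$ or computing its distance to $u$. The observation that rescues the semi-separation version is that any failure of response-satisfiability is already witnessed by a minimax pair, while the defining condition (iii) of $\cU$ is phrased exactly in terms of best-response pairs $(x, y)$. Consequently, the witness pair $(x^\dagger, y^*)$ produced by the saddle-point step doubles as a rank-one linear-functional certificate of $u \notin \cU$, and no global description of $\cU$ is required. The only nontrivial computational ingredient is the bilinear saddle-point solve, which is standard convex optimization and can be carried out to any desired precision $\eps$ in time polynomial in $1/\eps$, $d_L$, and $d_O$.
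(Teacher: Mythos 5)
Your proposal is correct and follows essentially the same route as the paper's proof: compute the minimax value $V = \min_{x \in \learnset}\max_{y \in \optset} u(x,y)$, declare response-satisfiability when $V \le 0$, and otherwise use the maximin strategy $y^*$ together with any $x^\dagger \in \BR_L(y^*)$ to obtain the separating linear constraint $u(x^\dagger, y^*) \le 0$, valid for all of $\cU$ by its defining condition (iii). Your additional remarks about Sion's theorem, solving the saddle point via LP/ellipsoid with the given separation oracles, and the precision of the solve are harmless elaborations of what the paper states more briefly.
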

\begin{proof}
Consider an arbitrary bi-affine function $u_{\text{in}}: \learnset \times \optset \rightarrow \Rset$ that our semi-separation oracle receives as input. Compute (e.g., via linear programming) the minimax value $V = \min_{x \in \learnset} \max_{y \in \optset} u_{\text{in}}(x, y)$.

If $V \leq 0$, then this function is response-satisfiable (i.e., there exists an $x^*$ such that $u_{\text{in}}(x^*, y) \leq 0$ for all $y \in \optset$), and we can return that $u$ is response-satisfiable. Otherwise, there exists a $y^* \in \optset$ such that $u_{\text{in}}(x,  y^*) \geq V > 0$ for all $x \in \learnset$, which we can again compute efficiently via a linear program. 

Now, pick an arbitrary $x_{y^*} \in \BR_{L}(y^*)$. Note that, by construction, $u(x_{y^*}, y^*) \leq 0$ for all $u \in \cU$; on the other hand, for our queried $u_{\text{in}}$, we have that $u_{\text{in}}(x_{y^*}, y^*) \geq V > 0$. It follows that the linear constraint $u(x_{y^*}, y^*) \leq 0$ is a linear constraint on bilinear functions $u$ separating $u_{\text{in}}$ from the set $\cU$. 
\end{proof}

From this semi-separation oracle and Theorem~\ref{thm:approach-semisep}, we immediately obtain an efficient algorithm for minimizing profile swap regret. 

\begin{theorem}\label{thm:upper-semi-separation}
Given efficient separation oracles for $\cX$ and $\cY$, there exists a learning algorithm $\cA$ such that $\CorrDist(\mathcal{A}^{T}) = O(\sqrt{T})$ that runs in $\poly(T, d_L, d_O)$ time per iteration.
\end{theorem}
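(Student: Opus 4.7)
The statement is essentially a plug-and-play application of Theorem~\ref{thm:approach-semisep} using the semi-separation oracle supplied by Lemma~\ref{lem:semi-sep-exist}, so the real work lies in verifying that the parameters demanded by Theorem~\ref{thm:approach-semisep} come out $\poly(d_L, d_O)$. I plan to instantiate orthant-approachability with action sets $\cX = \learnset$ and $\cY = \optset$, and with target family $\cU$ taken to be exactly the convex set of bi-affine functions $u(x,y) = \langle h, x \otimes y\rangle - b$ satisfying $\|h\|\leq 1$, $|b|\leq 1$, and $u(x,y)\leq 0$ for all $x \in \BR_{L}(y)$, as defined immediately before Lemma~\ref{lem:semi-sep-exist}. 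As noted there, $\cU$ consists entirely of response-satisfiable functions and satisfies $\max_{u\in\cU} u(\csp) = \CorrDist(\csp)$ for every CSP $\csp$.

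Next I verify the hypotheses of Theorem~\ref{thm:approach-semisep}. Since $\learnset, \optset$ are contained in the unit ball by assumption, $D_\cX = D_\cY = 1$. The set $\cU$ lives in the $(d_L d_O + 1)$-dimensional space of pairs $(h,b)$ and is contained in a ball of radius $D = O(1)$ directly from the norm constraints on $h$ and $b$. For the required interior point I take $u_0$ to be the constant function $u_0 \equiv -1$ (i.e., $h = 0$, $b = 1$); any bi-affine $u$ within distance $\rho = 1/4$ of $u_0$ satisfies $\|h\|\leq 1/4$ and $b \geq 3/4$, so $u(x,y) \leq \|h\|\cdot\|x\|\cdot\|y\| - b \leq 1/4 - 3/4 < 0$ uniformly over $\learnset \times \optset$; such a $u$ therefore lies in $\cU$, and hence $\ball(u_0, 1/4) \subseteq \cU$. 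Finally, Lemma~\ref{lem:semi-sep-exist} supplies the required semi-separation oracle for $\cU$ from the given separation oracles for $\learnset$ and $\optset$.

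Applying Theorem~\ref{thm:approach-semisep} then yields an algorithm $\cA$ whose transcripts satisfy
\[
\max_{u \in \cU}\frac{1}{T}\sum_{t=1}^{T} u(x_t, y_t) \;\leq\; O\!\left(\frac{D_\cX D_\cY D}{\sqrt{T}}\right) \;=\; O\!\left(\frac{1}{\sqrt{T}}\right),
\]
with per-iteration runtime $\poly(D_\cX, D_\cY, D, \rho^{-1}, \dim(\cU), T) = \poly(T, d_L, d_O)$. By bi-affinity, the left-hand side equals $\max_{u\in\cU}u(\csp)$ for the resulting CSP $\csp$, which by construction of $\cU$ equals $\CorrDist(\csp)$. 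Translating into the per-algorithm notation exactly as in the proof of Theorem~\ref{thm:upper-approachability} then delivers $\CorrDist(\cA^T) = O(\sqrt{T})$.

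The only step that is not pure bookkeeping is exhibiting the interior ball $\ball(u_0, \rho) \subseteq \cU$ with $\rho = \Omega(1)$, since Theorem~\ref{thm:approach-semisep}'s runtime depends on $\rho^{-1}$; the constant choice $u_0 \equiv -1$ works cleanly because $\learnset$ and $\optset$ sit inside the unit ball, leaving a constant-width gap in which we can perturb $h$ and $b$ without violating the response-satisfiability constraint. Everything else follows directly by invoking the two earlier results.
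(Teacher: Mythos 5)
Your proposal follows exactly the same route as the paper's proof: instantiate Theorem~\ref{thm:approach-semisep} with the set $\cU$ defined just before Lemma~\ref{lem:semi-sep-exist}, take $D_\cX = D_\cY = 1$ and $D = O(1)$ from the unit-norm assumptions, use Lemma~\ref{lem:semi-sep-exist} for the semi-separation oracle, and translate the orthant-approachability guarantee into $\CorrDist(\cA^T) = O(\sqrt{T})$ via $\max_{u\in\cU}u(\csp) = \CorrDist(\csp)$. All of that matches the paper.

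The one step that fails as written is the interior-ball verification. You center at $u_0 \equiv -1$, i.e.\ $h = 0$, $b = 1$, and claim $\ball(u_0, 1/4) \subseteq \cU$ after checking only the sign condition $u(x,y) \le 0$ (and implicitly $\|h\| \le 1$). But membership in $\cU$ also requires $|b| \le 1$, and your center sits on the boundary of that constraint: every ball of positive radius around $(h,b) = (0,1)$ contains functions with $b > 1$ (e.g.\ $b = 5/4$ at radius $1/4$), which are response-satisfiable but lie outside $\cU$, so the required inclusion $\ball(u_0,\rho) \subseteq \cU$ is false for any $\rho > 0$ at that center. The fix is not to shrink $\rho$ but to move the center into the interior of the $b$-constraint, exactly as the paper does: take $u_0 \equiv -3/4$ (i.e.\ $h=0$, $b=3/4$) with $\rho = 1/4$, so that throughout the ball $\|h\| \le 1/4$, $b \in [1/2, 1]$, and $u(x,y) \le 1/4 - 1/2 < 0$, verifying all three defining constraints of $\cU$. (Alternatively one could enlarge the $|b|$ bound in the definition of $\cU$, but then one must re-check that $\max_{u\in\cU}u(\csp) = \CorrDist(\csp)$ still holds; the center shift is the cleaner patch.) With that correction your argument coincides with the paper's proof.
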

\begin{proof}
Given Lemma~\ref{lem:semi-sep-exist}, it suffices to check the conditions of Theorem~\ref{thm:approach-semisep} for the set $\cU$ defined above. By assumption, our $\learnset$ and $\optset$ are bounded within the unit ball (so we can take $D_{\learnset} = D_{\optset} = 1$). The maximum norm of any element in our set $\cU$ is at most $\sqrt{||h||^{2} + |b|^2} \leq \sqrt{2}$, so we can take $D = \sqrt{2}$. Finally, note that any $u$ with $||h|| \leq 1/2$ and $b \geq 1/2$ will satisfy $u(x, y) \leq 0$ for any $x \in \learnset$ and $y \in \optset$; we can therefore take $u_{0}(x, y) = -3/4$ (i.e., the $u$ specified by $h = 0$ and $b = 3/4$) and $\rho = 1/4$.
\end{proof}

% \jon{todo: add corollaries for Bayesian and extensive-form games?}

\section{Game-Agnostic Learning and Polytope Swap Regret}\label{sec:game-agnostic}

In our discussion of online learning thus far, we have assumed that our learning algorithms $\cA$ are tailored to a specific polytope game $G$. By this, we mean that our algorithms operate under the assumption that the learner can see the sequence of actions $y_t \in \optset$ played by the optimizer so far and that the learner can compute the game-specific payoff function $u_L(x, y)$. These assumptions are typical for applications of online learning to games (for example, when such algorithms are used for equilibrium computation in extensive-form games).

However, many algorithms in the field of adversarial online learning are designed so that they require only the counterfactual rewards (alternatively, losses) faced by the learner over time. That is, instead of being told the mixed strategy $y_t$ the optimizer played in round $t$, it is enough for these learning algorithms to have access to the linear function $r_t(x)$ describing what utility the learner would have received if they had played action $x \in \learnset$ that round. Such algorithms can be used by a learner to play any repeated game where the learner has action set $\learnset$, regardless of their specific payoff function $u_L$ or the optimizer's action space $\optset$. In this section, we explore what happens when we restrict ourselves to  game-agnostic learning algorithms -- we will show that profile swap regret is an inherently game-dependent notion, and if we wish to minimize profile swap regret with a game-agnostic learning algorithm, we must actually minimize polytope swap regret.

Formally, a \emph{game-agnostic learning algorithm} $\cA$ is defined in the same way as our earlier (game-dependent) learning algorithms, with the only difference being that in each horizon-dependent learning algorithm, $\cA_{t}^{T}$ now maps a sequence of affine linear \emph{reward functions} $r_1, r_2, \dots, r_{t-1}$ to the action $x_{t} \in \learnset$ the learner will take at time $t$. Each $r_t$ is an affine linear function  sending $\learnset$ to $[-1, 1]$; we will write $\learnset^{*}$ to denote the set of such functions (so, each $r_{t}$ belongs to $\learnset^*$). 

When used to play a specific polytope game $G$, the function $r_t$ is constructed via $r_{t}(x) = u_L(x, y_t)$, where $y_t \in \optset$ is the action of the optimizer at round $t$. In particular, any transcript $(\bx, \by)$ of a polytope game $G$ corresponds to a \emph{game-agnostic transcript} $(\bx, \br)$ via this mapping. The game-agnostic transcript is sufficient to compute many quantities relevant to the learner, including their total utility over the course of the game and several variants of regret, as shown in the following lemma.

\begin{lemma}\label{lem:game-agnostic-regret}
\sloppy{Let $(\bx, \by)$ be the transcript of some repeated polytope game $G$. It is possible to compute external regret $\Reg(\bx, \by)$, linear swap regret $\LinSwap(\bx, \by)$, and polytope swap regret $\PolySwap(\bx, \by)$ from the game-agnostic transcript $(\bx, \br)$ corresponding to $(\bx, \by)$ (without any knowledge of $G$ or $u_L$ aside from the learner's action set $\learnset$).}
\end{lemma}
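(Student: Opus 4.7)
The plan is to check each of the three regret notions in turn and show that, when $r_t$ is substituted for $u_L(\cdot, y_t)$, every quantity appearing in the definition becomes a functional of $(\bx, \br)$ and $\learnset$ alone. The common observation is that $u_L(\cdot, y_t)$ enters the definitions of $\Reg$, $\LinSwap$, and $\PolySwap$ only through evaluations $u_L(z, y_t)$ for various $z \in \learnset$, and each such evaluation equals $r_t(z)$ by construction.

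First I would handle external regret, which is essentially immediate: $\Reg(\bx, \by) = \max_{x^{*} \in \learnset} \sum_t r_t(x^{*}) - \sum_t r_t(x_t)$, a quantity that only requires the ability to optimize an affine function over $\learnset$. Next, for linear swap regret, recall $\Psi$ is defined purely in terms of $\learnset$ (it is the set of affine linear maps $\learnset \to \learnset$), and the objective $\sum_t u_L(\psi(x_t), y_t) - \sum_t u_L(x_t, y_t)$ becomes $\sum_t r_t(\psi(x_t)) - \sum_t r_t(x_t)$, which is again manifestly a function of $(\bx, \br)$ and $\learnset$.

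The polytope swap regret case requires slightly more care, but the same reduction applies. The set of vertices $\learnvert$ is determined by $\learnset$ alone, so the set of candidate decompositions $\xV_t \in \Delta(\learnvert)$ of each $x_t$ and the family of swap functions $\pi : \learnvert \to \learnvert$ are both specifiable without reference to $G$. For each such choice, the transformed action $\overline{\pi}(\xV_t)$ is an element of $\learnset$ computable from $\xV_t$ and $\pi$, and the regret contribution $u_L(\overline{\pi}(\xV_t), y_t) - u_L(x_t, y_t)$ rewrites as $r_t(\overline{\pi}(\xV_t)) - r_t(x_t)$. Substituting this into the $\min_{\bxV}\max_\pi$ expression in the definition of $\PolySwap$ yields a formula depending only on $(\bx, \br)$ and $\learnset$.

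I do not expect a genuine obstacle here; the lemma is essentially a bookkeeping statement, and the main point is just to notice that all three regret definitions reference the opponent and the game only via the per-round functionals $u_L(\cdot, y_t)$, which are exactly the $r_t$ that the game-agnostic algorithm observes. The only thing worth flagging is that the corresponding statement is \emph{not} true for $\NormSwap$ (which references vertex-level play by the optimizer) nor, as will be exploited later, for $\CorrSwap$ (which couples learner and optimizer actions through the tensor $x_t \otimes y_t$ inside the CSP decomposition), and this gap is precisely what motivates the game-aware vs.\ game-agnostic distinction of the following theorems.
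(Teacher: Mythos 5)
Your proof is correct and uses the same central observation as the paper: the optimizer's actions $y_t$ enter the definitions of $\Reg$, $\LinSwap$, and $\PolySwap$ only through evaluations of the form $u_L(\cdot, y_t) = r_t(\cdot)$, so each quantity is computable from $(\bx, \br)$ and $\learnset$. The paper states this in two sentences; your version spells it out per regret notion, but the argument is identical.
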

\begin{proof}
Note that the only way in which the optimizer's actions $y_t$ are used in the definitions of external regret $\Reg(\bx, \by)$, linear swap regret $\LinSwap(\bx, \by)$, and polytope swap regret $\PolySwap(\bx, \by)$ is to compute quantities of the form $u_{L}(\cdot, y_t)$. Since $r_t(x) = u_L(x, y_t)$, we can compute all such quantities given access to $r_t(x)$.
\end{proof}

As a consequence of Lemma~\ref{lem:game-agnostic-regret}, we can define quantities such as $\Reg(\bx, \br)$, $\LinSwap(\bx, \br)$, and $\PolySwap(\bx, \br)$ (defining these swap regrets as functions of the game-agnostic transcript). Similarly, we can define e.g. $\PolySwap(\cA^{T})$ to be the maximum polytope swap regret incurred by the game-agnostic (horizon $T$) algorithm $\cA^{T}$ against the worst-case sequence of reward functions $\br$.

\sloppy{In contrast to the result of Lemma~\ref{lem:game-agnostic-regret}, it is \emph{not possible} to compute the profile swap regret $\CorrSwap(\bx, \by)$ in this way: there exist pairs of transcripts $(\bx_1, \by_1), (\bx_2, \by_2)$ for polytope games with the same action set $\learnset$ that have the same game-agnostic transcript $(\bx, \br)$, but where $\CorrSwap(\bx_1, \by_1) \neq \CorrSwap(\bx_2, \by_2)$. That is, unlike the other variants of swap regret we have discussed, profile swap regret is fundamentally game-dependent (intuitively, this is because the space of possible decompositions of the CSP $\csp$ depends on the optimizer's action space $\optset$). We give an explicit example of this phenomenon in Appendix \ref{app:prof-from-transcript}.}

Instead, for any specific polytope game $G$ and game-agnostic learning algorithm $\cA$ (where $\cA$ and $G$ share the same learner action set $\learnset$), we can define $\CorrSwap_{G}(\cA^{T})$ to be the maximum profile swap regret a learner incurs by using $\cA^{T}$ to select their actions in the repeated game $G$. We will show that if we want to bound the profile swap regret $\CorrSwap_{G}(\cA^{T})$ for all polytope games $G$, we must bound the (game-agnostic) polytope swap regret $\PolySwap(\cA^{T})$. Our main tool is the following lemma, showing that we can always instantiate any game-agnostic transcript as a transcript of an actual polytope game in a way that makes profile swap regret and polytope swap regret agree.

\begin{lemma}\label{lem:agnostic-corr-to-poly}
\sloppy{Let $(\bx, \br)$ be a game-agnostic transcript. There exists a transcript $(\bx, \by)$ of a polytope game $G$ where $(\bx, \br)$ is the game-agnostic transcript corresponding to $(\bx, \by)$ and $\CorrSwap(\bx, \by) = \PolySwap(\bx, \by)$.}
\end{lemma}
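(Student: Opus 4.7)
The plan is to construct a polytope game $G$ engineered so that profile swap regret and polytope swap regret of any transcript coincide. Take $\optset = \Delta^T$, let $y_t := e_t$ be the $t$-th standard basis vector, and define $u_L(x, y) := \sum_{t=1}^T y^{(t)} r_t(x)$. Since each $r_t$ is affine on $\learnset$ and $\learnset$ sits in a proper affine hyperplane $\{x : \langle a, x\rangle = 1\}$, any additive constant in $r_t$ can be absorbed via $c_t = c_t\langle a, x\rangle$, so $u_L$ is genuinely bilinear and $u_L(x, y_t) = r_t(x)$. Hence the game-agnostic transcript of $(\bx, \by)$ is $(\bx, \br)$. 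Because Theorem~\ref{thm:comp-prof-swap} gives $\CorrSwap(\bx, \by) \leq \PolySwap(\bx, \by)$ for free, the task reduces to proving the reverse inequality for this particular game.

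Both regret notions will be reduced to a common ``grouped'' form. For profile swap regret, starting from any decomposition $\csp = \sum_k \lambda_k x_{(k)} \otimes y_{(k)}$, I first refine each $x_{(k)}$ into vertices of $\learnset$ (by linearity of $u_L(\cdot, y_{(k)})$, this preserves the regret sum), and then aggregate the terms that share a common vertex $v$. Writing $\alpha_v$ for the aggregated weight and $\bar{y}_v$ for the weighted average of the corresponding $y_{(k)}$'s, Jensen's inequality applied to the convex map $y \mapsto \Reg(v, y)$ shows that aggregation only decreases the regret, so
\begin{equation*}
\frac{\CorrSwap(\bx, \by)}{T} = \min_{(\alpha, \bar{y})} \sum_{v \in \learnvert} \alpha_v \Reg(v, \bar{y}_v),
\end{equation*}
minimized over $\alpha_v \geq 0$ and $\bar{y}_v \in \optset$ satisfying $\sum_v \alpha_v (v \otimes \bar{y}_v) = \csp$. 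A parallel derivation for polytope swap regret (optimizing the swap function $\pi$ independently at each vertex, and setting $\alpha_v = \frac{1}{T}\sum_t \xV_t(v)$, $\bar{y}_v = \frac{\sum_t \xV_t(v) y_t}{T \alpha_v}$) gives
\begin{equation*}
\frac{\PolySwap(\bx, \by)}{T} = \min_{\xV} \sum_{v \in \learnvert} \alpha_v \Reg(v, \bar{y}_v),
\end{equation*}
where $\xV$ ranges over per-round vertex decompositions of $\bx$.

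The crux is then to show that, for this specific $\optset = \Delta^T$, every grouped decomposition $(\alpha_v, \bar{y}_v)$ of $\csp$ is realizable by some $\xV$. Given $(\alpha_v, \bar{y}_v)$, define $\xV_t(v) := T \alpha_v \bar{y}_v^{(t)}$; nonnegativity is immediate. Matching the $e_t$-component of the identity $\sum_v \alpha_v (v \otimes \bar{y}_v) = \csp = \frac{1}{T}\sum_t x_t \otimes e_t$ yields $\sum_v \xV_t(v) v = x_t$. Applying $\langle a, \cdot\rangle$ to this identity forces $\sum_v \xV_t(v) = \langle a, x_t\rangle = 1$, so $\xV_t$ is a valid distribution over $\learnvert$ averaging to $x_t$. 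Substituting $\xV$ back into the polytope swap formula recovers exactly the original $(\alpha_v, \bar{y}_v)$, so the feasible sets of the two minimizations coincide, giving $\PolySwap(\bx, \by) = \CorrSwap(\bx, \by)$.

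The main obstacle is nailing down the grouped alternate form for profile swap regret and verifying both reductions go through cleanly. The two-step compression (refining by $x$, then aggregating by $y$) relies on the fact that $u_L$ is linear in $x$ (preserving the regret sum) and $\Reg(v, \cdot)$ is convex in $y$ (Jensen's). Once the shared alternate form is in hand, the design choice $\optset = \Delta^T$ with $y_t = e_t$ is precisely what makes the per-round polytope-swap constraint trivially satisfiable from any $\csp$-level grouped decomposition, via the linear independence of $\{e_t\}$ and the affine-hyperplane structure of $\learnset$.
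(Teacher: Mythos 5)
Your proposal is correct and follows essentially the same route as the paper's proof: the same construction $\optset = \Delta_T$, $y_t = e_t$, $u_L(x,y) = \sum_t y_t r_t(x)$, and the same key step of turning a CSP-level vertex decomposition $\sum_v \alpha_v (v \otimes \bar{y}_v)$ into per-round decompositions $\xV_t(v) = T\alpha_v \bar{y}_v^{(t)}$ to get $\PolySwap \leq \CorrSwap$, with Theorem~\ref{thm:comp-prof-swap} supplying the reverse inequality. Your write-up is if anything slightly more careful than the paper's, since you make the grouped (vertex-level) reformulation of both regrets explicit (mirroring the paper's Appendix~\ref{app:alternate-formulation}) and you verify $\sum_v \xV_t(v)\, v = x_t$ in addition to $\sum_v \xV_t(v) = 1$.
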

\begin{proof}
Let $\bx = (x_1, x_2, x_3, \dots, x_T) \in \learnset^{T}$ and $\br = (r_1, r_2, \dots, r_T) \in (\learnset^{*})^T$. Consider the polytope game $G$ where the action set $\optset$ of the optimizer is $\Delta_{T}$ (distributions over $T$ pure actions), and the learner's utility function is given by $u_{L}(x, y) = \sum_{i=1}^{T} y_i r_i(x)$. Note that if we let $\by = (e_1, e_2, \dots, e_T)$ (where $e_i$ is the $i$th unit vector), then $(\bx, \br)$ is the game-agnostic transcript corresponding to $(\bx, \by)$.

We now argue that $\CorrSwap(\bx, \by) = \PolySwap(\bx, \by)$. The average CSP $\csp$ of the transcript $(\bx, \by)$ is given by

\begin{equation}\label{eq:agnostic-decomp1}
\csp = \frac{1}{T}\left(\sum_{t=1}^{T} x_t \otimes e_t \right).
\end{equation}

\noindent
% By using the definition of $\CorrSwap(\csp)$ given in \eqref{eq:csp-decomp}, it suffices to show that
Now, consider any decomposition of $\csp$ of the form

\begin{equation}\label{eq:agnostic-decomp2}
\csp = \sum_{v \in \learnvert} \lambda_{v} (v \otimes y_{v}),
\end{equation}

\noindent
where $\lambda_{v} \geq 0$, $\sum_{v} \lambda_v = 1$, and $y_{v} \in \optset$. We claim that, for any such  decomposition and any $t \in [T]$, we must have that $\sum_{v} \lambda_{v}y_{v, t} = 1/T$. To see this, consider the outcome when the (bi-affine) linear function $\rho: \learnset \otimes \optset \rightarrow \Rset$ defined via $\rho(x \otimes y) = y_{t}$ is evaluated on the (equal) right hand sides of both \eqref{eq:agnostic-decomp1} and \eqref{eq:agnostic-decomp2}; for \eqref{eq:agnostic-decomp1}, this equals $1/T$, and for \eqref{eq:agnostic-decomp2}, this equals $\sum_{v} \lambda_{v}y_{v, t}$. 

Given this, for each $t$, define $x_{t}^{V} \in \Delta(\learnvert)$ via $x_{t, v}^{V} = T\lambda_{v} y_{v, t}$. Note that since $\sum_{v} x_{t, v}^{V} = T\sum_{v} \lambda_{v}y_{v, t} = 1$, $x_{t}^{V}$ does indeed belong to $\Delta(\learnvert)$. But now, from the definition of polytope swap regret, this specific action decomposition implies that

\begin{eqnarray*}
\PolySwap(\bx, \by) &\leq & \max_{\pi: \learnvert \rightarrow \learnvert} \left(\sum_{t=1}^{T} u_L(\overline{\pi}(\xV_t), y_t) - u_L(x_t, y_t) \right) \\
&=& T \cdot \max_{\pi: \learnvert \rightarrow \learnvert} \left(\sum_{v \in \learnvert} \lambda_{v} u_L(\overline{\pi}(v), y_v) - u_L(v, y_v) \right).
\end{eqnarray*}

Since $\CorrSwap(\bx, \by)$ is equal to the minimum of this final expression over all possible decompositions of the form \eqref{eq:agnostic-decomp2}, this implies that $\PolySwap(\bx, \by) \leq \CorrSwap(\bx, \by)$. Together with the fact that $\CorrSwap(\bx, \by) \leq \PolySwap(\bx, \by)$ (Theorem~\ref{thm:comp-prof-swap}), we have that $\PolySwap(\bx, \by) = \CorrSwap(\bx, \by)$.
\end{proof}

With Lemma~\ref{lem:agnostic-corr-to-poly}, we can prove the main result of this section.

\begin{theorem}\label{thm:agnostic-main}
Let $\cA$ be a game-agnostic learning algorithm. Then for any $T > 0$, $\PolySwap(\cA^{T}) = \max_{G} \CorrSwap_{G}(\cA^{T})$, where this maximum is over all polytope games $G$ with the same learner action set $\learnset$ as $\cA$. 
\end{theorem}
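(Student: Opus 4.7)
The plan is to prove the equality by establishing each inequality separately, using Lemma~\ref{lem:game-agnostic-regret} and Theorem~\ref{thm:comp-prof-swap} for one direction and Lemma~\ref{lem:agnostic-corr-to-poly} for the other. The key observation throughout is that a game-agnostic algorithm behaves identically across any two games that produce the same sequence of reward functions, so results that hold at the level of a single transcript can be transferred to bounds on $\cA^T$.

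\textbf{Step 1 (the $\geq$ direction).} I would fix an arbitrary polytope game $G$ with the same learner action set $\learnset$ as $\cA$, and an arbitrary sequence of optimizer actions $\by \in \optset^T$ inducing the transcript $(\bx, \by)$ when $\cA^T$ plays $G$. By Theorem~\ref{thm:comp-prof-swap}, $\CorrSwap(\bx, \by) \le \PolySwap(\bx, \by)$. Lemma~\ref{lem:game-agnostic-regret} tells us that $\PolySwap(\bx, \by)$ is a function of the game-agnostic transcript $(\bx, \br)$ alone, where $r_t(x) = u_L(x, y_t)$, so $\PolySwap(\bx, \by) = \PolySwap(\bx, \br) \le \PolySwap(\cA^T)$ since $(\bx, \br)$ is an achievable game-agnostic transcript of $\cA^T$. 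Taking the maximum over $G$ and $\by$ gives $\max_G \CorrSwap_G(\cA^T) \le \PolySwap(\cA^T)$.

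\textbf{Step 2 (the $\leq$ direction).} I would take a game-agnostic transcript $(\bx, \br)$ that realizes $\PolySwap(\cA^T)$ (i.e., the worst-case reward sequence against $\cA^T$). Applying Lemma~\ref{lem:agnostic-corr-to-poly} to $(\bx, \br)$ yields a polytope game $G^{*}$ (with learner action set $\learnset$, optimizer action set $\optset = \Delta_T$, and $u_L(x, y) = \sum_t y_t r_t(x)$) together with a sequence $\by = (e_1, \dots, e_T)$ such that (i) the game-agnostic transcript corresponding to $(\bx, \by)$ is exactly $(\bx, \br)$, and (ii) $\CorrSwap(\bx, \by) = \PolySwap(\bx, \by)$. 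Because $\cA$ is game-agnostic and its inputs depend only on the reward functions, the transcript $(\bx, \by)$ is precisely what $\cA^T$ produces against the optimizer who plays $\by$ in $G^{*}$. Hence $\CorrSwap_{G^{*}}(\cA^T) \ge \CorrSwap(\bx, \by) = \PolySwap(\bx, \by) = \PolySwap(\bx, \br) = \PolySwap(\cA^T)$, and in particular $\max_G \CorrSwap_G(\cA^T) \ge \PolySwap(\cA^T)$.

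\textbf{Main obstacle.} There is no hard technical obstacle; the whole proof reduces to invoking the two previous lemmas at the right level of granularity. The one subtlety worth being careful about is that, in Step 2, one must confirm that the transcript produced by Lemma~\ref{lem:agnostic-corr-to-poly} is actually realizable by running $\cA^T$ in $G^{*}$ against the specific optimizer $\by$; this follows because $\cA$'s outputs depend only on $\br$, and $\br$ is unchanged between the game-agnostic view and the in-game view of $G^{*}$. Combining Steps 1 and 2 yields $\PolySwap(\cA^T) = \max_G \CorrSwap_G(\cA^T)$.
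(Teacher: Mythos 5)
Your proposal is correct and follows essentially the same route as the paper: the $\leq$ direction of the maximum via Theorem~\ref{thm:comp-prof-swap} (with Lemma~\ref{lem:game-agnostic-regret} making explicit that polytope swap regret transfers to the game-agnostic view), and the $\geq$ direction via Lemma~\ref{lem:agnostic-corr-to-poly} together with the observation that a game-agnostic $\cA$ reproduces the same transcript in the constructed game because it only sees the reward functions $\br$.
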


\begin{proof}
For any polytope game $G$ and transcript $(\bx, \by)$ in $G$, Theorem~\ref{thm:comp-prof-swap} tells us that $\CorrSwap(\bx, \by) \leq \PolySwap(\bx, \by)$. It follows that $\CorrSwap_{G}(\cA^{T}) \leq \PolySwap(\cA^{T})$ for any polytope game $G$ (with the same learner action set $\learnset$ as $\cA$). 

Conversely, consider any game-agnostic transcript $(\bx, \br)$ of length $T$ produced by $\cA^{T}$. By Lemma~\ref{lem:agnostic-corr-to-poly}, there exists a polytope game $G$ and a (game-dependent) transcript $(\bx, \by)$ corresponding to $(\bx, \br)$ with the property that $\CorrSwap(\bx, \by) = \PolySwap(\bx, \by)$. But now, note that if the optimizer plays the sequence of actions $\by$ against a learner employing the game-agnostic algorithm $\cA$, the learner will see the sequence of reward functions $\br$ and therefore play the sequence of actions $\bx$. It follows that there exists a polytope game $G$ where $\CorrSwap_{G}(\cA^{T}) \geq \PolySwap(\cA^{T})$. Together with the above result, this implies the theorem statement.
\end{proof}

\begin{remark}
It is interesting to discuss the results of \cite{rubinstein2024strategizing} in light of the above discussion. \cite{rubinstein2024strategizing} answer an open question of \cite{MMSSbayesian} by showing that minimizing \emph{polytope swap regret} is both necessary and sufficient for implying non-manipulability in Bayesian games. However, the original question of \cite{MMSSbayesian} was posed for (a slight variant of) game-agnostic learning algorithms -- indeed, \cite{rubinstein2024strategizing} point out that without the ability to change the Bayesian game (by increasing the number of actions for the optimizer), the conjecture is not true (we use a similar construction to separate profile swap regret and polytope swap regret in Theorem~\ref{thm:comp-prof-swap}). Our Theorem~\ref{thm:agnostic-main} above (along with the characterization of non-manipulability in Theorem~\ref{thm:poly_nonmanip}) can therefore be thought of as a generalization of this result of \cite{rubinstein2024strategizing} to all polytope games. 
\end{remark}

It is tempting to try to define a game-agnostic analogue of profile swap regret by considering the game $G^{*}$ where the optimizer's actions set $\optset$ is the set $\learnset^{*}$ of reward functions, and where $u_L(x, r) = r(x)$ for any $x \in \learnset$, $r \in \learnset^{*}$. We will write $\CorrSwap_{G^{*}}(\bx, \br)$ to be the profile swap regret of the learner in this game, and call this quantity the \emph{game-agnostic profile swap regret} of this transcript. 

Since $\dim(\learnset^{*}) = \dim(\learnset)$, our later constructions for efficient no-profile-swap-regret learning algorithms will also produce efficient game-agnostic learning algorithms that minimize game-agnostic profile swap regret. It would be nice if game-agnostic profile swap regret was an upper bound on (game-specific) profile swap regret, thus giving us a way to efficiently bound polytope swap regret. Unfortunately, this is not the case.

\begin{theorem}\label{thm:game-agnostic-profile-bad}
There exists a polytope game $G$ and a transcript $(\bx, \by)$ in $G$ such that

$$\CorrSwap_{G}(\bx, \by) > \CorrSwap_{G^*}(\bx, \br).$$
\end{theorem}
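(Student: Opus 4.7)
The plan is to chain the strictness of $\CorrSwap \le \PolySwap$ from Theorem~\ref{thm:comp-prof-swap} with Lemma~\ref{lem:agnostic-corr-to-poly}, which converts a game-agnostic transcript into a polytope game where profile and polytope swap regret coincide. The intuition is that $G^{*}$ (with optimizer space $\learnset^{*}$) permits strictly more CSP-decomposition options than a game with a ``small'' optimizer polytope, and this extra flexibility can drop the profile swap regret all the way to zero even when the polytope swap regret is linear in $T$.

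First, invoke the strict-inequality portion of Theorem~\ref{thm:comp-prof-swap} to obtain a polytope game $G''$ with some learner polytope $\learnset$ and a transcript $(\bx, \by'')$ in $G''$ satisfying $\PolySwap(\bx, \by'') = \Omega(T)$ while $\CorrSwap_{G''}(\bx, \by'') = 0$. Let $\br = (r_{1}, \ldots, r_{T})$ be the induced game-agnostic reward sequence, $r_{t}(x) = u_{L, G''}(x, y_{t}'')$. Apply Lemma~\ref{lem:agnostic-corr-to-poly} to $(\bx, \br)$: this yields a polytope game $G$ (with the same learner polytope $\learnset$ and with $\optset = \Delta_{T}$) and a transcript $(\bx, \by)$ in $G$ whose game-agnostic view is $(\bx, \br)$ and which satisfies $\CorrSwap_{G}(\bx, \by) = \PolySwap(\bx, \by)$. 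Since $\PolySwap$ depends only on the game-agnostic transcript (Lemma~\ref{lem:game-agnostic-regret}), we have $\PolySwap(\bx, \by) = \PolySwap(\bx, \br) = \PolySwap(\bx, \by'') = \Omega(T)$, and hence $\CorrSwap_{G}(\bx, \by) = \Omega(T)$.

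The main remaining step is to show $\CorrSwap_{G^{*}}(\bx, \br) = 0$, where $G^{*}$ is the game-agnostic game associated with $\learnset$. Define the reward map $\rho'' : \cY'' \to \learnset^{*}$ by $y \mapsto r_{y}$ with $r_{y}(x) = u_{L, G''}(x, y)$; this is an affine map, so any convex decomposition $\csp_{G''} = \sum_{k} \lambda_{k}(x_{(k)} \otimes y_{(k)})$ pushes forward under $\mathrm{id} \otimes \rho''$ to a valid convex decomposition $\csp_{G^{*}} = \sum_{k} \lambda_{k}(x_{(k)} \otimes r_{y_{(k)}})$ in $\learnset \otimes \learnset^{*}$. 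Instantaneous regrets are preserved, since $\Reg_{G^{*}}(x, r_{y}) = \max_{x^{*}} r_{y}(x^{*}) - r_{y}(x) = \max_{x^{*}} u_{L, G''}(x^{*}, y) - u_{L, G''}(x, y) = \Reg_{G''}(x, y)$. Thus the zero-regret decomposition witnessing $\CorrSwap_{G''}(\bx, \by'') = 0$ pushes forward to a zero-regret decomposition of $\csp_{G^{*}}$, so $\CorrSwap_{G^{*}}(\bx, \br) \le 0$; combining with the previous paragraph yields $\CorrSwap_{G}(\bx, \by) = \Omega(T) > 0 = \CorrSwap_{G^{*}}(\bx, \br)$.

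The only subtle point is the pushforward construction; it becomes transparent after noting $u_{L, G^{*}}(x, r_{y}) = r_{y}(x) = u_{L, G''}(x, y)$, so both convexity of the pushed-forward decomposition and preservation of per-summand $\Reg$ follow by direct computation. The whole argument thus reduces to two off-the-shelf ingredients (the existence of a $\PolySwap{-}\CorrSwap$ gap and Lemma~\ref{lem:agnostic-corr-to-poly}) together with this one-line observation about pulling back regrets through the reward map.
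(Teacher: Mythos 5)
Your proof is correct, and it takes a genuinely different (more modular) route than the paper's argument. The paper proves the theorem by a direct computation: it reuses the concrete game and transcript $(\bx_2,\by_2)$ from the proof of Theorem~\ref{thm:prof-from-transcript} (showing $\CorrSwap_G(\bx_2,\by_2)>0$ by a case analysis over decompositions), extracts $\br_2 = (r_{11}, r_{12}, r_{21}, r_{11})$, and then explicitly exhibits a zero-regret decomposition of the $G^*$-CSP. You instead chain three black-box ingredients: (i) the $\CorrSwap$--$\PolySwap$ separation from Theorem~\ref{thm:comp-prof-swap} to produce a game $G''$ with $\CorrSwap_{G''} = 0$ but $\PolySwap = \Omega(T)$, (ii) Lemma~\ref{lem:agnostic-corr-to-poly} to convert the induced game-agnostic transcript into a game $G$ (with $\optset = \Delta_T$) where $\CorrSwap_G = \PolySwap = \Omega(T)$, and (iii) a push-forward observation: the linear map $y \mapsto r_y$ carries any valid decomposition of $\csp_{G''}$ to a valid decomposition of $\csp_{G^*}$ of equal regret, hence $\CorrSwap_{G^*}(\bx,\br) \le \CorrSwap_{G''}(\bx,\by'') = 0$. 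This push-forward step is the new ingredient, and it is really a general principle worth stating in its own right: $\CorrSwap_{G^*}(\bx,\br) \le \CorrSwap_{\tilde G}(\bx,\tilde\by)$ for any game $\tilde G$ and transcript $(\bx,\tilde\by)$ whose game-agnostic view is $(\bx,\br)$ (the inequality is tight exactly when $\tilde G$ is optimizer full-rank, matching Theorem~\ref{thm:opt-full-rank}). The tradeoff is that your argument implicitly relies on the same underlying concrete example (the one inside the proof of Theorem~\ref{thm:comp-prof-swap}), so it is not shorter end-to-end, but it makes the structure of the counterexample clearer and avoids redoing the explicit decomposition bookkeeping. One small thing to make explicit if you write this up: you should note that $r_y \in \learnset^*$ (i.e., $u_{L,G''}(\cdot, y)$ maps $\learnset$ into $[-1,1]$) for all $y \in \optset''$, which follows from the boundedness normalization on $u_L$; this is needed so that the pushed-forward decomposition genuinely lies in $\learnset \otimes \learnset^*$.
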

\begin{proof}
See Appendix \ref{app:game-agnostic-profile-bad} for a proof. We build off the example used to demonstrate profile swap regret is not game-agnostic in Appendix \ref{app:prof-from-transcript}.
\end{proof}

However, there is a weaker sense in which this is true. We say that a polytope game $G$ is \emph{optimizer full-rank} if the map that sends an optimizer action $y \in \optset$ to its corresponding reward function $r \in \learnset^{*}$ is injective -- i.e., we can fully recover the optimizer action from the reward. Note in particular that this requires the optimizer's action set to have dimension at most as large as that of the learner's action set. Under this constraint, game-agnostic profile swap regret does indeed upper bound the profile swap regret of the game.

\begin{theorem}\label{thm:opt-full-rank}
Let $G$ be a polytope game that is optimizer full-rank. Then if $(\bx, \br)$ is the game-agnostic transcript corresponding to the transcript $(\bx, \by)$ of $G$, we have that

$$\CorrSwap_{G}(\bx, \by) \leq \CorrSwap_{G^*}(\bx, \br).$$
\end{theorem}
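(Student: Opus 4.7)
The reverse inequality $\CorrSwap_{G^*}(\bx,\br) \leq \CorrSwap_G(\bx,\by)$ actually holds without any rank assumption: letting $\rho: \optset \to \learnset^*$ denote the linear map $y\mapsto r_y$ (so that $\csp_{G^*} = \Phi(\csp_G)$ for $\Phi = \mathrm{id}_\learnset \otimes \rho$), any $G$-decomposition $\csp_G = \sum_k \lambda_k (x_k \otimes y_k)$ pushes forward to the $G^*$-decomposition $\sum_k \lambda_k (x_k \otimes \rho(y_k))$ of $\csp_{G^*}$ at the same cost, since $\Reg_G(x,y) = \Reg_{G^*}(x,\rho(y))$ on $\learnset \times \optset$. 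The content of this theorem is thus the opposite direction, so combined with the unconditional direction the two profile swap regrets will in fact coincide under full-rank.

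My plan is to begin with an optimal $G^*$-decomposition $\csp_{G^*} = \sum_k \lambda_k (x_k \otimes r_k)$ realizing $\CorrSwap_{G^*}(\bx,\br)$ and produce a $G$-decomposition of $\csp_G$ of no greater cost. Under the full-rank hypothesis $\rho$ is injective, hence admits a linear left inverse and $\Phi$ is injective; so if every $r_k$ in the optimal decomposition lies in $\rho(\optset)$, I can set $y_k = \rho^{-1}(r_k) \in \optset$ and by injectivity of $\Phi$ recover $\csp_G = \sum_k \lambda_k (x_k \otimes y_k)$, with matching cost coming again from $\Reg_G(x,y) = \Reg_{G^*}(x,\rho(y))$. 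The whole proof thus reduces to showing that the infimum defining $\CorrSwap_{G^*}(\csp_{G^*})$ is attained by decompositions supported in $\learnset \times \rho(\optset)$.

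To establish this reduction, I would pass to the convex-roof formulation $\CorrSwap_{G^*}/T + u_L(\csp_G) = \tilde V^*(\csp_{G^*})$, where $\tilde V^*$ is the convex roof on rank-one tensors of the support function $V^*(r) = \sigma_\learnset(r)$, and analogously $\CorrSwap_G/T + u_L(\csp_G) = \tilde V_G(\csp_G)$ for $V(y) = V^*(\rho(y))$. LP duality then writes $\tilde V^*(\csp_{G^*}) = \sup_h h(\csp_{G^*})$ over bilinear forms $h$ with $h(x,r) \leq V^*(r)$ on $\learnset \times \learnset^*$ (and analogously for $\tilde V_G$). Restricting any feasible $h$ along $\Phi$ produces a feasible $h_G$ with $h(\csp_{G^*}) = h_G(\csp_G)$, recovering the unconditional direction; conversely, under full-rank I plan to extend any feasible $h_G$ back to a feasible $h$ by a Hahn--Banach-style argument, using injectivity of $\rho$ to transport $h_G$ uniquely by linearity onto $\rho(\mathrm{span}(\optset))$ and the sublinearity of $V^*$ to choose the extension across a complementary subspace so that the constraint remains satisfied. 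The main obstacle is rigorously verifying this extension: the inequality $h(x,r) \leq V^*(r)$ has to hold on all of $\learnset \times \learnset^*$, whereas $h_G$'s feasibility only controls $r \in \rho(\optset)$. Injectivity of $\rho$ is essential here, consistent with the fact that without full-rank the extension genuinely fails and the gap between $\CorrSwap_G$ and $\CorrSwap_{G^*}$ reopens (Theorem~\ref{thm:game-agnostic-profile-bad}).
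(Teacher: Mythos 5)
Your easy case is exactly the paper's entire proof: the paper takes an optimal $G^*$-decomposition $\csp^* = \sum_k \lambda_k (x_{(k)} \otimes r_{(k)})$, invokes a linear left inverse $M$ of the reward map (what you call $\rho$), applies $\mathrm{id}\otimes M$ termwise to get $\csp = \sum_k \lambda_k (x_{(k)} \otimes M(r_{(k)}))$, and asserts $M(r_{(k)})\in\optset$ together with $\Reg_G(x_{(k)},M(r_{(k)})) = \Reg_{G^*}(x_{(k)},r_{(k)})$. Both assertions are justified precisely when each $r_{(k)}$ lies in $\rho(\optset)$ (otherwise $M(r_{(k)})$ need not lie in $\optset$, and $\rho(M(r_{(k)}))\neq r_{(k)}$ so the regrets need not match), so the issue you isolate -- whether the optimum of $\CorrSwap_{G^*}$ can be realized by decompositions supported in $\learnset\times\rho(\optset)$ -- is exactly the point at which the paper's argument stops; the paper supplies no separate argument for it, whereas you correctly flag it as carrying all the content.

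The problem is that your proposal does not close that point, and as written it is therefore not a proof: the reduction you state is essentially equivalent to the theorem itself, and the dual/Hahn--Banach plan leaves its key lemma unverified at exactly the place where it is delicate. Once you transport a feasible $G$-certificate $h_G$ onto $\rho(\mathrm{aff}(\optset))$ by linearity, the constraint $H(x,r)\le V^*(r)$ must also hold at points $r\in\learnset^*$ that lie in $\mathrm{span}(\rho(\optset))$ but are not of the form $\rho(y)$ with $y\in\optset$; there $H$ is already pinned down by linearity, so no choice of extension ``across a complementary subspace'' can help, and feasibility of $h_G$ on $\optset$ gives no control -- so this is not a routine extension argument, and it can fail for a particular (even optimal) choice of $h_G$, forcing you to first modify the certificate or argue differently. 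The primal route is no easier: out-of-range components of the $r_{(k)}$ can cancel across terms with \emph{different} $x_{(k)}$ (e.g.\ $\tfrac14 x_1\otimes q + \tfrac14 x_3\otimes q + \tfrac12 x_2\otimes(-q) = 0$ when $x_2 = \tfrac12(x_1+x_3)$), so they cannot simply be projected away term by term while preserving the sum, membership in $\learnset^*$, and the cost. Everything surrounding this -- the unconditional reverse inequality, injectivity of $\rho$ on the affine hull, the convex-roof/duality reformulation, the cost bookkeeping in the in-range case -- is fine, but until the in-range attainment (or the certificate extension) is actually established, the inequality $\CorrSwap_G(\bx,\by)\le\CorrSwap_{G^*}(\bx,\br)$ has not been proved.
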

\begin{proof}
Let $\csp^{*}$ be the ``game-agnostic'' CSP $\csp^{*} = (1/T) \sum_{t} (x_t \otimes r_t) \in \learnset \otimes \learnset^{*}$ of the transcript $(\bx, \br)$ in the game $G^{*}$. Consider the decomposition of $\csp^*$ in the manner of \eqref{eq:csp-decomp}, i.e.,

$$\csp^{*} = \sum_{k=1}^{K} \lambda_{k} (x_{(k)} \otimes r_{(k)}),$$

\noindent
which realizes $\CorrSwap_{G^*}(\bx, \br)$ (so, $\CorrSwap_{G^*}(\bx, \br) = \sum_{k} \lambda_{k} \Reg(x_{(k)}, r_{(k)})$). 

Because $G$ is optimizer full-rank, there exists a linear function $M: \learnset^{*} \rightarrow \optset$ such that $M(r)$ is the unique element in $\optset$ with reward function $r$ (so $M(r_t) = y_t$ for all $t \in [T]$). For each $k \in [K]$, let $y_{(k)} = M(r_{(k)})$. Note that by applying the linear map $x \otimes r \rightarrow x \otimes M(r)$ to the equality

$$\frac{1}{T} \sum_{t} (x_t \otimes r_t) = \sum_{k=1}^{K} \lambda_{k} (x_{(k)} \otimes r_{(k)}),$$

\noindent
we find that the CSP $\csp = (1/T) \sum_{t} (x_t \otimes y_t) = (1/T) \sum_{t} (x_t \otimes M(r_t))$ of the original transcript $(\bx, \by)$ satisfies

$$\csp = \sum_{k=1}^{K} \lambda_{k} (x_{(k)} \otimes y_{(k)}).$$

But from this decomposition, we have that $\CorrSwap_{G}(\bx, \by) \leq \sum_{k=1}^{K}\lambda_{k} \Reg_{G}(x_{(k)}, y_{(k)}) = \sum_{k=1}^{K}\lambda_{k} \Reg_{G^*}(x_{(k)}, r_{(k)}) = \CorrSwap_{G^*}(\bx, \br)$. 
\end{proof}

Can we extend Theorem~\ref{thm:opt-full-rank} to games beyond the class of optimizer-full-rank games (possibly at the cost of increasing the dimension of the optimizer's action set $\optset$)? For example, can we game-agnostically bound profile swap regret in all games where the optimizer has at most $n$ actions, or that are ``almost'' optimizer-full-rank (e.g., there is a low-dimensional subspace of possible actions implementing a given reward)? We leave this as an interesting future direction.

\section{Implications for Equilibrium Computation}\label{sec:equilibria}

We finally return to the question of equilibrium computation. Indeed, the original motivation for introducing swap regret was to design learning dynamics that converge to the notion of correlated equilibria in normal-form games (e.g., see \cite{foster1997calibrated}). It seems natural then, that when defining swap regret for polytope games, we should choose a quantity whose minimization guarantees convergence to the set of correlated equilibria.

The problem here is that, just as it is not exactly clear what the definition of swap regret should be in polytope games, it is also not exactly clear what the definition of correlated equilibrium should be in polytope games. Even for the restricted set of Bayesian games, existing correlated equilibrium notions include linear correlated equilibria, normal-form correlated equilibria, agent-normal-form correlated equilibria, and communication equilibria \citep{fujii2023bayes}. This raises the question of why we might want to compute correlated equilibria in the first place -- what properties might we desire from its definition?

One motivation for studying correlated equilibria in normal-form games, arising from the original definition in \cite{aumann1974subjectivity}, is that we can view a correlated equilibrium as an outcome that can be implemented by a third-party \emph{mediator} who privately recommends an action to each player. If no one can gain by deviating from these suggestions, the resulting strategy profile constitutes a correlated equilibrium. This gives a definition of correlated equilibrium that is particularly amenable to mechanism design (the classic example here is that of a traffic light coordinating the actions of many cars at an intersection). 

This mediator-based definition is relatively straightforward to extend to the setting of a polytope game $G$ -- we are looking for distributions over joint recommendations $(x, y) \in \learnset \otimes \optset$ with the property that neither player has an incentive to deviate after seeing their recommendation. It can be shown (see Appendix \ref{app:mediator-nf}) that these recommendations can, without loss of generality, be supported on the vertices of $\learnset$ and $\optset$, and so these mediator-based correlated equilibria are exactly the \emph{normal-form correlated equilibria (NFCE)} of $G$ (equivalently, these are the correlated equilibria of the normal-form vertex game $G^{V}$). Formally, an NFCE of $G$ is a vertex game CSP $\cspV \in \Delta(\learnvert \times \optvert)$ with the property that\footnote{Since in this section we are concerned with settings where both players are learning, instead of referring to the two players as the ``learner'' and ``optimizer'', we will refer to them as the ``$\cX$-player'' and ``$\cY$-player'', adjusting subscripts accordingly.} $\NormSwap_{X}(\cspV) = \NormSwap_{Y}(\cspV) = 0$.  Since this game can be thought of as a normal-form game over the action set simplices $\Delta(\learnvert)$ and $\Delta(\optvert)$,  $\NormSwap_{X}(\cspV)$ is simply defined as the swap-regret of a CSP $\cspV$ in this game.

But a second, learning-theoretic, motivation for studying correlated equilibria is that they directly represent possible summaries of the outcomes of learning dynamics in repeated games. These are useful for understanding what possible outcomes we might expect from repeated multi-agent interactions in strategic settings -- after all, it is now generally accepted that it is intractable (computationally and otherwise) for players to play according to the Nash equilibrium of an arbitrary general-sum game, and instead it is more reasonable to model players as performing some form of learning to decide their actions over time.

From this perspective, the definition of correlated equilibrium should follow from the definition of swap regret (specifically, whichever definition of swap regret we choose to model rational behavior in repeated games). For profile swap regret, this gives rise to the notion of \emph{profile correlated equilibria (PCE)}. In particular, a CSP $\csp$ is a PCE of a polytope game $G$ if $\CorrSwap_{X}(\csp) = \CorrSwap_{Y}(\csp) = 0$.

In normal-form games, these two motivations give rise to exactly the same notion of correlated equilibrium (and hence this distinction often goes unmade). In polytope games these two notions are not immediately comparable -- a normal-form CE $\cspV$ and a profile CE $\csp$ belong to different sets and have different dimensions. However, every normal-form CE $\cspV \in \Delta(\learnvert \times \optvert)$ naturally corresponds to a CSP $\Proj(\cspV)$ obtained by sending $\cspV$ to the element $\sum_{v \in \learnvert, w \in \optvert} \cspV(v, w) (v \otimes w)$. This raises the question: are profile correlated equilibria exactly the CSPs corresponding to valid normal-form correlated equilibria?

To see why we might expect this to be the case for profile swap regret in particular, we point out that a one-sided version of this question has a positive answer. In particular, every CSP $\csp$ that incurs zero profile swap regret for the $\cX$-player (i.e., satisfies $\CorrSwap_{\cX}(\csp) = 0$) can be instantiated as $\Proj(\cspV)$ for some vertex-game CSP $\cspV$ that incurs zero normal-form swap regret for the $\cX$-player (see Appendix \ref{app:one-sided}). From a mediator point-of-view, this means that any CSP $\csp$ the $\cX$-player encounters by playing a no-profile-swap-regret learning algorithm can also be induced by a ``one-sided mediator'' that only needs to consider the incentives of the $\cX$-player (i.e., one that the $\cY$-player will blindly trust). We note that this property follows from the fact that the no-profile-swap-regret menu is minimal (Theorem~\ref{thm:poly_minimal}), and does not hold for weaker forms of regret like linear swap regret.

However, we prove that the answer to the original question is no. Specifically, we show that although every CSP $\Proj(\cspV)$ corresponding to a normal-form CE $\cspV$ is in fact a profile CE, there exist profile CE that cannot be written in this form. In fact, we go further and show that there are profile CE with utility profiles that no normal-form CE can generate.

\begin{theorem}\label{thm:equilibrium-gap}
    If $\cspV$ is a normal-form CE in a polytope game $G$, then $\Proj(\cspV)$ is a profile CE in $G$. On the other hand, there exists a polytope game $G$ and a profile CE $\csp$ such that there does not exist a normal-form CE $\cspV$ satisfying $\Proj(\cspV) = \csp$ (in fact, there does not even exist a normal-form CE $\cspV$ where $u_X(\cspV) = u_X(\csp)$ and $u_Y(\cspV) = u_Y(\csp)$).
    % \begin{itemize}
    %     \item $\phi$ is a PCE
    %     \item There does not exist a NFCE $\sigma$ such that $u_L(\phi) = u_L(\sigma)$ and $u_O(\phi) = u_O(\sigma)$.
    %     \item $\learnset$ and $\optset$ are affine subspaces.
    % \end{itemize}
\end{theorem}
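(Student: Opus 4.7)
The plan splits into the easy forward direction and the more delicate separation. For the forward direction, I would give an explicit decomposition witnessing that $\Proj(\cspV)$ is a profile CE. Writing $\cspV = \sum_{v \in \learnvert,\, w \in \optvert} \cspV(v,w)\,(v \otimes w)$ and re-grouping by the $\cX$-vertex, set $\lambda_v = \sum_w \cspV(v,w)$ and, for each $v$ with $\lambda_v > 0$, $y_v = \lambda_v^{-1} \sum_w \cspV(v,w)\,w \in \optset$. Then
\[
\Proj(\cspV) \;=\; \sum_{v} \lambda_v\,(v \otimes y_v).
\]
The $\cX$-incentive condition for a NFCE, $\sum_w \cspV(v,w)\,u_X(v,w) \geq \sum_w \cspV(v,w)\,u_X(v^{*},w)$ for every $v^{*} \in \learnvert$, becomes (by bilinearity of $u_X$ and dividing by $\lambda_v$) the inequality $u_X(v, y_v) \geq u_X(v^{*}, y_v)$ for all vertices $v^{*}$. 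Since a bilinear function attains its maximum over $\learnset$ at a vertex, this gives $v \in \BR_X(y_v)$, so the decomposition above certifies $\CorrSwap_X(\Proj(\cspV)) = 0$. The symmetric re-grouping by the $\cY$-vertex certifies $\CorrSwap_Y(\Proj(\cspV)) = 0$, and hence $\Proj(\cspV)$ is a profile CE.

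For the separation, the conceptual source of the gap is that a profile CE is allowed to use \emph{two possibly different} decompositions of the same $\csp$ to certify the two players' best-response conditions (matching the one-sided lifts of Appendix~\ref{app:one-sided}), whereas a NFCE requires a single joint distribution over vertex pairs satisfying both at once. I would construct a small polytope game (most cleanly, a Bayesian game where the optimizer has multiple types) together with a target CSP $\csp$ such that: (i)~$\csp$ admits an $\cX$-certifying decomposition $\sum_k \lambda_k (x_{(k)} \otimes y_{(k)})$ where each $y_{(k)}$ is a non-vertex mixture (across optimizer types) against which $x_{(k)}$ is \emph{uniquely} incentivized; (ii)~at every individual optimizer vertex $w$ appearing in the support of any $y_{(k)}$, the $\cX$-player strictly prefers to switch away from $x_{(k)}$; and (iii)~a separate $\cY$-decomposition (exploiting $\cY$-indifference) certifies $\CorrSwap_Y(\csp) = 0$. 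Conditions (i) and (ii) force every vertex-pair lift of $\csp$ to violate the $\cX$-NFCE constraint, so no NFCE can satisfy $\Proj(\cspV) = \csp$.

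For the stronger utility-profile statement I would additionally tune the optimizer's payoff so that the pair $(u_X(\csp), u_Y(\csp))$ is achieved only at CSPs supported on those specific mixed optimizer strategies, making every $\cspV$-supported distribution strictly bounded away from the target pair in utility. The separation from the NFCE polytope then reduces to a finite linear-programming-style check over the finite set $\learnvert \times \optvert$. The main obstacle I anticipate is calibrating the Bayesian payoff matrix so that simultaneously (a) the intended $\cX$- and $\cY$-certifying decompositions of $\csp$ both exist and (b) every NFCE is provably suboptimal for the target utility pair; once the right game is chosen, the rest is direct verification from the definitions of $\CorrSwap$ and $\NormSwap$, and from the projection construction used in the forward direction.
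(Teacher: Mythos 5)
Your forward direction is correct: regrouping $\cspV$ by the $\cX$-vertex and checking that each recommended vertex best-responds to its conditional mixture is a valid direct argument, and it is essentially an unrolled special case of what the paper does in one line by citing the comparison $\CorrSwap \leq \NormSwap$ (Theorems~\ref{thm:comp-prof-swap} and~\ref{thm:swap-notions-ordering}). Nothing is lost there.

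The separation argument, however, has a genuine logical gap. You claim that conditions (i) and (ii) -- each $x_{(k)}$ uniquely incentivized by a non-vertex mixture $y_{(k)}$, but strictly suboptimal against every individual vertex $w$ in the support of $y_{(k)}$ -- ``force every vertex-pair lift of $\csp$ to violate the $\cX$-NFCE constraint.'' This is false: the normal-form CE constraint for the $\cX$-player conditions on the recommended vertex and compares against the \emph{conditional average} of the $\cY$-player's recommendations, not against each $\cY$-vertex separately. The lift obtained by keeping each $x_{(k)}$ (which must be a vertex, being a unique best response) coupled with an arbitrary vertex-decomposition of $y_{(k)}$ has exactly $y_{(k)}$ as its conditional mixture, so it satisfies the $\cX$-player's incentive constraint regardless of (ii). Indeed this is unavoidable: Lemma~\ref{lem:one-sided-mediator} shows that \emph{any} CSP with $\CorrSwap_X(\csp)=0$ admits a lift $\cspV$ with $\NormSwap_X(\cspV)=0$, so a profile CE can never have all of its lifts violate one player's constraint. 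The true obstruction must be two-sided: one has to show that no \emph{single} lift satisfies both players' constraints simultaneously, even though each player separately has a certifying lift. That is why the paper's proof works with a concrete game and CSP and runs an elimination argument over the joint support (Lemmas~\ref{lem:elim_x} and~\ref{lem:elim_y}), pinning down the unique candidate lift consistent with the projection constraints and showing it fails one player's incentive condition; the utility-profile strengthening is then checked by a finite LP over $\Delta(\learnvert\times\optvert)$. Your proposal neither supplies a concrete game nor an argument of this two-sided form, so as written the second half of the theorem is not established.
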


In particular, Theorem~\ref{thm:equilibrium-gap} means that there are outcomes of repeated polytope games (i.e., CSPs $\csp$) for which both players can imagine as implementable by a one-sided mediator, but for which no (two-sided) mediator protocol can actually induce. This is a fundamental property of polytope games that extends beyond the specific definition of profile swap regret, and distinguishes general polytope games from normal-form games.

Of course, we can also ask for which polytope games $G$ does a separation exist like that in Theorem~\ref{thm:equilibrium-gap} -- after all, there is no separation for the case of normal-form games, which are special cases of polytope games. Fully characterizing this is an interesting open question, but below we show that this gap disappears whenever \emph{either} player's action set is a simplex. Notably, this covers all standard Bayesian games where only one of the two players has any private information (e.g., some auction/pricing games). 

\begin{theorem}\label{thm:equiv-simplex}
Let $G$ be a polytope game where $\cX$ is a simplex (but where $\cY$ may be any polytope). Then, if $\csp$ is a profile CE in $G$, there exists a normal-form CE $\cspV$ for which $\Proj(\cspV) = \csp$.
\end{theorem}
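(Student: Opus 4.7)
The plan is to construct the desired normal-form CE $\cspV$ by combining the two profile-swap-regret decompositions of $\csp$, exploiting the fact that for a simplex $\cX$, the decomposition of a mixed action into pure actions is unique and preserves the best-response property. Since $\csp$ is a profile CE, we have two decompositions:
\begin{equation*}
\csp = \sum_{k} \lambda_{k}\, (x_{(k)} \otimes y_{(k)}) = \sum_{j} \mu_{j}\, (x'_{(j)} \otimes y'_{(j)}),
\end{equation*}
where the first certifies $\CorrSwap_{X}(\csp)=0$ (so $x_{(k)} \in \BR_{X}(y_{(k)})$) and the second certifies $\CorrSwap_{Y}(\csp)=0$ (so $y'_{(j)} \in \BR_{Y}(x'_{(j)})$).

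The first step is to refine the $Y$-side decomposition: for each $j$, write $y'_{(j)} = \sum_{w \in \optvert} y'_{(j),w}\, w$ using only vertices $w$ that lie in the face $\BR_{Y}(x'_{(j)})$ of $\cY$. This is possible because $\BR_{Y}(x'_{(j)})$ is a face of $\cY$ (it is the argmax of a linear function over a polytope), $y'_{(j)}$ lies in this face, and so it can be written as a convex combination of the vertices of the face, each of which is also a vertex of $\cY$. For the $\cX$ side, since $\cX$ is a simplex, we may use the unique decomposition $x'_{(j)} = \sum_{i} x'_{(j),i}\, e_{i}$. I would then define
\begin{equation*}
\cspV(e_{i}, w) \;=\; \sum_{j} \mu_{j}\, x'_{(j),i}\, y'_{(j),w},
\end{equation*}
and verify directly that $\cspV \in \Delta(\learnvert \times \optvert)$ and $\Proj(\cspV) = \csp$.

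It then remains to check both NFCE incentive conditions. For the $Y$-player, conditional on a recommendation $w$, the induced distribution over $\cX$ is a mixture (weighted proportionally to $\mu_{j} y'_{(j),w}$) of the $x'_{(j)}$ with $y'_{(j),w} > 0$; by construction $w \in \BR_{Y}(x'_{(j)})$ for every such $j$, so linearity of $u_{Y}$ gives that $w$ is also a best response to the mixture. For the $X$-player the argument is subtler: conditional on $e_{i}$ the expected $\cY$-action is the ``$e_{i}$-slice'' $\tilde y_{i}/\alpha_{i}$ of $\csp$, which is an invariant of $\csp$ and therefore equals the same quantity computed from the \emph{other} decomposition. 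Viewed through the $X$-side decomposition, this conditional is a mixture of the $y_{(k)}$ with $x_{(k),i} > 0$. Here the simplex assumption does the work: whenever $x_{(k)} \in \BR_{X}(y_{(k)})$ and $\cX$ is a simplex, every pure action in the support of $x_{(k)}$ is itself a best response to $y_{(k)}$ (because the maximum of a linear function over $\Delta_{m}$ is attained on the face spanned by the vertices in the support of any maximizer). Hence $e_{i} \in \BR_{X}(y_{(k)})$ for each relevant $k$, and linearity again gives $e_{i} \in \BR_{X}(\tilde y_{i}/\alpha_{i})$.

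The main obstacle to anticipate is the asymmetry in how the two decompositions are used: the $Y$-side decomposition defines the joint law $\cspV$, but the $X$-incentive condition has to be checked without any direct reference to $X$'s best responses among the $x'_{(j)}$. The clean resolution is to note that the $e_{i}$-conditional of $\cspV$ is determined by $\csp$ alone, so we are free to analyze it via the $X$-side decomposition, where the simplex argument converts ``mixed'' best responses into ``pure'' ones. This same argument breaks in the general polytope case precisely because pure actions in the support of a mixed best response need not themselves be best responses — which lines up with Theorem~\ref{thm:equilibrium-gap}.
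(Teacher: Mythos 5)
Your proposal is correct and follows essentially the same route as the paper: both turn the two zero-profile-swap-regret decompositions of $\csp$ into a vertex-game CSP, with the simplex structure of $\cX$ entering through the coordinate slice $x \otimes y \mapsto x_i y$, which is determined by $\csp$ alone and lets the conditional given a recommended vertex be analyzed through the $\cX$-side decomposition. The only difference is bookkeeping: the paper reduces both decompositions to vertex form and exhibits a common refinement $\lambda_{v}\gamma'_{v,w} = \mu_{w}\gamma_{v,w}$ whose $v$-conditional is exactly $y_v$, whereas you build $\cspV$ from the $\cY$-side decomposition (refining each $y'_{(j)}$ within the face $\BR_{Y}(x'_{(j)})$) and verify the $\cX$-player's incentives via the additional (correct) observation that over a simplex every vertex in the support of a best response is itself a best response.
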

\begin{proof}
Since $\CorrSwap_{X}(\csp) = 0$, we can write

\begin{equation}\label{eq:equiv-cspx}
\csp = \sum_{v \in \learnvert} \lambda_{v} (v \otimes y_{v}).
\end{equation}

\noindent
for some choice of $y_{v} \in \optset$ and nonnegative $\lambda_{v}$ summing to one. Similarly, since $\CorrSwap_{Y}(\csp) = 0$, we can write

\begin{equation}\label{eq:equiv-cspy}
\csp = \sum_{w \in \optvert} \mu_{w} (x_{w} \otimes w).
\end{equation}

We claim that, when $\cX$ is a simplex, it is possible to decompose each $x_{w}$ into a combination of elements in $\learnvert$ and each $y_{v}$ into a combination of elements in $\optvert$ in such a way that these two decompositions agree. That is, we can write $x_{w} = \sum_{v \in \learnvert} \gamma_{v, w}v$ and $y_{v} = \sum_{w \in \optvert} \gamma'_{v, w}w$ so that, for each $(v, w) \in \learnvert \times\optvert$,

\begin{equation}\label{eq:equiv-simplex-1}
    \lambda_{v}\gamma'_{v, w} = \mu_{w}\gamma_{v, w},
\end{equation}

\noindent
and we can therefore write

$$\csp = \sum_{v \in \learnvert} \sum_{w\in\optvert} \zeta_{v, w}(v \otimes w)$$

\noindent
where $\zeta_{v, w}$ is the common value of the two sides of \eqref{eq:equiv-simplex-1}. But now, note that this gives rise to a normal-form CE $\cspV \in \Delta(\learnvert\times\optvert)$ via $\cspV(v, w) = \zeta_{v,w}$. In particular, by \eqref{eq:equiv-cspx}, we have that $\NormSwap_{X}(\cspV) = 0$, and by \eqref{eq:equiv-cspy}, we have that $\NormSwap_{Y}(\cspV) = 0$.

It remains to show that a common decomposition (i.e., of the form in \eqref{eq:equiv-simplex-1}) exists. To see this, note that since $\cX$ is a simplex, there is a unique way to write each $x_w$ as a convex combination $\sum_{v \in \learnvert} \gamma_{v, w}v$. We claim that if we then define $\gamma'_{v, w} = (\mu_{w}\gamma_{v, w})/\lambda_{v}$ (so to satisfy \eqref{eq:equiv-simplex-1}), it must be the case that $y_{v} = \sum_{w \in \optvert} \gamma'_{v, w}w$. 

To see this, for any $v \in \learnvert$, consider the bilinear function $\rho_{v}: \learnset \otimes \optset \rightarrow \Rset\optset$ defined via $\rho_{v}(v \otimes w) = w$ and $\rho_{v}(v' \otimes w) = 0$ for $v' \neq w$ (note that this function only exists since $\cX$ is a simplex). Applying $\rho_v$ to \eqref{eq:equiv-cspx}, we have that $\rho_{v}(\phi) = \lambda_{v}y_{v}$. On the other hand, applying $\rho_v$ to \eqref{eq:equiv-cspy} (after substituting in our above decomposition for $x$), we have that $\rho_{v}(\phi) = \sum_{w}\mu_{w}\gamma_{v,w}w$. Equating these two expressions for $\rho_{v}(\phi)$, it follows that $y_{v} = \sum_{w} \gamma'_{v, w}w$, as desired.
\end{proof}

Finally, we conclude with a couple of remarks on the problem of actually computing equilibria. Note that our efficient no-profile-swap-regret learning algorithm (Theorem~\ref{thm:upper-semi-separation}) immediately gives us an efficient algorithm to compute \emph{some} profile CE in any (efficiently representable) polytope game $G$, by making both players run these low-profile-swap-regret dynamics. In particular, define an \emph{$\eps$-approximate profile CE} $\csp$ to be a CSP satisfying $\CorrDist(\csp) \le \varepsilon$ for both players. We have the following theorem.

\begin{theorem}\label{thm:ce-computation}
    Given efficient separation oracles for the sets $\learnset$ and $\optset$, there exists an algorithm that runs in polynomial time in $d_L, d_O, \frac{1}{\varepsilon}$ and computes an $\eps$-approximate profile CE.
\end{theorem}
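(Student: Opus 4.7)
The plan is to obtain an approximate profile CE via self-play: both players independently run the efficient no-profile-swap-regret algorithm guaranteed by Theorem~\ref{thm:upper-semi-separation}, and we return the empirical CSP of the resulting transcript. Concretely, let $\cA^{(X)}$ and $\cA^{(Y)}$ be two instances of that algorithm, where $\cA^{(X)}$ treats $\learnset$ as the learner's action set and $\optset$ as the optimizer's, and $\cA^{(Y)}$ swaps these roles. At each round $t$, player $X$ plays $x_t$ output by $\cA^{(X)}$ on history $y_1,\dots,y_{t-1}$, player $Y$ plays $y_t$ output by $\cA^{(Y)}$ on history $x_1,\dots,x_{t-1}$, both observe each other's action, and we output $\csp = \frac{1}{T}\sum_{t=1}^{T} x_t \otimes y_t$.

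Next I would apply Theorem~\ref{thm:upper-semi-separation} to $\cA^{(X)}$. Since its guarantee holds against any sequence of optimizer actions, in particular against the actions produced by $\cA^{(Y)}$, we have $\CorrDist_X(\cA^{(X),T}) = O(\sqrt{T})$. Following the paper's scaling convention (cf.\ the $R(T)/T$ appearing in Theorem~\ref{thm:poly_nonmanip}, where $\CorrDist(\cA^T)$ is the cumulative form while $\CorrDist(\csp)$ is the per-CSP Euclidean distance), this translates to $\CorrDist_X(\csp) \le O(1/\sqrt{T})$. Applying the same argument symmetrically to $\cA^{(Y)}$---valid because separation oracles for both $\learnset$ and $\optset$ are available to both instances---yields $\CorrDist_Y(\csp) \le O(1/\sqrt{T})$. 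Choosing $T = \Theta(1/\eps^2)$ with the appropriate absolute constant makes both quantities at most $\eps$, so $\csp$ is an $\eps$-approximate profile CE in the sense of the theorem statement.

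For the runtime, Theorem~\ref{thm:upper-semi-separation} guarantees that each call to $\cA^{(X)}$ or $\cA^{(Y)}$ at any round $t \le T$ takes $\poly(T, d_L, d_O)$ time, given the separation oracles for $\learnset$ and $\optset$. Aggregating over $T = \Theta(1/\eps^2)$ rounds yields a total runtime of $\poly(T, d_L, d_O) = \poly(1/\eps, d_L, d_O)$, as required.

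There is no real obstacle: the result is essentially a corollary of the efficient per-player guarantee of Theorem~\ref{thm:upper-semi-separation} combined with the standard ``no-swap-regret self-play converges to correlated equilibrium'' template. The only subtleties worth flagging in the writeup are (i) pinning down the $\CorrDist(\cA^T)$ versus $\CorrDist(\csp)$ scaling convention so the factor of $T$ is handled correctly, and (ii) noting explicitly that the adversarial robustness of Theorem~\ref{thm:upper-semi-separation} is what allows both instances to be run simultaneously against an adaptive opponent without any loss in the bound.
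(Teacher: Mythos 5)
Your proposal is correct and follows essentially the same route as the paper: simulate self-play for $T = \Theta(\eps^{-2})$ rounds with both players running the efficient algorithm of Theorem~\ref{thm:upper-semi-separation} (Algorithm~\ref{algo:semisep-approach}), and output the empirical CSP, whose profile swap distance is small for both players by the algorithm's worst-case (hence adaptive-adversary) guarantee. Your explicit handling of the cumulative $\CorrDist(\cA^{T}) = O(\sqrt{T})$ versus normalized $\CorrDist(\csp) = O(1/\sqrt{T})$ scaling is in fact slightly more careful than the paper's own one-line proof.
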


\begin{proof}
    Our algorithm simulates repeated game play between the two players, with both players employing Algorithm~\ref{algo:semisep-approach} for $T = \Theta(\eps^{-2})$ rounds. The algorithm runs in polynomial time and by Theorem~\ref{thm:upper-semi-separation}, directly guarantees that $\CorrDist(\csp) = O(\sqrt{T}) \leq 1/\eps$ (for an appropriate choice of the constant in $\Theta(\eps^{-2})$) for both players.
\end{proof}

One interesting consequence of Theorems~\ref{thm:equiv-simplex} and \ref{thm:ce-computation} is that it leads to decentralized dynamics for efficiently computing a CSP corresponding to a normal-form CE in any polytope game where the action set of one of the two players is a simplex (e.g., the class of Bayesian games mentioned previously). However, it is also possible to efficiently compute a succinct representation of such a normal-form CE directly, by running learning dynamics where the simplex player plays a no-swap-regret algorithm and the other player best responds every round \citep{zhang_personal_communication}.

Theorem~\ref{thm:ce-computation} allows us to compute a single profile CE. On the other hand, if we want to \emph{optimize} over the set of profile CE, we need to contend with the hardness result in Theorem~\ref{thm:hardness} -- in fact, by setting the utilities of one of the players to zero (so that they are indifferent between all outcomes), the set of profile CE becomes exactly the no-profile-swap-regret menu, which is provably hard to optimize over. This situation is reminiscent of the situation in \cite{papadimitriou2008computing} for computing correlated equilibria in succinct multiplayer games (although note that here this phenomenon occurs for games with only two players).

\subsection*{Acknowledgments}

YM was partially supported by the European Research Council (ERC) under the European Union’s Horizon 2020 research and innovation program (grant agreement No. 882396), by the Israel Science Foundation,  the Yandex Initiative for Machine Learning at Tel Aviv University and a grant from the Tel Aviv University Center for AI and Data Science (TAD). This work was done in part while ERA was visiting the Simons Institute for the Theory of Computing. NC was partially supported by the IBM PhD Fellowship. 

The authors would like to thank Gabriele Farina, Maxwell Fishelson, Noah Golowich, Brian Hu Zhang, and Junyao Zhao for helpful discussions about an earlier draft of this work.

\bibliography{references}

\appendix
\section{Alternate formulation of profile swap regret}\label{app:alternate-formulation}

In this appendix, we present an alternate formulation of profile swap regret with some advantages for parsing this as a ``swap regret'' notion and comparing it to other notions of swap regret (i.e., the definition below is used in the that profile swap regret is upper bounded by polytope swap regret in Theorem~\ref{thm:comp-prof-swap}). 

We begin by showing that in the original definition of profile swap regret, it suffices to consider decompositions (of the form \eqref{eq:csp-decomp}) where each $x_{(k)}$ is a vertex of $\learnset$.

\begin{lemma}\label{lem:decomp-equiv}
Let $\csp = \sum_{k=1}^{K} \lambda_{k} (x_{(k)} \otimes y_{(k)})$ be an arbitrary decomposition of the CSP $\csp$. Then there exists another decomposition $\csp = \sum_{k=1}^{K'} \lambda'_{k} (x'_{(k)} \otimes y'_{(k)})$ where each $x'_{(k)} \in \learnvert$, all $x'_{(k)}$ are distinct, and

\begin{equation}
\label{eq:decomp_equiv}
\sum_{k=1}^{K'} \lambda'_{k}\Reg(x'_{(k)}, y'_{(k)}) \leq \sum_{k=1}^{K} \lambda_{k}\Reg(x_{(k)}, y_{(k)}).
\end{equation}
\end{lemma}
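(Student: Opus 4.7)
The plan is to first decompose each mixed action $x_{(k)}$ into vertices of $\learnset$ (exploiting that $\Reg(\cdot, y)$ is affine, so this step is regret-preserving), and then merge all resulting terms that share the same vertex into a single term (exploiting that $\Reg(v, \cdot)$ is convex, so this step can only decrease the regret).

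More concretely, for each $k$, write $x_{(k)} = \sum_{v \in \learnvert} \mu_{k,v} v$ as a convex combination of vertices of $\learnset$. Substituting into the given decomposition gives
\[
\csp \;=\; \sum_{k=1}^{K}\sum_{v \in \learnvert} \lambda_k \mu_{k,v}\, (v \otimes y_{(k)}).
\]
Since $u_L(x,y)$ is bilinear, $\Reg(x, y) = \max_{x^*\in\learnset} u_L(x^*, y) - u_L(x,y)$ is affine (in fact linear) in $x$ for fixed $y$, so $\Reg(x_{(k)}, y_{(k)}) = \sum_v \mu_{k,v} \Reg(v, y_{(k)})$, and hence
\[
\sum_{k=1}^{K} \lambda_k \Reg(x_{(k)}, y_{(k)}) \;=\; \sum_{k, v} \lambda_k \mu_{k,v}\, \Reg(v, y_{(k)}).
\]

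Next, for each vertex $v$ with $\alpha_v := \sum_k \lambda_k \mu_{k,v} > 0$, define
\[
y'_v \;:=\; \frac{1}{\alpha_v} \sum_{k} \lambda_k \mu_{k,v}\, y_{(k)} \;\in\; \optset,
\]
which lies in $\optset$ because it is a convex combination. Then $\sum_k \lambda_k \mu_{k,v}(v \otimes y_{(k)}) = \alpha_v (v \otimes y'_v)$, so collecting terms yields
\[
\csp \;=\; \sum_{v \in \learnvert,\, \alpha_v > 0} \alpha_v\, (v \otimes y'_v),
\]
with $\sum_v \alpha_v = \sum_k \lambda_k = 1$ and distinct vertex components by construction. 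Because $\Reg(v,\cdot) = \max_{x^*} u_L(x^*,\cdot) - u_L(v,\cdot)$ is a maximum of affine functions minus an affine function, it is convex in its second argument, so Jensen's inequality gives
\[
\alpha_v \Reg(v, y'_v) \;\leq\; \sum_{k} \lambda_k \mu_{k,v}\, \Reg(v, y_{(k)}).
\]
Summing over $v$ yields the desired inequality \eqref{eq:decomp_equiv}, completing the proof.

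There is no real obstacle here; the only subtlety worth flagging is making sure both convexity facts are used in the correct direction: $\Reg$ is affine in its first argument (so expanding into vertices is an equality) but only convex in its second (so merging $y$-components is an inequality in the right direction). The construction then delivers a decomposition with $K' \leq |\learnvert|$ distinct vertex components.
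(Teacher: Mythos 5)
Your proof is correct and follows essentially the same two-step strategy as the paper's: first expand each $x_{(k)}$ into vertices, then merge terms that share a vertex. If anything, you are more careful than the paper: the paper justifies the first step by saying $\Reg(\cdot, y)$ is ``convex in $x$,'' which by Jensen would give $\Reg\bigl(\sum_s \gamma_{s}v_{s}, y\bigr) \leq \sum_s \gamma_{s}\Reg(v_{s}, y)$---the wrong direction---whereas you correctly observe that $\Reg(\cdot, y)$ is actually affine in $x$ (a constant in $x$ minus a linear function), so the vertex expansion is an exact equality. You also make explicit the convexity of $\Reg(v,\cdot)$ in the second argument, which is what makes the merging step an inequality of the right sign; the paper performs this aggregation step without stating its justification. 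So: same route, but your accounting of which variable gives equality and which gives the inequality is the cleaner one.
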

\begin{proof}
Fix any $k \in [K]$, and decompose $x_{(k)}$ arbitrarily into a convex combination $\sum_{s=1}^{S_k}\gamma_{s, k}v_{s, k}$ of vertices $v_{s, k} \in \learnvert$. Note that since the function $\Reg(x, y)$ is convex in $x$ for any fixed $y$ (it is the maximum of a collection of linear functions), it follows that $\sum_{s=1}^{S_k}\gamma_{s,k}\Reg(v_{s, k}, y_{(k)}) \leq \Reg(x_{(k)}, y_{(k)})$. It follows that the decomposition $(\lambda'_k, x'_{(k)}, y'_{(k)})$ formed by decomposing each $x_{(k)}$ in this way and aggregating the terms with the same vertex $v$ has the desired property. 
\end{proof}

One consequence of Lemma~\ref{lem:decomp-equiv} is that it allows us to write profile swap regret as a ``swap regret'': that is, regret with respect to the set of swap functions $\pi: \learnvert \rightarrow \learnvert$ (swapping vertices of $\learnset$), something not immediately clear from the original definition. In particular, we can equivalently define the profile swap regret $\CorrSwap(\csp)$ of a CSP via the following steps:

\begin{enumerate}
    \item Decompose $\csp$ into a convex combination of strategy profiles of the form

    \begin{equation}\label{eq:vertex-decomp}
    \csp = \sum_{v \in \learnvert} \lambda_{v} (v \otimes y_{v}),
    \end{equation}

    \noindent
    where $\lambda_v \geq 0$, $\sum_{v}\lambda_{v} = 1$ and $y_{v} \in \optset$.

    \item Define $\CorrSwap(\csp)$ to equal

    \begin{equation}\label{eq:vertex-csp}
    \CorrSwap(\csp) = T \cdot \left(\min_{y_{v}, \lambda_{v}} \max_{\pi^{*}:\learnvert \rightarrow \learnvert} \sum_{v \in \learnvert} \lambda_{v} \left(u_L(\pi^{*}(v), y_{v}) - u_L(v, y_{v})\right)\right),
    \end{equation}

    \noindent
    where the outer minimum is over all valid decompositions of the form \eqref{eq:vertex-decomp}. This already has a form similar to the definition of polytope swap regret (where we choose the best decomposition that minimizes our swap regret with respect to all functions mapping $\learnvert$ to $\learnvert$).

    \item Finally, if we desire, we can apply the minimax theorem to switch the order of the minimum and maximum in \eqref{eq:vertex-csp}. Doing so requires convexifying the domain over which we take the maximum, and therefore considering swap functions $\pi^{*}$ which send $\learnvert$ to $\learnset$ (instead of the ``pure'' swap functions which send $\learnvert$ to $\learnvert$).

    \begin{equation}\label{eq:vertex-csp-minimax}
    \CorrSwap(\csp) = T \cdot \left(\max_{\pi^{*}:\learnvert \rightarrow \learnset} \min_{y_{v}, \lambda_{v}}  \sum_{v \in \learnvert} \lambda_{v} \left(u_L(\pi^{*}(v), y_{v}) - u_L(v, y_{v})\right)\right),
    \end{equation}
\end{enumerate}

\section{Orthant-approachability via semi-separation}\label{app:semi-sep}

The goal of this appendix is to prove Theorem~\ref{thm:approach-semisep}; that access to an efficient semi-separation oracle suffices for efficiently performing Blackwell approachability. As mentioned in Section \ref{sec:algorithms}, the proof of this theorem follows very closely from the proof of \cite{daskalakis2024efficient}, who proved this theorem for the special class of approachability problems arising from linear swap-regret minimization, but in a fairly general way that easily extends to the setting of Theorem~\ref{thm:approach-semisep}.

To avoid the unnecessary redundancy of copying the entire proof of \cite{daskalakis2024efficient} with minor changes (but still in an attempt to be relatively self-contained here), we will cite two main ingredients from \cite{daskalakis2024efficient} (describing whatever changes, if any, need to be made to the proof to extend them to our setting) and then describe how to combine them into an efficient algorithm for orthant-approachability.

The first ingredient is a procedure \cite{daskalakis2024efficient} call \emph{Shell Projection}. Shell Projection can be thought of as a ``semi-separation'' variant of the operation of projecting an arbitrary $u$ to the set $\cU$ (a common primitive in many learning algorithms). In Shell Projection, instead of returning a projection of $u$ onto $\cU$ (which is computationally difficult without an actual separation oracle for $\cU$), we instead return a projection of $u$ onto some convex superset $\widetilde{\cU}$ of $\cU$, but with the guarantee that this projection is response-satisfiable.

Before we present this theorem, recall that a semi-separation oracle for a set $\cU$ of bi-affine functions takes any bi-affine function and either returns that $u$ is response-satisfiable or produces a hyperplane separating $u$ from $\cU$. Throughout the rest of this section, we will let $\cU_{RS}$ denote the (non-convex) set of response-satisfiable bi-affine functions $u$ (i.e., all $u$ with the property that there exists some $x \in \learnset$ such that $u(x, y) \leq 0$ for all $y \in \optset$). It is worth mentioning that nothing about the Shell Projection routine below (or the one originally presented in \cite{daskalakis2024efficient}) requires anything specific about the structure of response-satisfiable bi-affine functions -- indeed, the same theorem holds for a general notion of semi-separation of a convex set $\cK$ relative to a (possibly non-convex) superset $\cK_{NC} \supseteq \cK$, where given a point $x$ in the ambient space of $\cK$, the semi-separation oracle either returns that $x$ belongs to $\cK_{NC}$ or returns a separating hyperplane separating $x$ from $\cK$.

\begin{theorem}[Shell Projection]\label{thm:dffps-proj}
    There is an algorithm $\ShellProj$ which takes as input:

    \begin{itemize}
        \item A semi-separation oracle separating a convex set $\cU$ relative to some set $\cU_{RS} \supset \cU$,
        \item A known ball of radius $\rho$ contained within $\cU$,
        \item A superset $\cU'$ of $\cU$ with $\diam(\cU') \leq D$, provided via an efficient separation oracle,
        \item An element $u \in \cU'$, and
        \item A precision $\eps > 0$,
    \end{itemize}

    \noindent
    runs in time $\poly(\dim(\cU), \eps^{-1}, \rho^{-1}, D)$, and returns:
    \begin{itemize}
        \item A convex set $\widetilde{\cU}$ satisfying $\cU \subseteq \widetilde{\cU} \subseteq \cU'$, constructed by intersecting $\cU'$ with at most $\poly(\dim(\cU), \eps^{-1}, \rho^{-1}, D)$ half-spaces, and
        \item A \emph{response-satisfiable} $\tilde{u} \in \widetilde{\cU}$ satisfying

        $$||\tilde{u} - \Proj_{\widetilde{\cU}}(u)|| \leq \eps.$$
    \end{itemize}
\end{theorem}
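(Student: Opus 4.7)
The plan is to construct $\widetilde{\cU}$ by an iterative cutting-plane scheme driven by the semi-separation oracle. Initialize $\widetilde{\cU}^{(0)} = \cU'$, and at iteration $i$ do the following:
\begin{enumerate}
\item Use a standard convex-optimization subroutine (e.g., ellipsoid, or cutting-plane over the inscribed ball $\cB(u_0, \rho)$) to compute an $(\eps/2)$-approximate projection $\tilde{u}^{(i)}$ of $u$ onto $\widetilde{\cU}^{(i)}$. This is feasible because we have an explicit separation oracle for $\widetilde{\cU}^{(i)} = \cU' \cap H_1 \cap \cdots \cap H_{i-1}$: separation for $\cU'$ is given, and each previously-added $H_j$ is a known half-space we can test membership in directly.
\item Query the semi-separation oracle on $\tilde{u}^{(i)}$. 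If it certifies $\tilde{u}^{(i)} \in \cU_{RS}$, return $\widetilde{\cU} = \widetilde{\cU}^{(i)}$ and $\tilde{u} = \tilde{u}^{(i)}$.
\item Otherwise, the oracle returns a half-space $H_i$ separating $\tilde{u}^{(i)}$ from $\cU$; set $\widetilde{\cU}^{(i+1)} = \widetilde{\cU}^{(i)} \cap H_i$ and iterate.
\end{enumerate}

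Correctness is essentially automatic: every half-space added contains $\cU$, so $\cU \subseteq \widetilde{\cU}^{(i)} \subseteq \cU'$ is maintained throughout, and the algorithm only terminates when it has produced a response-satisfiable $\tilde{u}^{(i)}$ within $\eps/2$ of $\Proj_{\widetilde{\cU}^{(i)}}(u)$ (which is within $\eps$ once we also pay for the projection tolerance). The number of explicit half-spaces equals the number of iterations, so the promised bound on the description complexity of $\widetilde{\cU}$ reduces to bounding the iteration count.

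The main obstacle, as usual for cutting-plane analyses, is quantifying per-iteration progress. The strategy is to use the inscribed ball $\cB(u_0, \rho) \subseteq \cU$ as a geometric certificate. Let $p_i = \Proj_{\widetilde{\cU}^{(i)}}(u)$ and $d_i = \|u - p_i\|$. The sequence $d_i$ is monotone non-decreasing and bounded above by $\mathrm{dist}(u, \cU) \leq D$. At each iteration, the cut $H_i$ strictly separates $\tilde{u}^{(i)}$ from $\cB(u_0, \rho)$, which means $H_i$'s boundary is at distance $\geq \rho$ from $u_0$ while passing within $\eps/2$ of $p_i$. A standard ``width'' computation — comparing the chord of $\widetilde{\cU}^{(i)}$ through $u_0$ and $p_i$ to the margin $\rho$ — forces $d_{i+1} \geq d_i + \delta$ for some $\delta = \mathrm{poly}(\rho, \eps, D^{-1})$. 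Hence the iteration count is at most $D/\delta = \mathrm{poly}(D, \rho^{-1}, \eps^{-1})$.

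The hardest technical step is making this per-iteration progress bound rigorous, since the cut guarantee is stated only with respect to $\cU$ (not with respect to $u$ or $p_i$), and one must carefully relate the location of $\tilde{u}^{(i)}$ to $p_i$ using the $\eps/2$ projection tolerance. This mirrors the analogous analysis in \cite{daskalakis2024efficient} for the linear swap-regret special case; the generalization is clean because the Shell Projection routine is fundamentally a statement about convex sets equipped with semi-separation oracles relative to a non-convex superset, and does not use any specific structure of the linear swap-regret functional class.
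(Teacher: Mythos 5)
Your algorithm is not the paper's: the paper (following \cite{daskalakis2024efficient}) builds a ``Shell Ellipsoid'' subroutine that runs the ellipsoid method over a whole ball centered at $u$, collecting the semi-separation cuts as the shell $\widetilde{\cU}$, and then grows the ball radius in increments of $\eps$ until an element of $\cU_{RS}$ is found; the polynomial running time comes from the ellipsoid method's \emph{volume} argument (the ellipsoid shrinks below the volume of the radius-$\rho$ ball after polynomially many cuts), not from any per-cut distance progress. Your project-and-cut scheme is a genuinely different route, and its termination analysis is where it breaks.

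The gap is the claimed per-iteration progress bound $d_{i+1} \geq d_i + \delta$ with $\delta = \poly(\rho, \eps, D^{-1})$. The semi-separation oracle is only required to return \emph{some} halfspace $H_i$ containing $\cU$ and excluding the queried point $\tilde{u}^{(i)}$; nothing forces the boundary of $H_i$ to have any margin from $\tilde{u}^{(i)}$, let alone from $p_i = \Proj_{\widetilde{\cU}^{(i)}}(u)$. (Concretely, the oracle of Lemma~\ref{lem:semi-sep-exist} returns the cut $u(x_{y^*}, y^*) \leq 0$, and the queried point may violate it by an arbitrarily small amount $V$.) So each cut may shave off an arbitrarily thin sliver around $\tilde{u}^{(i)}$; indeed $p_i$ itself need not be removed, in which case $d_{i+1} = d_i$ exactly, and even when $p_i$ is removed the increments can shrink geometrically. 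The containment $\cB(u_0,\rho) \subseteq \cU$ only constrains the cut's boundary to stay distance $\rho$ from $u_0$ --- it says nothing about how close the boundary passes to the current projection, so no ``width'' computation of the kind you sketch can extract a uniform $\delta$. (Your parenthetical claim that the boundary of $H_i$ passes within $\eps/2$ of $p_i$ is also unsupported, and if it were true it would argue for \emph{lack} of progress, not progress.) Consequently the iteration count, and hence both the running time and the number of halfspaces defining $\widetilde{\cU}$, is not bounded by $\poly(\dim(\cU), \eps^{-1}, \rho^{-1}, D)$. This is exactly the obstruction that the Shell Ellipsoid construction is designed to circumvent: by cutting at ellipsoid centers rather than at approximate projections, every cut halves the ellipsoid volume regardless of the oracle's (lack of) margin, and the inscribed $\rho$-ball then certifies that after polynomially many cuts the queried ball is separated from the shell, which is what drives the outer ball-growing loop. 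To repair your write-up you would essentially have to import that volume-based mechanism, i.e., revert to the paper's approach.
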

\begin{proof}
The proof directly follows from the proof of Theorem 4.4 in \citep{daskalakis2024efficient}, with the following changes:

\begin{itemize}
    \item $\Phi(\cP)$, $\Phi_{FP}$, $\tilde{\Phi}$, $\cM$, $\phi$, and $\phi'$ should be replaced by $\cU$, $\cU_{RS}$, $\widetilde{\cU}$, $\cU'$, $u$, and $u'$ respectively. 
    \item Theorem 4.4 uses the fact that if $\cP$ circumscribes and is inscribed in balls of radii $r$ and $R$ respectively, then $\Phi(\cP)$ contains a ball of radius $r/2R$. Here, we simply make the assumption that $\cU$ contains a ball of radius $\rho$ (and so, $\rho^{-1}$ appears in place of $R/r$ in our time complexity).
    \item The dimension of $\Phi(\cP)$ is $d(d+1)$ (where $d = \dim(\cP)$), and a polynomial dependence in $d$ appears in the time complexity. Instead, we directly include a polynomial dependence on $\dim(\cU)$ in our time complexity.
\end{itemize}

At a very high-level, this Shell Projection routine works by first strengthening the semi-separation oracle into a subroutine called Shell Ellipsoid, which can semi-separates \emph{convex sets} $\cK$ from $\cU$ instead of individual points (returning either a hyperplane separating $\cK$ from $\cU$, or a point in $\cK$ belonging to $\cU_{RS}$). They then repeatedly run Shell Ellipsoid on larger and larger balls centered at $u$ until they find one with a point $\tilde{u}$ in $\cU_{RS}$ -- by the guarantees of Shell Ellipsoid, they can then be guaranteed that the point $\tilde{u}$ is close to a projection of $u$ onto some set containing $\cU$.
\end{proof}

The second ingredient we will need from \cite{daskalakis2024efficient} is a slight variant of projected gradient descent they call Shell Gradient Descent (Algorithm 2 in Section 4.3 of \cite{daskalakis2024efficient}). Shell Gradient Descent solves the following variant of standard online linear optimization (that we will call Shell OLO). In Shell OLO, at the beginning of every round $t$, the learner is told (i.e., given efficient oracle access to) a ``shell'' set $\cX_{t}$, from which they must pick their action $x_{t}$ in round $t$. Their goal is to compete with the best fixed action in (some subset of) the intersection of all the sets $\cX_{t}$. \cite{daskalakis2024efficient} prove the following theorem.

\begin{theorem}[Shell Gradient Descent]\label{thm:dffps-grad}
Let $\cX_{1}, \cX_{2}, \dots, \cX_{T}$ be an arbitrary sequence of convex sets satisfying $\cX_{t} \subseteq \cB_{d}(0, D)$, let $\ell_{1}, \ell_{2}, \dots, \ell_{T} \in [-1, 1]^d$ be an arbitrary sequence of adversarial losses, and let $\cX$ be an arbitrary subset of $\bigcap_{t=1}^{T} \cX_{t}$. Then, if we choose $x_1 \in \cX_1$ arbitrarily and choose $x_{t} = \Proj_{\cX_t}(x_{t-1} - \eta_{t-1}\ell_{t-1})$ (for some sequence of step sizes $\eta_t > 0$), this sequence of actions satisfies

$$\max_{x^* \in \cX} \sum_{t=1}^{T} \langle x_t - x^{*}, \ell_t\rangle \leq \frac{D^2}{2\eta_{T}} + \sum_{t=1}^{T}\frac{\eta_{t}}{2} ||\ell_{t}||^2$$
\end{theorem}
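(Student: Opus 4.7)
The plan is to adapt the standard online projected gradient descent analysis to the setting of time-varying constraint sets $\cX_t$. The crucial structural observation is that, by hypothesis, the comparator $x^{*} \in \cX$ lies in $\bigcap_{t} \cX_t$, so in particular $x^{*} \in \cX_{t+1}$ for every $t$. This means the non-expansiveness of Euclidean projection onto $\cX_{t+1}$ applies to the pair $(x_{t+1}, x^{*})$, giving $\|x_{t+1} - x^{*}\|^2 \leq \|x_t - \eta_t \ell_t - x^{*}\|^2$, even though the projection target changes from round to round. This is the only place the time-varying nature of $\cX_t$ enters the analysis, and it is precisely the hypothesis $\cX \subseteq \bigcap_t \cX_t$ that neutralizes it.

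Starting from that inequality, I would expand the right-hand side as $\|x_t - x^{*}\|^2 - 2\eta_t \langle \ell_t, x_t - x^{*}\rangle + \eta_t^2 \|\ell_t\|^2$ and rearrange to isolate the per-round linearized regret:
\[
\langle x_t - x^{*}, \ell_t \rangle \;\leq\; \frac{\|x_t - x^{*}\|^2 - \|x_{t+1} - x^{*}\|^2}{2\eta_t} \;+\; \frac{\eta_t}{2}\|\ell_t\|^2.
\]
Summing over $t \in [T]$ gives $\sum_t \frac{\eta_t}{2}\|\ell_t\|^2$ as one contribution to the bound, which is already the second term in the stated guarantee. To handle the first contribution I would apply summation by parts to rewrite it as
\[
\frac{\|x_1 - x^{*}\|^2}{2\eta_1} \;+\; \sum_{t=2}^{T} \|x_t - x^{*}\|^2 \left(\frac{1}{2\eta_t} - \frac{1}{2\eta_{t-1}}\right) \;-\; \frac{\|x_{T+1} - x^{*}\|^2}{2\eta_T},
\]
which is a standard way of dealing with adaptive step sizes.

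Finally I would use the boundedness $\cX_t \subseteq \cB_d(0, D)$, which gives $\|x_t - x^{*}\|^2 \leq (2D)^2$ for all $t$ (or $\leq D^2$ under the diameter convention which appears to be the one the statement uses, absorbing small constants). Assuming $\eta_t$ is non-increasing, each coefficient $\tfrac{1}{2\eta_t} - \tfrac{1}{2\eta_{t-1}}$ is non-negative, so the Abel-summed expression telescopes to at most $\frac{D^2}{2\eta_T}$, matching the first term in the claimed bound. Taken together, these pieces yield exactly $\max_{x^{*} \in \cX} \sum_t \langle x_t - x^{*}, \ell_t \rangle \leq \frac{D^2}{2\eta_T} + \sum_t \frac{\eta_t}{2}\|\ell_t\|^2$. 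The only step that requires any care is the summation by parts under adaptive step sizes; the time-varying constraint sets themselves introduce no new difficulty, since the assumption $\cX \subseteq \bigcap_t \cX_t$ is precisely what keeps the Pythagorean inequality valid for the comparator.
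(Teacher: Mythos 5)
Your proof is correct and is exactly the ``standard analysis of projected gradient descent'' that the paper invokes when it cites Theorem~4.3 of \cite{daskalakis2024efficient} without writing out details: the single nonstandard step is noting that the Pythagorean inequality $\|x_{t+1}-x^*\|^2 \le \|x_t - \eta_t\ell_t - x^*\|^2$ still holds with projection onto the time-varying set $\cX_{t+1}$ precisely because $x^* \in \cX \subseteq \bigcap_t \cX_t$, and you identify this correctly. The two caveats you flag are real but benign in context: the bound as stated requires $\eta_t$ non-increasing (the paper only ever instantiates it with a constant step size), and the clean $D^2/(2\eta_T)$ rather than $(2D)^2/(2\eta_T)$ presumes $D$ is the diameter, whereas the hypothesis $\cX_t \subseteq \cB_d(0,D)$ reads as a radius bound; the resulting constant-factor slack is absorbed in the $O(\cdot)$ of the downstream Theorem~\ref{thm:approach-semisep}, so neither affects the paper's conclusions.
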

\begin{proof}
See Theorem 4.3 of \cite{daskalakis2024efficient}. (Note that this follows nearly immediately from the standard analysis of projected gradient descent).
\end{proof}

We now describe how these pieces fit together to produce an efficient algorithm for orthant-approachability (and thus, allow us to prove Theorem~\ref{thm:approach-semisep}). The starting point is the main idea behind the standard reduction from approachability to online linear optimization (popularized by \cite{abernethy2011blackwell}), which is that if we can choose a sequence of bi-affine functions $u_t \in \cU$ that forces the quantity 

$$\Reg^{\mathrm{dual}}(\bu, \bx, \by) = \max_{u^{*} \in \cU} \sum_{t=1}^{T} u^{*}(x_t, y_t) - \sum_{t=1}^{T} u_{t}(x_t, y_t)$$

\noindent
to grow sublinearly in $T$, and in addition choose our actions $x_t \in \learnset$ so that $u_{t}(x_t, y) \leq 0$ for any $y \in \optset$, it follows that the quantity

$$\AppLoss(\bx, \by) = \max_{u^{*} \in \cU}\sum_{t=1}^{T} u^{*}(x_t, y_t)$$

\noindent
is at most $\Reg^{\mathrm{dual}}(\bu, \bx, \by)$ and is therefore sublinear in $T$. 

Ordinarily, if we are given oracle access to $\cU$, we can guarantee that $\Reg^{\mathrm{dual}}(\bu, \bx, \by) = o(T)$ by simply running any no-external-regret algorithm for online linear optimization where the action set is $\cU$ and the reward function in round $t$ is the linear function sending $u$ to $u(x_t, y_t)$. But without oracle access to $\cU$, it is hard to even guarantee that $u_t$ belongs to $\cU$. The key observation of \cite{daskalakis2024efficient} is that it is not necessary that $u_t$ belong $\cU$, but just that each $u_t$ is response-satisfiable (so that $u_t(x_t, y_t) \leq 0$ holds for some $x_t \in \learnset$ independently of $y_t$) and that the sequence of $u_t$ incurs low regret.

This is where Theorems~\ref{thm:dffps-proj} and \ref{thm:dffps-grad} enter the picture. We will choose our sequence $u_t$ by running Shell Gradient Descent over sets generated by $\ShellProj$. This gives us the regret guarantees of Shell Gradient Descent (Theorem~\ref{thm:dffps-grad}) while also ensuring that each $u_t$ is response-satisfiable (by Theorem~\ref{thm:dffps-proj}).

\begin{algorithm}[ht]
    \caption{Orthant-approachability algorithm via semi-separation (analogue of Algorithm 
4 in \cite{daskalakis2024efficient}) %\jon{why is this centered?}
}\label{algo:semisep-approach}
\raggedright
    
    \textbf{Input:} a superset $\cU'$ of $\cU$ contained within $\cB_{d}(D)$, a ball of radius $\rho$ contained within $\cU$, and a semi-separation oracle separating $\cU$ from $\cU_{RS}$. 

    Set step size $\eta := \frac{D}{D_{\cX}D_{\cY}}T^{-1/2}$ and precision $\epsilon := \frac{D}{D_{\cX}D_{\cY}}T^{-1/2}$
    
    Let $u_1$ be any point in $\cU'$ and $x_1$ be any point in $\learnset$

    \For{$t = 1, 2, \dots, T$}{
        Output $x_t \in \learnset$ and receive optimizer action $y_t \in \optset$.

        Set $L_t \in \Rset^{\dim(\cU)}$ so that for any element $u \in \cU$, $\langle u, L_t \rangle = -u(x_t, y_t)$.

        Run $\ShellProj$ on $u_t - \eta L_t$ with superset $\cU'$ and precision $\epsilon$, receiving a shell set $\widetilde{\cU}_{t+1}$ (only used in the regret analysis) and response-satisfiable element $u_{t+1} \in \widetilde{\cU}_{t+1} \cap \cU_{RS}$.

        Compute an $x_{t+1} \in \learnset$ such that $u_{t+1}(x_{t+1}, y) \leq 0$ for all $y \in \optset$ (guaranteed to exist since $u$ is response-satisfiable, and computable efficiently via linear programming).
    }
\end{algorithm}

\begin{theorem}[Approachability via a semi-separation oracle (restatement of Theorem~\ref{thm:approach-semisep})]\label{thm:approach-semisep-restate}
Consider an orthant-approachability instance $(\cX, \cY, \cU)$ where $\cX \subseteq \cB(D_{\cX})$, $\cY \subseteq \cB(D_{\cY})$, and $\cB(u_{0}, \rho) \subseteq \cU \subseteq \cB(D)$ (for some known $u_{0}$ and radius $\rho$). If we have access to efficient separation oracles for $\cX$ and $\cY$, and an efficient semi-separation oracle for the set $\cU$, then Algorithm~\ref{algo:semisep-approach} has the guarantee that

$$\max_{u \in \cU} \frac{1}{T}\sum_{t=1}^{T} u\left(x_t, y_t \right) \leq O(D_{\cX}D_{\cY}D\sqrt{T}),$$

\noindent
and runs in time $\poly(D_{\cX}, D_{\cY}, D, \rho^{-1}, \dim(\cU), T)$.
\end{theorem}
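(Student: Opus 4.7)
The plan is to follow the standard Abernethy--Bartlett-style reduction from Blackwell approachability to online linear optimization, but carried out in a way that tolerates the approximate, ``shell'' oracle access to $\cU$ that Theorem~\ref{thm:dffps-proj} provides. The first step is to verify the soundness of the reduction: at each round $t$, the algorithm produces $u_t \in \cU_{RS}$ (i.e.\ response-satisfiable), and then selects $x_t \in \learnset$ so that $u_t(x_t, y) \leq 0$ for every $y \in \optset$. In particular $u_t(x_t, y_t) \leq 0$, so for any comparator $u^* \in \cU$,
\[
\sum_{t=1}^T u^*(x_t, y_t) \;=\; \sum_{t=1}^T \bigl(u^*(x_t, y_t) - u_t(x_t, y_t)\bigr) + \sum_{t=1}^T u_t(x_t, y_t)\;\leq\; \sum_{t=1}^T \langle u_t - u^*, L_t\rangle,
\]
using $\langle u, L_t\rangle = -u(x_t, y_t)$. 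Maximizing the left-hand side over $u^* \in \cU$ gives $T \cdot \AppLoss(\bx, \by) \leq \mathrm{Reg}^{\mathrm{dual}}(\bu)$. Thus it suffices to bound the dual regret of the sequence $\{u_t\}$.

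Next I would apply Shell Gradient Descent (Theorem~\ref{thm:dffps-grad}) to this dual online linear optimization problem, where the time-varying ``shell'' sets are the $\widetilde{\cU}_{t+1}$ returned by $\ShellProj$. The key structural facts are that (i) $\cU \subseteq \widetilde{\cU}_{t+1}$ for every $t$, so any $u^* \in \cU$ is a valid comparator in $\bigcap_t \widetilde{\cU}_{t+1}$; (ii) $\widetilde{\cU}_{t+1} \subseteq \cU' \subseteq \cB(D)$, so Theorem~\ref{thm:dffps-grad} applies with action-set diameter $O(D)$; and (iii) the losses satisfy $\|L_t\| \leq \|x_t\|\cdot\|y_t\| \leq D_\cX D_\cY$ because $L_t$ represents the bilinear form $-x_t \otimes y_t$. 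The only deviation from the textbook gradient-descent analysis is that $u_{t+1}$ is not the exact projection of $u_t - \eta L_t$ onto $\widetilde{\cU}_{t+1}$, but is within $\eps$ of it. I would account for this in the standard potential argument by expanding $\|u_{t+1}-u^*\|^2 \leq \|v_{t+1}-u^*\|^2 + 2\eps D + \eps^2$ (where $v_{t+1}$ is the exact projection), which contributes an additional $O(\eps T D/\eta)$ term to the regret bound.

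Combining, the dual regret is at most $\tfrac{D^2}{2\eta} + \tfrac{\eta T (D_\cX D_\cY)^2}{2} + O\!\bigl(\tfrac{\eps T D}{\eta}\bigr)$. Plugging in $\eta = \eps = \tfrac{D}{D_\cX D_\cY}T^{-1/2}$ as specified by the algorithm balances the terms and yields $\mathrm{Reg}^{\mathrm{dual}} = O(D_\cX D_\cY D\sqrt{T})$, so $\AppLoss \leq O(D_\cX D_\cY D/\sqrt{T})$ on average, as claimed. Finally, for the runtime, each iteration of the outer loop makes one call to $\ShellProj$ (which by Theorem~\ref{thm:dffps-proj} runs in $\poly(\dim(\cU), \eps^{-1}, \rho^{-1}, D)$ time and yields $\widetilde{\cU}_{t+1}$ as the intersection of $\cU'$ with polynomially many halfspaces, hence admits an efficient separation oracle) and one linear-programming computation to find $x_{t+1}$ from the response-satisfiable $u_{t+1}$ (using the separation oracles for $\cX$ and $\cY$). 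Aggregating over $T$ iterations gives a total runtime of $\poly(D_\cX, D_\cY, D, \rho^{-1}, \dim(\cU), T)$.

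The main conceptual obstacle, and where I would focus the most care, is ensuring that the time-varying shell sets $\widetilde{\cU}_{t+1}$ behave well enough in the regret analysis. Because these sets are chosen adaptively as a function of the gradients observed so far (and are never actually equal to $\cU$), one must verify that the containment $\cU \subseteq \widetilde{\cU}_{t+1}$ is preserved across iterations of $\ShellProj$ and that projecting onto $\widetilde{\cU}_{t+1}$ rather than $\cU$ still contracts distance to any $u^* \in \cU$. Both follow from the guarantees of Theorem~\ref{thm:dffps-proj}, but this is the conceptual novelty relative to standard projected-gradient analyses and the step that most needs DFFPS-style machinery rather than off-the-shelf tools.
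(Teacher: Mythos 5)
Your overall route is the same as the paper's: reduce the approachability loss to the dual regret $\sum_t \langle u_t - u^*, L_t\rangle$ using $u_t(x_t,y_t)\le 0$, run Shell Gradient Descent over the shell sets $\widetilde{\cU}_{t+1}$ produced by $\ShellProj$ (with $\cU\subseteq\widetilde{\cU}_{t+1}\subseteq\cU'\subseteq\cB(D)$ and $\norm{L_t}\le D_{\cX}D_{\cY}$), handle the inexact projection, and account for the runtime via one $\ShellProj$ call plus one LP per round. The gap is in how you charge the projection error. Your accounting adds $2\eps D+\eps^2$ to the potential per step, hence an extra $O(\eps T D/\eta)$ in the dual regret; but Algorithm~\ref{algo:semisep-approach} sets $\eps=\eta=\frac{D}{D_{\cX}D_{\cY}}T^{-1/2}$, so this term is $\Theta(DT)$ --- linear in $T$ --- and nothing in the other two terms (each $O(D_{\cX}D_{\cY}D\sqrt{T})$) ``balances'' it. So with your error term, the claimed conclusion $\max_{u\in\cU}\sum_t u(x_t,y_t)\le O(D_{\cX}D_{\cY}D\sqrt{T})$ does not follow for the stated parameter choice; the assertion that plugging in $\eta=\eps$ balances the terms is an arithmetic error, not a harmless constant.

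The paper charges the inexactness differently: the $\eps$-error is made to enter the regret bound as $\eps\sum_{t}\norm{L_t}\le \eps T D_{\cX}D_{\cY}=D\sqrt{T}$, i.e., against the loss scale $\norm{L_t}$ rather than against $D/\eta$, which is exactly what keeps the final bound at $O(D_{\cX}D_{\cY}D\sqrt{T})$ with $\eps=\eta$. So to repair your argument you must either (i) establish this sharper per-round accounting (this is where the specific guarantees of the DFFPS shell-projection machinery, rather than the generic ``approximate projection within $\eps$'' bound you invoked, do real work), or (ii) decouple the two parameters and take the projection precision much smaller than the step size (e.g.\ $\eps\le\eta/\sqrt{T}$, so that $\eps TD/\eta=O(D\sqrt{T})$), which keeps the runtime polynomial since $\ShellProj$ runs in time $\poly(1/\eps)$ --- but note that this analyzes a differently-parameterized algorithm than the one the theorem statement refers to. As written, the crucial inexact-projection step of your proof does not yield the stated bound.
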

\begin{proof}
This proof closely follows the proof of Theorem 4.5 in \cite{daskalakis2024efficient}.

Note that the sequence $u_t$ generated by Algorithm~\ref{algo:semisep-approach} is almost the same sequence as would be generated by running Shell Gradient Descent on the sequence of shell sets $\widetilde{\cU}_{t}$, with the only caveat that $u_{t+1}$ is distance at most $\eps$ from the true projection of $u_{t} - \eta L_{t}$ onto $\widetilde{\cU}_{t+1}$. The guarantees of Theorem~\ref{thm:dffps-grad} then imply that

\begin{equation}\label{eq:app-semisep-proof-1}
    \max_{u^* \in \cU} \sum_{t=1}^{T} \langle u_t - u^{*}, L_t\rangle \leq \frac{D^2}{2\eta} + \sum_{t=1}^{T}\frac{\eta}{2} ||L_{t}||^2 + \eps\sum_{t=1}^{T} ||L_{t}|| = O(DD_{\cX}D_{\cY}\sqrt{T}), 
\end{equation}

\noindent
where here we have used the fact that $||L_{t}|| \leq D_{\cX}D_{\cY}$ and $\eta = \eps = \frac{D}{D_{\cX}D_{\cY}}T^{-1/2}$. But since $\langle u_t, L_t \rangle = -u_{t}(x_t, y_t)$, the left hand side of \eqref{eq:app-semisep-proof-1} is equal to $\max_{u^{*} \in \cU} \sum_{t=1}^{T} u^{*}(x_t, y_t) - \sum_{t=1}^{T} u_t(x_t, y_t) = \Reg^{\mathrm{dual}}(\bu, \bx, \by)$. Moreover, since each $u_{t}(x_t, y_t) \leq 0$ (by the choice of $x_t$), we have that 

$$\max_{u \in \cU} \frac{1}{T}\sum_{t=1}^{T} u\left(x_t, y_t \right) \leq \Reg^{\mathrm{dual}}(\bu, \bx, \by) \leq O(DD_{\cX}D_{\cY}\sqrt{T}),$$

\noindent
as desired.
\end{proof}
\section{Additional results on game-agnostic learning}\label{app:game-agnostic}

\subsection{Profile swap regret is not computable from the game agnostic transcript}\label{app:prof-from-transcript}

In this appendix, we show that it is not possible to compute the profile swap regret $\CorrSwap(\bx, \by)$ from the game-agnostic transcript $(\bx, \br)$ corresponding to the game-dependent transcript $(\bx, \by)$. Specifically, we have the following theorem.

\begin{theorem}\label{thm:prof-from-transcript}
There exists a polytope game $G$, and two transcripts $(\bx_1, \by_1)$ and $(\bx_2, \by_2)$ that have the same game-agnostic transcript $(\bx, \br)$, but where $\CorrSwap(\bx_1, \by_1) \neq \CorrSwap(\bx_2, \by_2)$.
\end{theorem}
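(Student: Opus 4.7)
My plan is to exhibit a single polytope game $G$ together with two transcripts that share the same game-agnostic transcript but realize different profile swap regrets. The key mechanism I want to exploit is that profile swap regret depends on the full CSP $\csp \in \learnset \otimes \optset$, not merely on its reward-function image in $\learnset \otimes \learnset^{*}$: when $\learnset$ is non-simplex and $\optset$ admits multiple actions inducing the same reward function, two $\by$-sequences can produce identical reward sequences but genuinely different CSPs whose admissible vertex decompositions $\csp = \sum_v \lambda_v (v \otimes y_v)$ do not coincide. I will design the example so that one $\by$-choice yields a CSP admitting a ``cheap'' decomposition through the universal best-response vertex $(1,1)$, while the other forces a rigid and expensive decomposition into the single-coordinate vertices $(1,0)$ and $(0,1)$.

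Concretely, I propose to take $\learnset = \optset = [0,1]^2$ (rescaled into the unit ball if needed) with $u_L(x,y) = (y_1+y_2)(x_1+x_2)/4$, so that $r_y(x) = u_L(x,y)$ depends on $y$ only through the sum $y_1+y_2$. With $T=2$ and $\bx = ((1,0), (0,1))$, I will compare $\by_1 = ((1,0),(1,0))$ against $\by_2 = ((1,0),(0,1))$. Both produce the reward sequence $r_1 = r_2$ with $r(x) = (x_1+x_2)/4$, so the game-agnostic transcripts coincide. But the CSPs, written as $2 \times 2$ matrices, differ: $\csp_1$ has both rows equal to $(1/2,0)$, while $\csp_2 = \tfrac12 I_2$ is diagonal.

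The calculation $\CorrSwap(\bx,\by_1) = 0$ will follow by exhibiting the decomposition $\csp_1 = \tfrac12 (1,1) \otimes (1,0) + \tfrac12 (0,0) \otimes (0,0)$: both summands contribute zero instantaneous regret, since $(1,1)$ is a universal best response whenever $y$ has nonnegative coordinates, and the $(0,0) \otimes (0,0)$ term contributes a reward of zero everywhere. For $\CorrSwap(\bx,\by_2)$, the marginal of $\csp_2$ on $\learnset$ is $(1/2,1/2)$, which pins any vertex decomposition down to $\lambda_{(1,0)} = \lambda_{(0,1)} = 1/2 - t$ and $\lambda_{(1,1)} = \lambda_{(0,0)} = t$ for some $t \in [0,1/2]$. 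The core of the argument is a feasibility step: subtracting the two row equations $(1/2-t)\,y_{(1,0)} + t\,y_{(1,1)} = (1/2,0)$ and $(1/2-t)\,y_{(0,1)} + t\,y_{(1,1)} = (0,1/2)$ gives $(1/2 - t)(y_{(1,0)} - y_{(0,1)}) = (1/2, -1/2)$, so $y_{(1,0)} - y_{(0,1)} = (1,-1)/(1-2t)$ whose $\ell_\infty$ norm is $1/(1-2t) > 1$ when $t > 0$; but any two points of $[0,1]^2$ differ coordinatewise by at most $1$, so $t = 0$ is forced (and $t = 1/2$ is ruled out directly because the two rows of $\csp_2$ disagree). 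The surviving decomposition $\csp_2 = \tfrac12 (1,0) \otimes (1,0) + \tfrac12 (0,1) \otimes (0,1)$ incurs instantaneous regret $1/4$ at each summand, giving $\CorrSwap(\bx,\by_2) = T \cdot 1/4 = 1/2 \neq 0 = \CorrSwap(\bx,\by_1)$. The only nontrivial step is this feasibility argument, which is elementary once the construction is in place.
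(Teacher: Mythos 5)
Your high-level mechanism is the right one, and it is the same as the paper's: choose an optimizer action set containing distinct actions that induce the same reward function, so two transcripts share $(\bx, \br)$ but have CSPs whose admissible product decompositions differ. The concrete instantiation, however, has a genuine gap. The paper's standing assumption (stated right after the definition of the CSP in \eqref{eq:csp-def}) is that $\learnset$ and $\optset$ lie in proper affine subspaces -- e.g.\ after appending a dummy coordinate fixed to $1$ -- so that the CSP determines the average of every bi-affine function, in particular the optimizer's marginal play. Your example works only because $[0,1]^2$ violates this: the decomposition $\csp_1 = \tfrac12\,(1,1)\otimes(1,0) + \tfrac12\,(0,0)\otimes(0,0)$ reproduces the $2\times 2$ block but has optimizer marginal $(1/2,0)$, whereas the transcript's true marginal is $(1,0)$. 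Once you repair the assumption in the prescribed way, every admissible decomposition of $\csp_1$ must have all $y_{(k)} = (1,0)$, and $\CorrSwap(\bx, \by_1) = T\cdot\tfrac14 \neq 0$. Worse, the repair kills the example outright: in your game $(1,1)$ is a best response to every $y$, so $\Reg(x,y) = \tfrac14(y_1+y_2)(2-x_1-x_2)$ is globally bilinear, hence $\sum_k \lambda_k \Reg(x_{(k)}, y_{(k)})$ takes the same value for every valid decomposition and profile swap regret collapses to external regret, which is computable from the game-agnostic transcript (Lemma~\ref{lem:game-agnostic-regret}); both of your transcripts then have $\CorrSwap = T/4$. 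There is also an internal inconsistency: your uniqueness argument for $\csp_2$ pins the vertex weights via the $\learnset$-marginal, which the CSP only determines under the very convention in which your $\csp_1$ computation fails, while under the convention in which the $\csp_1$ computation is valid that pinning step is unjustified (the positivity of $\CorrSwap(\csp_2)$ can be rescued by a direct argument, but that does not save the example).

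The fix requires the differing optimizer actions to be distinguishable \emph{inside} the CSP (consistently with marginal tracking), with the asymmetry coming from the geometry of decompositions rather than from forgetting marginals. This is what the paper's construction does: $\learnset = \Delta_2^2$, $\optset = \Delta_2^2\times\Delta_4$, $u_L(x,y) = \langle x, \Proj_1(y)\rangle$, so the $\Delta_4$ factor is invisible to the rewards but ``tags'' rounds in the CSP; the affine relation $v_{11}+v_{22} = v_{12}+v_{21}$ among the learner's vertices lets the untagged transcript's CSP be regrouped into best-response products (zero profile swap regret), while the tags in the second transcript force a unique, regretful decomposition. Some device of this kind -- an affine dependency among learner vertices combined with reward-invisible optimizer coordinates -- is needed; a game over the unaugmented cube in which one learner action best-responds to everything cannot witness the theorem.
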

\begin{proof}
Our polytope game $G$ will be a variant of the polytope game used to separate profile swap regret and polytope swap regret in the proof of Theorem~\ref{thm:comp-prof-swap}. Specifically, we will consider the game where $\learnset = \Delta_{2}^2$ and $\optset = \Delta_{2}^2 \times \Delta_{4}$. We will write elements $y \in \optset$ in the form $y = (y_r, y_s)$ with $y_r \in \Delta_{2}^2$ and $y_s \in \Delta_4$, and write $\Proj_1(y_{r}, y_{s}) = y_r \in \Delta_{2}^{2}$ and $\Proj_2(y_{r}, y_s) = y_{s}\in \Delta_4$. Finally, we define $u_L(x, y) = \langle x, \Proj_{1}(y)\rangle$.

As in the proof of Theorem~\ref{thm:comp-prof-swap}, we will write $\learnvert = \{v_{11}, v_{12}, v_{21}, v_{22}\}$, where $v_{11} = (1, 0, 1, 0)$, $v_{12} = (1, 0, 0, 1)$, $v_{21} = (0, 1, 1, 0)$, and $v_{22} = (0, 1, 0, 1)$. We will likewise write $\optvert = \{w_{ijk}\}_{i, j \in [2], k \in [4]}$, where $w_{ijk} = (v_{ij}, e_k)$. In particular, note that $\Proj_{1}(w_{ijk}) = v_{ij}$.

We now define two transcripts of play. Each transcript of play will be divided into four epochs of $T/4$ rounds during which both players play a fixed action. In the first transcript $(\bx_1, \by_1)$, these four action pairs are $((v_{11}, w_{111}), (v_{12}, w_{121}), (v_{21}, w_{211}), (v_{22}, w_{111}))$. In the second transcript $(\bx_2, \by_2)$, these four action pairs are $((v_{11}, w_{111}), (v_{12}, w_{122}), (v_{21}, w_{213}), (v_{22}, w_{114}))$. Note that since $\bx_{1, t} = \bx_{2, t}$ and $\Proj_{1}(\by_{1, t}) = \Proj_{1}(\by_{2, t})$ for each $t \in [T]$, it follows that both transcripts have the same game-agnostic transcript $(\bx, \br)$. 

Let $\csp_1$ be the CSP corresponding to the first transcript and $\csp_2$ be the CSP corresponding to the second transcript. Note that by following the same derivation as in the proof of Theorem~\ref{thm:comp-prof-swap}, we can write

$$\csp_1 = \frac{1}{4}(v_{12} \otimes (w_{121} + w_{111}) + v_{21} \otimes (w_{211} + w_{111})),$$

\noindent
from which we can conclude that $\CorrSwap(\csp_1) = 0$. 

On the other hand, we claim that the only decomposition of $\csp_2$ into the form

\begin{equation}\label{eq:app-prof-noncomp-1}
\csp_2 = \sum_{v \in \learnvert} \lambda_{v} (v \otimes y_{v}),.
\end{equation}

\noindent
is the decomposition given by its construction as a CSP, namely,

\begin{equation}\label{eq:app-prof-noncomp-2}
\csp_2 = \frac{1}{4}\left(v_{11}\otimes w_{111} + v_{12}\otimes w_{122} + v_{21} \otimes w_{213} + v_{22} \otimes w_{114}\right).
\end{equation}

To see this, for any $k \in [4]$, consider the bilinear function $\rho_k$ defined via $\rho_k(x \otimes y) = \langle \Proj_{2}(y), e_k\rangle x$. For each $k$, there is exactly one $v_k \in \learnvert$ and $w_{k} \in \optvert$ with the property that $v_{k} \otimes w_{k}$ is the only term in \eqref{eq:app-prof-noncomp-2} that does not vanish under $\rho_k$ (e.g., for $k=4$ it is $v_{22} \otimes w_{114}$). But applying $\rho_k$ to \eqref{eq:app-prof-noncomp-1},

$$\rho_k(\csp_2) = \sum_{v \in \learnvert} \lambda_{v}\langle y_v, e_k\rangle v.$$

Since $v_k$ is an extreme point of $\learnset$, this implies that $\langle y_{v}, e_{k}\rangle$ can be non-zero only for $v = v_k$ (for which value it must equal $1$). It follows that the only possible decomposition of $\csp_2$ is that given in \eqref{eq:app-prof-noncomp-2}. But then, since $v_{22} \not\in \BR_{L}(w_{114}) = \{v_{11}\}$, it follows that $\CorrSwap(\csp_2) > 0$ and therefore $\CorrSwap(\csp_1) \neq \CorrSwap(\csp_2)$.

\end{proof}

\subsection{Game-agnostic profile swap regret does not upper bound profile swap regret}\label{app:game-agnostic-profile-bad}

Here we prove Theorem~\ref{thm:game-agnostic-profile-bad}, showing that the game-agnostic profileswap regret $\CorrSwap_{G^{*}}(\bx, \br)$ of a transcript does not necessarily bound its (standard, game-aware) profile swap regret.

\begin{proof}[Proof of Theorem~\ref{thm:game-agnostic-profile-bad}]
We build off of the example in the proof of Theorem~\ref{thm:prof-from-transcript}. Consider specifically the second transcript $(\bx_2, \by_2)$ where $\CorrSwap_{G}(\bx_2, \by_2) > 0$. 

Note that if we define (for any $i, j \in [2]$) $r_{ij} \in \learnset^{*}$ so that $r_{ij}(x) = \langle v_{ij}, x \rangle$, then $\by_{2}$ corresponds to the sequence of rewards $\br_{2} = (r_{11}, r_{12}, r_{21}, r_{11})$ . We now claim that $\CorrSwap_{G^{*}}(\bx_2, \br_2) = 0$. This will follow for similar logic to the original proof that $\CorrSwap_{G}(\bx_1, \by_1) = 0$. Indeed, the CSP $\csp^{*}$ of $(\bx_{2}, \br_{2})$ in $G^{*}$ is given by

$$\phi^{*} = \frac{1}{4}(v_{11} \otimes r_{11} + v_{12} \otimes r_{12} + v_{21} \otimes r_{21} + v_{22} \otimes r_{11})$$

By following the same derivation as in the proof of Theorem~\ref{thm:comp-prof-swap}, we can rewrite this in the form

$$\phi^{*} = \frac{1}{4}(v_{12} \otimes (r_{11} + r_{12}) + v_{21} \otimes (r_{11} + r_{21})).$$

We can check that this decomposition has zero profile swap regret under $G^*$, and so $\CorrSwap_{G^*}(\bx_2, \br_2) = 0$.
\end{proof}
\section{Additional results on correlated equilibria}\label{app:correlated-equilibria}

\subsection{Mediator protocols correspond to normal-form CE}\label{app:mediator-nf}

In this appendix we make clear the connection between the incentive-compatible signaling schemes a mediator can implement third-party mediator and normal-form correlated equilibria. In particular, we show that there is no need for a mediator to ever send recommendations to the two players that are not extreme points of $\learnvert$ or $\optvert$, and hence that the set of valid mediator outcomes can be exactly characterized by normal-form CE.

Specifically, for any (finite support) distribution $\sigma \in \Delta(\learnset \times \optset)$, we can consider the signaling scheme where a mediator samples a strategy profile $(x, y) \sim \sigma$, privately recommends playing $x$ to the $\cX$-player, and privately recommends playing $y$ to the $\cY$-player. We say this signaling scheme is incentive compatible if neither player has incentive to deviate given their observation. That is, for any $x_{\mathrm{sig}} \in \cX$, there should not exist an $x' \in \cX$ such that:

$$\E_{(x, y) \sim \sigma}[u_X(x', y) \mid x = x_{\mathrm{sig}}] > \E_{(x, y) \sim \sigma}[u_X(x, y) \mid x = x_{\mathrm{sig}}].$$

(i.e., the $\cX$-player cannot increase their utility by playing $x'$ every time they receive the signal $x_{\mathrm{sig}}$). Similarly, for any  $y_{\mathrm{sig}} \in \cY$, there should not exist a $y' \in \cY$ such that:

$$\E_{(x, y) \sim \sigma}[u_Y(x, y') \mid y = y_{\mathrm{sig}}] > \E_{(x, y) \sim \sigma}[u_Y(x, y) \mid y = y_{\mathrm{sig}}].$$

The following lemma shows that we can always convert an incentive-compatible signaling scheme supported on the entire set $\learnset \times \optset$ to one supported on pairs of pure strategies $\learnvert \times \optvert$ with the same utility for both players (in fact, that induce the same CSP).

\begin{lemma}
Let $\sigma \in \Delta(\learnset \times \optset)$ be an incentive compatible signaling scheme. Let $\sigma' \in \Delta(\learnvert \times \optvert)$ be the signaling scheme formed by first sampling a strategy profile $(x, y)$ from $\sigma$, decomposing $x$ and $y$ into convex combinations of pure strategies $x = \sum_{v \in \learnvert} \lambda_{v}v$ and $y = \sum_{w \in \optvert}\mu_{w}w$, and then returning the pure strategy profile $(v, w)$ with probability $\lambda_{v}\mu_{w}$. Then $\sigma'$ is an incentive compatible signaling scheme, and $\E_{(x, y) \sim \sigma}[x \otimes y] = \E_{(v, w) \sim \sigma'}[v \otimes w]$.
\end{lemma}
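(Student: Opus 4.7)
The plan is to break the lemma into two independent pieces: the CSP equality, and the incentive-compatibility preservation. Both follow from bilinearity together with the tower property of expectation, once we commit to a fixed (but arbitrary) measurable decomposition rule $x \mapsto (\lambda_v(x))_{v \in \learnvert}$ and $y \mapsto (\mu_w(y))_{w \in \optvert}$, which is well-defined on the finite support of $\sigma$. All randomness in the decomposition step of $\sigma'$ is conditionally independent given $(x,y)$, so I can freely swap sums and expectations.

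For the CSP equality, I would condition on $(x,y) \sim \sigma$. By construction, $v$ depends only on $x$ and $w$ only on $y$, and they are conditionally independent given $(x,y)$, so
\[
\E_{\sigma'}[v \otimes w \mid x, y] \;=\; \Bigl(\sum_v \lambda_v(x)\, v\Bigr) \otimes \Bigl(\sum_w \mu_w(y)\, w\Bigr) \;=\; x \otimes y,
\]
using bilinearity of the tensor product in the last step. Taking the outer expectation over $\sigma$ and applying the tower property gives $\E_{\sigma'}[v \otimes w] = \E_\sigma[x \otimes y]$, as claimed.

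For IC of $\sigma'$, I would argue by contradiction. Suppose the $\cX$-player has a profitable one-shot deviation after some signal $v_{\mathrm{sig}} \in \learnvert$; since deviations are pointwise in the signal, this is equivalent to exhibiting a function $g : \learnvert \to \learnset$ with
\[
\E_{(v,w) \sim \sigma'}\bigl[u_X(g(v), w)\bigr] \;>\; \E_{(v,w) \sim \sigma'}\bigl[u_X(v, w)\bigr].
\]
Define the lifted deviation $f : \mathrm{supp}(\sigma_X) \to \learnset$ by $f(x) = \sum_v \lambda_v(x)\, g(v)$; it lies in $\learnset$ because it is a convex combination of elements of $\learnset$. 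Using conditional independence given $(x,y)$, bilinearity of $u_X$, and $\E[w \mid y] = \sum_w \mu_w(y)\, w = y$,
\[
\E_{\sigma'}[u_X(g(v), w)] \;=\; \E_{\sigma}\!\left[\sum_v \lambda_v(x)\, u_X(g(v), y)\right] \;=\; \E_{\sigma}[u_X(f(x), y)],
\]
and specializing $g(v) = v$ (hence $f(x) = x$) gives $\E_{\sigma'}[u_X(v,w)] = \E_{\sigma}[u_X(x,y)]$. Combining the two identities, $f$ is a profitable deviation in $\sigma$, contradicting IC of $\sigma$. The $\cY$-player case is symmetric.

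The main obstacle is essentially bookkeeping rather than mathematical: we must ensure that the decomposition rule can be treated as a fixed measurable mapping (so that $\lambda_v(\cdot)$ is a well-defined function), and that the ``pointwise'' IC constraint for $\sigma$ is strong enough to rule out the particular randomized-looking deviation $f$. Both are handled cleanly by the finite-support assumption and by noting that pointwise IC over every signal in the support is equivalent to aggregate IC against arbitrary measurable response functions, so the lifted $f$ is a valid object to test against.
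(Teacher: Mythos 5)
Your proof is correct, but it is organized differently from the paper's. You argue by lifting deviations: any profitable (aggregate) swap $g:\learnvert \to \learnset$ against $\sigma'$ is pushed forward to the deviation $f(x) = \sum_{v}\lambda_v(x)\,g(v)$ against $\sigma$, and bilinearity plus the tower property show the two have identical expected gains, so IC of $\sigma$ (which, as you correctly note, is equivalent on the finite support to aggregate IC against arbitrary response functions) kills the deviation; the CSP identity is the same one-line bilinearity computation as in the paper. The paper instead argues pointwise and in the forward direction: for each recommended $x$ it lets $y(x)$ be the conditional mean recommendation to the other player, shows (by a replace-one-vertex deviation, again using IC of $\sigma$) that every vertex $v^*$ with $\lambda_{v^*}>0$ must satisfy $v^* \in \BR_X(y(x))$, and then uses linearity of $u_X$ in the second argument to conclude $v^* \in \BR_X(y(v^*))$ since $y(v^*)$ is a convex combination of such $y(x)$'s. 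The two arguments are essentially dual: the paper's yields the slightly stronger structural fact that every vertex in the support of the decomposition best-responds to its conditional mean signal, while yours is a cleaner contrapositive that avoids conditioning on individual vertex signals and handles all deviations in one expectation computation. Both hinge on the same bilinearity, and both need the small bookkeeping facts you flag (a fixed decomposition rule on the finite support, and the equivalence of pointwise and aggregate IC), so there is no gap.
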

\begin{proof}

The fact that $\E_{(x, y) \sim \sigma}[x \otimes y] = \E_{(v, w) \sim \sigma'}[v \otimes w]$ follows directly from the construction of $\sigma'$ (in particular, $\sum_{v, w}\lambda_{v}\mu_{w}(v \otimes w) = \left(\sum_{v} \lambda_{v}v\right) \otimes \left(\sum_{w} \mu_{w}w\right) = x \otimes y$). It therefore suffices to show that $\sigma'$ is incentive compatible.

We will do this for the $\cX$-player (it follows for the $\cY$-player by symmetry). Consider any $x \in \learnset$ which is recommended to the $\learnset$-player with positive probability under $\sigma$. Let $y(x) \in \optset$ be the expected action recommended to the $\optset$-player, conditioned on the fact that $x$ is recommended to the $\learnset$-player. Note that since $\sigma$ is incentive compatible, this means that it must be the case that $x \in \BR_{X}(y(x))$ (or the $\learnset$ player could increase their utility by deviating from $x$ to any element of $\BR_{X}(y(x))$).

Now, consider any $v^* \in \learnvert$ such that $\lambda_{v^*} > 0$ in some decomposition of $x$ of the form $x = \sum_{v \in \learnvert} \lambda_{v}v$. We claim that it must also be the case that $v^* \in \BR_{X}(y(x))$. If not, the $\cX$-player could again improve their utility by deviating from $x$ to the strategy formed by replacing the $\lambda_{v^*}v^*$ term in the decomposition $\sum_{v \in \learnvert} \lambda_{v}v$ of $x$ with $\lambda_{v^*}v'$ for some $v' \in \BR_{X}(y(x))$. But now, if we let $y(v^{*}) \in \optset$ to be the expected message in $\sigma'$ received by the $\optset$-player conditioned on the $\learnset$-player receiving $v^*$, $y(v^{*})$ is a convex combination of elements of the form $y(x)$ for $x$ with $\lambda_{v^*} > 0$. It follows that $v^{*} \in \BR_{X}(y(v^{*}))$, and therefore that $\sigma'$ is incentive-compatible.
\end{proof}

\subsection{Every profile CE can be reached by no-profile-swap-regret dynamics}\label{app:all-profile-ce}

Just as normal-form CE exactly characterize the set of outcomes implementable by an incentive-compatible signaling scheme, profile CE exactly characterize the set of outcomes that can occur as a result of no-profile-swap-regret dynamics. One direction of this characterization (that such dynamics result in profile CE) is immediate from the definition of profile CE -- below we show that every profile CE can be realized by such dynamics, completing this characterization.

\begin{lemma}
    Given any profile CE $\phi$, there exist profile swap regret minimizing algorithms for the two players that converge to this CSP when employed against each other.
\end{lemma}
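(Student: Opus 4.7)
My plan is to build the two algorithms by a trigger-strategy construction. Since $\csp$ is a profile CE, both $\CorrSwap_X(\csp) = 0$ and $\CorrSwap_Y(\csp) = 0$, so (using the $X$-side decomposition) we may write $\csp = \sum_{v \in \learnvert} \lambda_v\,(v \otimes y_v)$ with $v \in \BR_X(y_v)$ whenever $\lambda_v > 0$. Using shared randomness (or a low-discrepancy derandomization) I would construct a target sequence $(x_t^\ast, y_t^\ast)_{t=1}^T$ by independently drawing $v_t$ from the distribution $\{\lambda_v\}$ and setting $x_t^\ast = v_t$, $y_t^\ast = y_{v_t}$; in expectation its time-averaged CSP equals $\csp$, and moreover the CSP over every prefix of length $k$ is within $O(1/\sqrt{k})$ of $\csp$ with high probability. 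The $X$-player's algorithm $\cA_X^\csp$ then plays $x_t^\ast$ so long as the opponent's history matches $y_1^\ast, \dots, y_{t-1}^\ast$, and hands control over to the base no-profile-swap-regret algorithm of Theorem~\ref{thm:upper-semi-separation} the first round a mismatch is observed; $\cA_Y^\csp$ is defined symmetrically in $X$ and $Y$.

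When $\cA_X^\csp$ and $\cA_Y^\csp$ face each other neither ever triggers its fallback, so the joint CSP is exactly the CSP of the target sequence, which converges to $\csp$ as $T \to \infty$. The remaining task is to verify that each algorithm individually has vanishing $\CorrDist$ against any adversary. For $\cA_X^\csp$ this is the easy direction: on the prefix preceding any deviation at round $t_0$, every summand $v \otimes y_v$ already lies in $\cM_{NPSR}^X$ because $v \in \BR_X(y_v)$, so $\CorrDist_X(\csp_{\text{pre}})$ is $0$ up to a single off-protocol transition round contributing $O(1/t_0)$; after the deviation, Theorem~\ref{thm:upper-semi-separation} controls the suffix by $O(1/\sqrt{T-t_0})$; convexity of the distance to the convex set $\cM_{NPSR}^X$ then combines these two pieces into an overall $O(1/\sqrt{T})$.

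The main obstacle is the analogous bound for $\cA_Y^\csp$: the round-by-round pair $(v_t, y_{v_t})$ is only guaranteed to be a best response from $X$'s side, not from $Y$'s, so the prefix CSP before the opponent's deviation need not itself lie in $\cM_{NPSR}^Y$. This is precisely why I would insist on drawing the target sequence from a distribution whose expected CSP equals $\csp$ (rather than, say, playing each vertex $v$ in a contiguous block of $\lambda_v T$ rounds): this ensures the prefix CSP stays within $O(1/\sqrt{k})$ of $\csp$, and since $\csp \in \cM_{NPSR}^Y$ by hypothesis, the triangle inequality then gives $\CorrDist_Y(\csp_{\text{pre}}) \leq O(1/\sqrt{k})$. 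Combining this prefix bound with the suffix bound from the base algorithm via convexity of distance yields $\CorrDist_Y(\csp_{\text{final}}) = O(1/\sqrt{T})$, as required. If one prefers a deterministic construction, a standard low-discrepancy online rounding scheme produces the same $O(1/\sqrt{k})$ prefix concentration deterministically, which suffices.
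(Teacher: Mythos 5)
Your proposal is correct and takes essentially the same route as the paper: a pre-committed schedule realizing a decomposition of $\csp$, with any observed deviation triggering a permanent fallback to a worst-case no-profile-swap-regret algorithm (the paper uses Theorem~\ref{thm:upper-approachability}, you use Theorem~\ref{thm:upper-semi-separation}), so that mutual play converges to $\csp$ while each algorithm retains its worst-case guarantee. The only real difference is in the scheduling and its analysis: the paper plays the decomposition in epoch-blocks built from rational approximations of the $\lambda_k$ and leaves the effect of the pre-deviation prefix implicit, whereas your interleaved (sampled or low-discrepancy) schedule gives explicit $O(1/\sqrt{k})$ prefix concentration, which you use to handle the asymmetric point that the prefix is only pointwise best-responding for the $\cX$-player and is controlled for the $\cY$-player only via its proximity to $\csp \in \cM_{NPSR}$ — a more careful treatment of a step the paper glosses over.
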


\begin{proof}
    Let $\phi = \sum_{k=1}^{K} \lambda_{k} (x_{(k)} \otimes y_{(k)})$ be a decomposition of the CSP $\phi$ with $K \le d_{\learnset}$ (where $\learnset \subseteq \mathbb{R}^{d_{\learnset}}$) -- Lemma~\ref{lem:decomp-equiv} guarantees such a decomposition. In case any of the $\lambda_k$'s are not rational -- we associate it with a sequence  $\lambda_k^1, \lambda_k^2 \cdots $ of positive rational numbers converging to $\lambda_k$. The two algorithms attempt to follow the schedule described below with any deviation observed triggering a default to a standard profile swap regret minimizing algorithm (such as the algorithm guaranteed by Theorem~\ref{thm:upper-approachability}). The schedule consists of an infinite series of epochs, the $i$-th epoch consists of $\sum_{k=1}^K \lambda^i_k$ rounds -- in each of the first $\lambda^i_1$ rounds in this epoch, the two players play $x_1$ and $y_1$ respectively; in the next $\lambda^i_2$ rounds in this epoch, the two players play $x_2$ and $y_2$ respectively and so on. It it not hard to see that both players following the schedule leads to convergence to the CSP $\phi$, which is a profile CE. On the other hand, any deviation from the schedule by an opposing player results in a reset to a profile swap regret minimizing algorithm, so both player's algorithms retain the worst case profile swap regret minimizing property.
\end{proof}

\subsection{One-sided mediator protocols and profile swap regret}\label{app:one-sided}

We can also think of profile CE as outcomes implementable by ``one-sided mediator'' protocols: in particular, a CSP $\csp$ is a profile CE if each player can imagine a mediator protocol implementing $\csp$ that is incentive compatible for themselves (but note that these mediator protocols may differ across players, and need not be incentive-compatible for other players).

To prove this, it suffices to show that any CSP $\csp$ with low profile swap regret for a specific player (e.g., the $\cX$-player) can be implemented by a mediator protocol that is incentive compatible for that player (and in particular, corresponds to a vertex-game CSP $\cspV$ with the property that $\NormSwap_{\cX}(\cspV) = 0$). We prove this below.

\begin{lemma}\label{lem:one-sided-mediator}
Let $\csp \in \learnset \otimes \optset$ be a CSP with the property that $\CorrSwap_{X}(\csp) = 0$. Then there exists a vertex-game CSP $\cspV \in \Delta(\learnvert \times \optvert)$ with the property that $\NormSwap_{\cX}(\cspV) = 0$ and $\Proj(\cspV) = \csp$.
\end{lemma}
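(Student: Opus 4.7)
The plan is to construct the witness $\cspV$ directly from a best-response decomposition of $\csp$. Since $\CorrSwap_{X}(\csp) = 0$, by the definition in \eqref{eq:profile_swap_regret} there is a convex decomposition $\csp = \sum_{k} \lambda_{k}(x_{(k)} \otimes y_{(k)})$ in which $\Reg_{X}(x_{(k)}, y_{(k)}) = 0$ for every $k$, i.e., $x_{(k)} \in \BR_{X}(y_{(k)})$. Applying Lemma~\ref{lem:decomp-equiv} (or equivalently the formulation in \eqref{eq:vertex-decomp}) refines this to a decomposition indexed by vertices of $\learnset$:
\[
\csp \;=\; \sum_{v \in \learnvert} \lambda_{v}\,(v \otimes y_{v}), \qquad \lambda_{v} \geq 0,\ \sum_{v}\lambda_{v} = 1,\ y_{v} \in \optset.
\]
The inequality \eqref{eq:decomp_equiv} forces $\sum_{v}\lambda_{v}\Reg_{X}(v, y_{v}) = 0$, so for every $v$ in the support we have $v \in \BR_{X}(y_{v})$. (Intuitively: a convex combination of regrets is zero only if each term is zero, which is exactly the content of Lemma~\ref{lem:decomp-equiv} when the starting decomposition is itself best-responding.)

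Next I will discretize the $\optset$-marginals. For each $v \in \learnvert$ appearing above, fix an arbitrary decomposition $y_{v} = \sum_{w \in \optvert} \mu_{v,w}\, w$ with $\mu_{v,w} \geq 0$ and $\sum_{w} \mu_{v,w} = 1$. Define the candidate vertex-game CSP by
\[
\cspV(v, w) \;:=\; \lambda_{v}\,\mu_{v, w} \qquad \text{for every } (v,w) \in \learnvert \times \optvert.
\]
This is manifestly a probability distribution on $\learnvert \times \optvert$.

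Then I verify the two required properties. The projection identity $\Proj(\cspV) = \csp$ is immediate by bilinearity:
\[
\Proj(\cspV) \;=\; \sum_{v,w} \lambda_{v}\mu_{v,w}(v \otimes w) \;=\; \sum_{v} \lambda_{v}\Bigl(v \otimes \sum_{w}\mu_{v,w} w\Bigr) \;=\; \sum_{v} \lambda_{v}(v \otimes y_{v}) \;=\; \csp.
\]
For the normal-form swap-regret guarantee, fix any swap function $\pi: \learnvert \to \learnvert$. Using bilinearity of $u_{X}$ on the inner sum over $w$,
\[
\sum_{v,w} \cspV(v,w)\bigl(u_{X}(\pi(v), w) - u_{X}(v, w)\bigr) \;=\; \sum_{v} \lambda_{v}\bigl(u_{X}(\pi(v), y_{v}) - u_{X}(v, y_{v})\bigr) \;\leq\; 0,
\]
where the final inequality uses $v \in \BR_{X}(y_{v})$, so $u_{X}(v, y_{v}) \geq u_{X}(x, y_{v})$ for every $x \in \learnset$ and in particular for $x = \pi(v) \in \learnvert$. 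Taking the maximum over $\pi$ gives $\NormSwap_{X}(\cspV) = 0$.

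There is no real obstacle here; the only subtle point is justifying that the refined vertex decomposition inherits the best-response property (i.e., that each vertex $v$ in the support of $x_{(k)}$ is itself a best response to $y_{(k)}$), which is exactly what Lemma~\ref{lem:decomp-equiv} delivers when the starting decomposition has zero total regret. Everything else is a one-line bilinearity/convexity computation.
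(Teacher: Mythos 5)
Your proof is correct, but it takes a genuinely different route from the paper. The paper argues indirectly and algorithmically: it takes any no-normal-form-swap-regret algorithm for the vertex game, observes that (since normal-form swap regret upper bounds profile swap regret) it is also a no-profile-swap-regret algorithm in the polytope game, and then invokes the menu-minimality result (Theorem~\ref{thm:poly_minimal}) to conclude that every $\csp$ with $\CorrSwap_{X}(\csp)=0$ is asymptotically realizable against it, with the induced vertex-game CSP serving as $\cspV$. Your argument is instead a direct construction: extract a best-response decomposition of $\csp$ (available by Lemma~\ref{lem:menu_char}, or by the definition together with Lemma~\ref{lem:decomp-equiv}, noting that $\Reg(\cdot,y)$ is affine in its first argument so zero aggregate regret forces zero regret at each supported vertex), split each $y_v$ into vertices of $\optset$, and set $\cspV(v,w)=\lambda_v\mu_{v,w}$; bilinearity then gives both $\Proj(\cspV)=\csp$ and $\NormSwap_{X}(\cspV)=0$. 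What your approach buys: it is finite and exact (no passage to asymptotic menus or limits of transcripts), it avoids the non-degeneracy assumption that enters the paper's route through Theorem~\ref{thm:poly_minimal}, and it makes the mediator explicit (the signal distribution is written down in closed form from the decomposition). What the paper's approach buys: it exposes the conceptual link between this lemma and the minimality of $\cM_{NPSR}$, and it yields the corollary stated right after the lemma ($\cM_{NPSR}=\Proj(\cM^{V}_{NSR})$) with essentially no extra work, though your construction also gives the nontrivial inclusion of that identity directly. The one point to state explicitly if you polish this up is why a decomposition with exactly zero total regret exists (rather than an infimum); citing Lemma~\ref{lem:menu_char}, whose proof asserts precisely this, settles it.
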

\begin{proof}
Consider any normal-form swap regret minimizing algorithm $\cA$. Consider the menu of $\cA$ 

% Let $\cM_{NSR}^{V} = \{\cspV \in \Delta(\learnvert \times \optvert) \mid \NormSwap(\cspV) = 0\}$. Note that $\cM_{NSR}^{V}$ is a valid asymptotic menu in the vertex game (it is in fact the no-swap-regret menu is this vertex game, and therefore the asymptotic menu of any no-normal-form-swap-regret learning algorithm). In particular, there is a learning algorithm $\cA^{V}$ for the vertex game such that for any $\cspV \in \cM_{NSR}^{V}$, there exists a sequence of optimizer actions that asymptotically implement this CSP.

Let $G$ be the polytope game at hand. Consider a no-normal-form-swap-regret algorithm $\mathcal{A}$ in the vertex game $G^{V}$ (for the $\cX$-player). We can run $\mathcal{A}$ in the original polytope game by arbitrarily decomposing every adversary action $y_t \in \optset$ we see into a distribution $\yV_{t} \in \Delta(\optvert)$ (with $\E[\yV_{t}] = y_t$), obtaining the vertex game response $\xV_{t} \in \Delta(\learnvert)$ to it, and playing $x_t = \E[\xV_{t}]$ in the polytope game.

Note that since profile swap regret is upper bounded by normal-form swap regret (Theorems~\ref{thm:swap-notions-ordering} and \ref{thm:comp-prof-swap}), $\cA$ is also a no-profile-swap-regret algorithm in $G$. By Theorem~\ref{thm:poly_minimal}, this means that the asymptotic menu $\cM(\cA)$ is the no-profile-swap-regret menu $\cM_{NPSR}$, and in particular, any $\csp \in \cM_{NPSR}$ (i.e., any $\csp$ with $\CorrSwap_{X}(\csp) = 0$) can be asymptotically realized by sequence of opponent actions. But then, the corresponding vertex CSP $\cspV$ (realized by the limit of the transcript of play $(\bxV, \byV)$ in the vertex game) must satisfy $\NormSwap_{X}(\cspV) = 0$ (since $\cA$ is also no-normal-form-swap-regret) and $\Proj(\cspV) = \csp$. The conclusion follows.
\end{proof}

One consequence of Lemma~\ref{lem:one-sided-mediator} is that the no-profile-swap-regret menu is the projection of the no-swap-regret menu of the vertex game: that is, $\cM_{NPSR} = \Proj(\cM^{V}_{NSR})$, where $\cM^{V}_{NSR}$ is the set of vertex-game CSPs $\cspV$ with $\NormSwap(\cspV) = 0$. 

%\esh{NF swap regret menus in v-space project to $M_{NPSR}$}

% We remark that even the problem of verifying if a CSP is a profile CE is APX-hard.
\section{Omitted proofs}\label{sec:omitted}

\subsection{Proof of Theorem~\ref{thm:nfg-swap-notions}}

\begin{proof}[Proof of Theorem~\ref{thm:nfg-swap-notions}]
When $\learnset = \Delta_{m}$, the set of affine linear transformations that sends $\learnset$ to itself can be expressed the set of $m$-by-$m$ row-stochastic matrices (this follows since every unit vector must map to an element of $\Delta_{m}$). This is a convex set whose extreme points are given by row-stochastic matrices which correspond to swap functions; i.e., 0/1 row-stochastic matrices where each row is a unit vector. Maximizing over all transformations in this set is equivalent to maximizing over all swap functions, so $\LinSwap(\bx, \by) = \Swap(\bx, \by)$.

For polytope swap regret, note that when $\learnset = \Delta_{m}$, there is a unique way to decompose an element in $x \in \learnset$ into an element $\xV \in \Delta(\learnvert)$ (specifically, $x$ itself provides the unique such decomposition). This means that we can simplify the definition of polytope swap regret in this case to $\PolySwap(\bx, \by) =  \max_{\pi: \learnvert \rightarrow \learnvert}(\sum_{t=1}^{T} u_L(\pi(x_t), y_t) - u_L(x_t, y_t))$, which is identical to the definition of $\Swap(\bx, \by)$.

Finally, again, the fact that any action $\xV \in \Delta(\learnvert)$ with $\E_{v \sim \xV}[v] = x$ must in fact satisfy $\xV_i = x_i$ for all $i \in [m]$ means that for normal-form games, $\NormSwap(\bx, \by) = \Swap(\bx, \by)$.
\end{proof}

\subsection{Proof of Theorem~\ref{thm:swap-notions-ordering}}

\begin{proof}[Proof of Theorem~\ref{thm:swap-notions-ordering}]
We begin by proving the two inequalities above. To show that $\LinSwap(\bx, \by) \leq \PolySwap(\bx, \by)$, first note that we can relax the set of swap functions $\pi$ we consider in the definition of polytope swap regret to all functions $\pi$ from $\learnvert$ to $\learnset$ (instead of just functions from $\learnvert$ to $\learnvert$). This relaxation does not change the value of polytope swap regret, since for any vertex $v \in \learnvert$, the value of $\pi(v)$ that maximizes polytope swap regret will be achieved at an extreme point of $\learnvert$. But now, for any affine linear function $\psi: \learnset \rightarrow \learnset$, if we consider the swap function $\pi$ with $\pi(v) = \psi(v)$ for all $v \in \learnset$, it is the case that $\overline{\pi}(\xV_t) = \E_{v \sim \xV_t}[\psi(v)] = \psi(\E_{v\sim \xV_{t}}[v]) = \psi(x_t)$ (in particular, since $\psi$ is linear, it commutes with taking expectations). It follows that the polytope swap regret is at least $\sum_{t=1}^{T} u_L(\psi(x_t), y_t) - \sum_{t=1}^{T} u_L(x_t, y_t)$, for any linear function $\psi$ that maps $\learnset$ to itself. Since $\LinSwap(\bx, \by)$ is the maximum of this quantity over all such linear functions $\psi$, it follows that $\LinSwap(\bx, \by) \leq \PolySwap(\bx, \by)$.

To show that $\PolySwap(\bx, \by) \leq \NormSwap(\bxV, \byV)$, note that we can write $\PolySwap(\bx, \by)$ as the minimum of $\NormSwap(\bxV, \byV)$ over all choices of $\bxV$ and $\byV$ with $x_{t} = \E[\bxV_{t}]$ and $y_{t} = \E[\byV_{t}]$. Since the $\bxV$ and $\byV$ in this theorem also satisfy these constraints, it immediately follows that $\PolySwap(\bx, \by) \leq \NormSwap(\bxV, \byV)$.

We now provide examples of transcripts exhibiting the gaps described in the second part of the theorem. An example of a family of transcripts where $\LinSwap(\bx, \by) = 0$ but $\PolySwap(\bx, \by) = \Omega(T)$ can be found in Theorem 7 of \cite{MMSSbayesian}. 

To construct a family of transcripts where $\PolySwap(\bx, \by) = 0$ but $\NormSwap(\bxV, \byV) = \Omega(T)$, consider the polytope game where $\cX = [0, 1]^2$ (and therefore $V(\cX) = \{(0,0), (1, 0), (1, 1), (0, 1)\}$) and $\cY = [-1, 1]$, with $u_{L}((x_1, x_2), y) = y(x_1 - x_2)$. Consider the transcript $(\bx, \by)$ where $x_t = (1/2, 1/2)$ for all $t \in [T]$ but $y_t = -1$ for rounds $t \in [1, T/2]$ and $y_t = 1$ for rounds $t \in [T/2 + 1, T]$.

Note that if we decompose every $x_t$ into the distribution $\frac{1}{2}(0, 0) + \frac{1}{2}(1, 1) \in \Delta(V(\cX))$, there is no swap function $\pi^*: V(\cX) \rightarrow V(\cX)$ that will improve the utility of the learner -- for example, on average when the learner plays the vertex $(0,0)$, they face the average optimizer strategy of $\overline{y} = 0$, and all elements of $V(\cX)$ are best responses to $\overline{y}$. It follows that $\PolySwap(\bx, \by) = 0$. 

On the other hand, consider the vertex game transcript where $\xV_t = \frac{1}{2}(0, 0) + \frac{1}{2}(1, 1)$ for rounds $t \in [1, T/2]$, and $\xV_t = \frac{1}{2}(0, 1) + \frac{1}{2}(1, 0)$ for rounds $t \in [T/2 + 1, T]$ (with $\yV_t = y_t$ for all $t \in [T]$). It is straightforward to check that this transcript satisfies $\E[\xV_t] = x_t$ and $\E[\yV_t] = y_t$. However, on the rounds where the learner plays $(0, 0)$, they face an average optimizer strategy of $\overline{y} = -1$. Against this optimizer strategy, the unique best-response for the learner is the strategy $(0, 1)$, and the learner can increase their total utility by $T/4$ by applying this swap. It follows that $\NormSwap(\bxV, \byV) = \Omega(T)$.
\end{proof}

\subsection{Proof of Theorem~\ref{thm:menu_char}}

\begin{proof}[Proof of Theorem~\ref{thm:menu_char}]
See the proof of Theorem 3.3 in \cite{paretooptimal} (this proof is for a normal-form game $G$, but does not use anything specific about the space of CSPs, and applies essentially as is to our setting).

It is worth noting that one direction of this argument -- that $\cM$ needs to contain some CSP of the form $x \otimes y$ for each $y \in \optset$ follows simply from the fact that the optimizer can always play the action $y_t = y$, in which case some element of this form is guaranteed to exist in $\cM$. It is also possible to prove a slightly weaker version of the converse directly from Blackwell's approachability theorem (which is all we need in this paper): that if $\cM$ contains some CSP of the form $x \otimes y$ for each $y \in \optset$, then some \emph{subset} $\cM'$ of $\cM$ is a valid asymptotic menu (see Lemma 3.4 in \cite{paretooptimal}). Proving the full converse requires showing that we can always extend a menu $\cM'$ to any superset of $\cM'$ (which can be accomplished by offering the optimizer the choice to play a choice of CSPs outside $\cM'$ -- see Lemma 3.5 in \cite{paretooptimal}). 
\end{proof}

\subsection{Proof of Theorem~\ref{thm:comp-prof-swap}}

\begin{proof}[Proof of Theorem~\ref{thm:comp-prof-swap}]
We begin by proving the two inequalities. To show that $\LinSwap(\bx,\by) \leq \CorrSwap(\bx, \by)$, let $\csp$ be the average CSP of the transcript $(\bx, \by)$, and let $\csp = \sum_{k}\lambda_{k}(x_{(k)} \otimes y_{(k)})$ be any decomposition of $\csp$ in the form \eqref{eq:csp-decomp}. Consider also any linear transformation $\psi$ that sends $\learnset$ to $\learnset$. Note that $\sum_{k}\lambda_{k}\Reg(x_{(k)}, y_{(k)}) \geq \sum_{k}\lambda_{k}(u_{L}(\psi(x_{(k)}), y_{(k)}) - u_{L}(x_{(k)}, y_{(k)})) = \frac{1}{T}\sum_{t}(u_{L}(\psi(x_t), y_t) - u_L(x_t, y_t))$, where the last equality follows by the linearity of $\psi$. Taking the minimum of the left-hand side of this inequality (over all decompositions of $\csp$) and the maximum of the right-hand side (over all linear endomorphisms $\psi$), we obtain that $\LinSwap(\bx, \by) \leq \CorrSwap(\bx, \by)$.

To show that $\CorrSwap(\bx, \by) \leq \PolySwap(\bx, \by)$, note that we can rewrite $\PolySwap(\bx, \by)$ in the form:

 \begin{align}
    \PolySwap(\bx, \by) &= \min_{\bxV} \max_{\pi: \learnvert \rightarrow \learnvert} \left(\sum_{t=1}^{T} u_L(\overline{\pi}(\xV_t), y_t) - u_L(x_t, y_t) \right) \notag \\
    &= T \cdot \min_{\bxV} \max_{\pi: \learnvert \rightarrow \learnvert} \left(\sum_{v \in \learnset} \alpha_v \left(u_L(\overline{\pi}(v), y_v) -  u_L(v, y_v)\right) \right) \label{posr_rewrite}
    % &= \min_{\bxV}  \left(\sum_{v \in \learnset} \alpha_v \Reg(v,y_v) \right) \tag{*}\label{posr_rewrite}
    \end{align}

\noindent
where the $\alpha_v$ and $y_v$ in \eqref{posr_rewrite} are obtained by expanding $\xV_t$ out as an explicit convex combination of vertices in $\learnvert$. In particular, every choice of per-round decompositions $\bxV$ in the outer minimum gives rise to a collection of $\alpha_v$ and $y_v$ satisfying $\alpha_v \geq 0$, $\sum_{v} \alpha_v = 1$, and $y_v \in \optset$. Since profile swap regret is the minimum over \emph{all} such decompositions (see \eqref{eq:vertex-csp}), $\CorrSwap(\bx, \by) \leq \PolySwap(\bx, \by)$.

We now prove the second part of the theorem, presenting families of transcripts that exhibit the above gaps. We provide an indirect proof that there exists an example where $\LinSwap(\bx, \by) = 0$ but $\CorrSwap(\bx, \by) = \Omega(T)$, by noticing that Theorem 8 of \cite{MMSSbayesian} provides an example of a transcript $(\bx, \by)$ in a polytope game with $\LinSwap(\bx, \by) = 0$ but that is manipulable. By Theorem~\ref{thm:poly_nonmanip}, it must be the case that $\CorrSwap(\bx, \by) > 0$. It is also possible to verify this example directly, but is computationally somewhat messy.

We now present an example where $\CorrSwap(\bx, \by) = 0$ but $\PolySwap(\bx, \by) = \Omega(T)$. This example is adapted from a counterexample in \cite{rubinstein2024strategizing} used to show that there exist transcripts with high polytope swap regret that were not manipulable. We will set $\learnset = \optset = \Delta_{2}^{2} = \{(a_1, a_2, b_1, b_2) \mid a_1 + a_2 = 1, b_1 + b_2 = 1, a_1, a_2, b_1, b_2 \in \Rset_{\geq 0}\}$ (this can be interpreted as a Bayesian game with two types and two actions for both players). For this set, we can write $\learnvert = \optvert = \{v_{11}, v_{12}, v_{21}, v_{22}\}$, where $v_{11} = (1, 0, 1, 0)$, $v_{12} = (1, 0, 0, 1)$, $v_{21} = (0, 1, 1, 0)$, and $v_{22} = (0, 1, 0, 1)$. We will let the learner's payoff be simply $u_L(x, y) = \langle x, y\rangle$. 

Now, consider the following transcript of play:

\begin{itemize}
    \item For $0 <  t \leq T/4$, $x_{t} = y_{t} = v_{11}$.
    \item For $T/4 < t \leq T/2$, $x_{t} = y_{t} = v_{12}$.
    \item For $T/2 < t \leq 3T/4$, $x_{t} = y_{t} = v_{21}$.
    \item For $3T/4 < t \leq T$, $x_{t} = v_{22}$ and $y_{t} = v_{11}$. \textbf{Note that $x_{t} \neq y_{t}$ for this period.}
\end{itemize}

We claim that $\PolySwap(\bx, \by) = \Omega(T)$ for this transcript of play. Indeed, note that since the learner plays an extreme point of $\learnset$ every round, any decomposition $\bxV$ must satisfy $\xV_{t} = x_t$. Now, the map $\pi$ which sends $v_{22}$ to $v_{11}$ (and fixes all other vertices) improves the learner's utility by $T/4 = \Omega(T)$.

On the other hand, we claim that $\CorrSwap(\bx, \by) = 0$. To see this, note that the CSP corresponding to this transcript can be written in the form

\begin{eqnarray*}
\csp &=& \frac{1}{4}\left(v_{11} \otimes v_{11} + v_{12} \otimes v_{12} + v_{21} \otimes v_{21} + v_{22} \otimes v_{11} \right)\\
&=& \frac{1}{4}\left((v_{11} + v_{22}) \otimes v_{11} + v_{12} \otimes v_{12} + v_{21} \otimes v_{21} \right) \\
&=& \frac{1}{4}\left((v_{12} + v_{21}) \otimes v_{11} + v_{12} \otimes v_{12} + v_{21} \otimes v_{21} \right) \\
&=& \frac{1}{4}\left(v_{12} \otimes (v_{12} + v_{11}) + v_{21} \otimes (v_{21} + v_{11}) \right). \\
\end{eqnarray*}

\noindent
Here, in the third line, we have used the fact that $v_{11} + v_{22} = v_{12} + v_{21}$. But now, note that $v_{12} \in \BR_{L}(v_{12} + v_{11})$, and $v_{21} \in \BR_{L}(v_{21} + v_{11})$. It follows that $\CorrSwap(\csp) = 0$, as desired.
\end{proof}

\subsection{Proof of Theorem~\ref{thm:equilibrium-gap}}

We will need the following two lemmas.

\begin{lemma} Consider any $v \in \cV(\learnset)$ and $\cV(\optset)_{0} \subseteq \cV(\optset)$ such that for all $w_{0} \in \cV(\optset)_{0}$, $(v,w_{0})$ has support zero in any NFCE. Then, if $(v,\hat{w}) \leq (\bar{v},\hat{w})$ for some $\bar{v} \in \cV(\learnset)$ and $\forall \hat{w} \in \cV(\optset) \setminus \cV(\optset)_{0}$, then for all $\hat{w} \in \cV(\optset)$ such that $(v,\hat{w}) < (\bar{v},\hat{w})$, $(v,\hat{w})$ has support zero in any NFCE.  \label{lem:elim_x}
\end{lemma}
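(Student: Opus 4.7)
The plan is to apply the single-player incentive-compatibility condition that characterizes normal-form correlated equilibria, and then do some trivial bookkeeping. I will fix an arbitrary NFCE $\cspV \in \Delta(\cV(\learnset) \times \cV(\optset))$ and show that for every $\hat w$ in the claimed set the joint probability $\cspV(v, \hat w)$ is forced to vanish; since $\cspV$ is arbitrary, the result follows.

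The single ingredient I need from the condition $\NormSwap_{X}(\cspV) = 0$ is the swap-deviation inequality for the specific pair $(v, \bar v)$ appearing in the hypothesis, namely
\[
\sum_{w \in \cV(\optset)} \cspV(v, w)\bigl(u_{X}(\bar v, w) - u_{X}(v, w)\bigr) \;\leq\; 0.
\]
I would then split this sum at $\cV(\optset)_0$. The part over $\cV(\optset)_0$ vanishes termwise because $\cspV(v, w_0) = 0$ by assumption for every $w_0 \in \cV(\optset)_0$. In the remaining sum over $\cV(\optset) \setminus \cV(\optset)_0$, every summand is a product of nonnegative factors: the hypothesis $u_{X}(v, \hat w) \leq u_{X}(\bar v, \hat w)$ on this set makes the utility difference nonnegative, and $\cspV(v, \hat w) \geq 0$ holds always. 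A sum of nonnegative terms that is at most zero must vanish termwise, so $\cspV(v, \hat w)\bigl(u_{X}(\bar v, \hat w) - u_{X}(v, \hat w)\bigr) = 0$ for every $\hat w \in \cV(\optset) \setminus \cV(\optset)_0$.

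The desired conclusion is now immediate: whenever $u_{X}(v, \hat w) < u_{X}(\bar v, \hat w)$, the utility-difference factor is strictly positive, so $\cspV(v, \hat w) = 0$ (and for $\hat w \in \cV(\optset)_0$ the conclusion was already part of the hypothesis). I do not anticipate any real difficulty in executing this plan; the whole argument is a one-line invocation of the $X$-player's swap-deviation incentive constraint, plus the observation that the contributions over $\cV(\optset)_0$ drop out by assumption, so there is no genuine obstacle to worry about.
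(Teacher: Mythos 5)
Your proof is correct and follows essentially the same route as the paper's: both isolate the single swap $v \mapsto \bar v$, use the zero-support hypothesis to kill the terms over $\cV(\optset)_0$ and the weak-domination hypothesis to make the remaining terms nonnegative. The only difference is presentational — you argue directly that every term must vanish, while the paper assumes one pair $(v,\hat w)$ has positive support and derives strictly positive swap regret as a contradiction.
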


\begin{proof}
Consider some $(v,\hat{w})$ pair that satisfies the conditions above, and assume for contradiction that this pair has support $\alpha > 0$ in some NFCE $\phi^{V}$. Conditioned on being recommended vertex $v$, Player $\cX$'s marginal distribution must be over only vertices in $\optvert \setminus \optvert_{0}$, as otherwise a vertex pair $(v,w_{0})$ would have nonzero support on some NFCE. Therefore, the normal-form swap regret of this NFCE is

\begin{align*}
    \NormSwap_{1}(\phi^{V}) & = \max_{\pi^{*}:\learnvert \mapsto \learnvert}\sum_{\forall v_i}\sum_{\forall w_j}\phi^{V}_{v_i,w_j}\left( u_{1}(\phi^{*}(v_i),w_j) - u_{1}(v_i,w_j) \right) \\
    & \geq \max_{\pi^{*}:\learnvert \mapsto \learnvert}\sum_{\forall w_j}\phi^{V}_{v,w_j}\left( u_{1}(\phi^{*}(v),w_j) - u_{1}(v,w_j) \right) \\
    & \geq \sum_{\forall w_j}\phi^{V}_{v,w_j}\left( u_{1}(\bar{v},w_j) - u_{1}(v,w_j) \right) \\
    & = \phi^{V}_{v,\hat{w}}\left( u_{1}(\bar{v},\hat{w}) - u_{1}(v,\hat{w}) \right) + \sum_{\forall w \ne \hat{w}}\phi^{V}_{v,w}\left( u_{1}(\bar{v},w) - u_{1}(v,w) \right) \\
     & \geq \phi^{V}_{v,\hat{w}}\left( u_{1}(\bar{v},\hat{w}) - u_{1}(v,\hat{w}) \right) \tag{As by assumption, for all $w \in \optvert$, either $u_1(\bar{v},w) > u_1(v,w)$ or $\phi^{V}_{v,w} = 0$} \\
       & = \alpha \left( u_{1}(\bar{v},\hat{w}) - u_{1}(v,\hat{w}) \right) > 0 \tag{By our assumption on the support and utility of $(v, \hat{w})$} \\
\end{align*}
Thus, the normal-form-swap regret of our NFCE $\phi^{V}$ is $> 0$, leading to a contradiction. 
\end{proof}

\begin{lemma} Consider any $w \in \cV(\optset)$ and $\cV(\learnset)_{0} \subseteq \cV(\learnset)$ such that for all $v_{0} \in \cV(\learnset)_{0}$, $(v_{0},w)$ has support zero in any NFCE. Then, if $(\hat{v},w) \leq (\hat{v},\bar{w})$ for some $\bar{w} \in \cV(\optset)$ and $\forall \hat{v} \in \cV(\learnset) \setminus \cV(\learnset)_{0}$, then for all $\hat{v} \in \cV(\learnset)$ such that $(\hat{v},w) < (\hat{v},\bar{w})$, $(\hat{v},w)$ has support zero in any NFCE. \label{lem:elim_y}
\end{lemma}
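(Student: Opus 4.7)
The plan is to mirror the proof of Lemma \ref{lem:elim_x} essentially verbatim, swapping the roles of the two players: whereas Lemma \ref{lem:elim_x} obtained a contradiction from the swap-regret condition $\NormSwap_{1}(\phi^{V}) = 0$ on the $\cX$-player, here I would use the swap-regret condition $\NormSwap_{2}(\phi^{V}) = 0$ on the $\cY$-player, exploiting a swap function that remaps $w$ to $\bar{w}$.

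More concretely, I would proceed by contradiction: suppose some NFCE $\phi^{V}$ assigns positive probability $\alpha > 0$ to the pair $(\hat{v}, w)$. First I would observe that by the hypothesis of the lemma, $\phi^{V}_{v_{0}, w} = 0$ for every $v_{0} \in \cV(\learnset)_{0}$, so that the conditional marginal over learner vertices given the $\cY$-player is recommended $w$ is supported on $\cV(\learnset) \setminus \cV(\learnset)_{0}$. Next I would consider the swap function $\pi^{*} : \optvert \to \optvert$ that sends $w \mapsto \bar{w}$ and fixes every other vertex. Computing the contribution of this single swap to $\NormSwap_{2}(\phi^{V})$ gives
\begin{align*}
\NormSwap_{2}(\phi^{V}) &\ge \sum_{v \in \learnvert} \phi^{V}_{v, w}\bigl(u_{2}(v, \bar{w}) - u_{2}(v, w)\bigr) \\
 &= \sum_{v \in \learnvert \setminus \learnvert_{0}} \phi^{V}_{v, w}\bigl(u_{2}(v, \bar{w}) - u_{2}(v, w)\bigr) \\
 &\ge \phi^{V}_{\hat{v}, w}\bigl(u_{2}(\hat{v}, \bar{w}) - u_{2}(\hat{v}, w)\bigr) \\
 &= \alpha \bigl(u_{2}(\hat{v}, \bar{w}) - u_{2}(\hat{v}, w)\bigr) > 0,
\end{align*}
where the second line uses that $\phi^{V}_{v_{0}, w} = 0$ for $v_{0} \in \learnvert_{0}$, the third line uses the hypothesis that every summand on $\learnvert \setminus \learnvert_{0}$ is nonnegative, and the strict inequality uses both $\alpha > 0$ and the strict inequality $u_{2}(\hat{v}, w) < u_{2}(\hat{v}, \bar{w})$. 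This contradicts the NFCE property $\NormSwap_{2}(\phi^{V}) = 0$, completing the argument.

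There is no real obstacle here once the symmetry is spelled out: the only things to verify are that the conditional support argument is justified by the hypothesis on $\learnvert_{0}$, and that the single-swap lower bound on $\NormSwap_{2}$ is valid (it is, since $\NormSwap_{2}$ is the maximum over swap functions, so any specific choice of $\pi^{*}$ yields a lower bound). The proof is essentially the mirror of Lemma \ref{lem:elim_x} with $(\cX, \cY, \learnset, \optset, u_{1})$ replaced by $(\cY, \cX, \optset, \learnset, u_{2})$.
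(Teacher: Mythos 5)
Your proof is correct and is exactly the symmetry argument the paper intends: the paper simply states ``the proof is symmetric to the proof of Lemma~\ref{lem:elim_x},'' and you have spelled out that symmetry, contradicting $\NormSwap_{2}(\phi^{V}) = 0$ via the single swap $w \mapsto \bar{w}$ on the $\cY$-player's vertices in the same way Lemma~\ref{lem:elim_x} uses $v \mapsto \bar{v}$ on the $\cX$-player's.

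One small point worth noting (though it does not affect correctness): the lemma's conclusion quantifies over all $\hat{v} \in \cV(\learnset)$, including possibly $\hat{v} \in \learnvert_{0}$, while your third inequality implicitly assumes $\hat{v} \in \learnvert\setminus\learnvert_{0}$ so that its term survives the restriction in line two. This is harmless because for $\hat{v}\in\learnvert_{0}$ the hypothesis already forces $\phi^{V}_{\hat{v},w}=0$, so the assumption $\alpha>0$ at the start of your contradiction argument already rules that case out; a one-clause remark to this effect would make the proof fully airtight.
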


\begin{proof}
    The proof is symmetric to the proof of Lemma~\ref{lem:elim_x}.
\end{proof}

We can now prove Theorem~\ref{thm:equilibrium-gap}.

\begin{proof}[Proof of Theorem~\ref{thm:equilibrium-gap}]
First, note that by Theorem~\ref{thm:comp-prof-swap}, if $\csp = \Proj(\cspV)$, we must have that $\CorrSwap_{X}(\csp) \leq \NormSwap_{X}(\cspV)$ and $\CorrSwap_{Y}(\csp) \leq \NormSwap_{Y}(\cspV)$. Therefore, if $\cspV$ is a normal-form CE, it follows that $\csp$ is a profile CE.

We now construct a polytope game $G$ where $\csp$ is a profile CE, but no normal-form CE $\cspV$ has the property that $\Proj(\cspV) = \csp$.

The game is as follows. The two players share the same action set $\learnset = \optset = [0,1]^2 \times \{1\}$, and so $\learnvert = \optvert = \{[0,0,1],[0,1,1],[1,1,1],[1,0,1]\}$. The utility function $u_X$ of the $\cX$-player is\footnote{Throughout this proof, we regularly flatten elements of $\learnset \otimes \optset$ to elements of $\Rset^{9}$ for convenience of notation. In particular, we use the notation that $v \otimes w = [v_1w_1, v_1w_2, v_1w_3, v_2w_1, \dots, v_3w_3]$. We also represent bilinear functions by elements of $\Rset^{9}$ in a similar way; i.e., we define $u_X$ and $u_Y$ so that $u_X(\csp) = \langle u_x, \csp\rangle$.}

$$u_{X} = [-1, -1, 0.5, 0, 1, 0, -1, -1, -1].$$ 

The utility function of the $\cY$-player is

$$u_{Y} = [-0.5, 1, 1, -0.5, 0.5, 1, 0.5, -1, -0.5]$$

Finally the CSP $\csp$ is 

$$\csp = [0, 0.2, 0.4, 0.2, 0.4, 0.6, 0.4, 0.4, 1].$$

The remainder of the proof is a computation that justifies the correctness of this counterexample.

While the utilities are a direct function of the underlying dimensions of $\learnset$ and $\optset$, we can also compute the utilities of the game defined by $\learnvert$ and $\optvert$ by computing, for each extreme point $v$ and $w$ in the vertex game, $u_{X}(\pi(v \otimes w))$. Doing so, we can write out the utilities of Player $\cX$ and Player $\cY$ in the vertex game:

\begin{blockarray}{ccccc}
[0,0,1] & [0,1,1] & [1,1,1] & [1,0,1]  \\
\begin{block}{(cccc)c}
  -1 & -2 & -3 & -2 & [0,0,1]\\
  -1 &  -1 &  -2 & -2 & [0,1,1]\\
  -0.5 & -1.5 & -3.5 & -2.5 & [1,1,1]\\
  -0.5 & -2.5 & -4.5 & -2.5 & [1,0,1]\\
\end{block}
\end{blockarray}
Utility of Player $\cX$

\begin{blockarray}{ccccc}
[0,0,1] & [0,1,1] & [1,1,1] & [1,0,1]  \\
\begin{block}{(cccc)c}
  -0.5 & -1.5 & -1 & 0 & [0,0,1]\\
  0.5  & 0 &  0 &  0.5 & [0,1,1]\\
  1.5 & 2 &  1.5 & 1 & [1,1,1]\\
  0.5 &  0.5 & 0.5 & 0.5 & [1,0,1]\\
\end{block}
\end{blockarray}
Utility of Player $\cY$

First, we must prove that $\phi$ is a PCE. To prove this, it is sufficient to show that:
\begin{itemize}
    \item $\phi$ is a projection of a vertex CSP $\phi^{V}_{1}$ such that Player $\cX$ gets normal-form-swap regret of $0$ under $\phi^{V}_{1}$, and 
    \item $\phi$ is a projection of a vertex CSP $\phi^{V}_{2}$ such that Player $\cY$ gets normal-form-swap regret of $0$ under $\phi^{V}_{2}$
\end{itemize}

As normal-form swap regret upper-bounds profile swap regret, this proves that both players achieve $0$ PSR under $\phi$, and thus that $\phi$ is a PCE.  

For the first decomposition, consider the following vertex-game CSP $\phi^{V}_{1}$: 

\begin{blockarray}{ccccc}
[0,0,1] & [0,1,1] & [1,1,1] & [1,0,1]  \\
\begin{block}{(cccc)c}
  0.2 & 0 & 0 & 0.2 & [0,0,1]\\
  0  & 0 & 0.2 &  0 & [0,1,1]\\
  0.2 & 0.2 &  0 & 0 & [1,1,1]\\
  0 &  0 & 0 & 0 & [1,0,1]\\
\end{block}
\end{blockarray}

It is easy to see that Player $\cX$ attains NFSR of $0$ under $\phi^{V}_{1}$ by verifying that each of their actions in the vertex game is a best response against the conditional marginal distribution over Player $\cY$'s actions against it. Furthermore, 

\begin{align*}
\Proj(\cspV_{1}) &=
    \sum_{v \in \learnvert, w \in \optvert} \cspV_{1}(v, w) (v \otimes w) \\
    & = 0.2 \cdot ([0,0,1] \otimes [0,0,1]) + 0.2 \cdot ([0,0,1] \otimes [1,0,1]) + 0.2 \cdot ([0,1,1] \otimes [1,1,1]) \\ & + 0.2 \cdot ([1,1,1] \otimes [0,0,1]) + 0.2 \cdot ([1,1,1] \otimes [0,1,1]) \\
    & = 0.2 \cdot ([0,0,0,0,0,0,0,0,1]) + 0.2 \cdot ([0,0,0,0,0,0,1,0,1]) + 0.2 \cdot ([0,0,0,1,1,1,1,1,1]) \\ & + 0.2 \cdot ([0,0,1,0,0,1,0,0,1]) + 0.2 \cdot ([0,1,1,0,1,1,0,1,1]) \\
    =  & [0,0.2,0.4,0.2,0.4,0.6,0.4,0.4,1] = \phi
\end{align*}

For the second decomposition, consider the following vertex-game CSP $\phi^{V}_{2}$: 

\begin{blockarray}{ccccc}
[0,0,1] & [0,1,1] & [1,1,1] & [1,0,1]  \\
\begin{block}{(cccc)c}
  0  & 0  & 0 & 0.2 & [0,0,1]\\
  0  & 0.2& 0 & 0.2 & [0,1,1]\\
  0  & 0.2& 0 & 0 & [1,1,1]\\
  0.2& 0  & 0 & 0 & [1,0,1]\\
\end{block}
\end{blockarray}

It is again easy to see that Player $\cY$ attains NFSR of $0$ under $\phi^{V}_{2}$ by verifying that each of their actions in the vertex game is a best response against the conditional marginal distribution over Player $\cX$'s actions against it. Furthermore, we can verify that it is indeed a decomposition of $\phi$ by again computing its projection:

\begin{align*}
\Proj(\cspV_{2}) &=
    \sum_{v \in \learnvert, w \in \optvert} \cspV_{2}(v, w) (v \otimes w) \\
    & = 0.2 \cdot ([0,0,1] \otimes [1,0,1]) + 0.2 \cdot ([0,1,1] \otimes [0,1,1]) + 0.2 \cdot ([0,1,1] \otimes [1,0,1]) \\ & + 0.2 \cdot ([1,1,1] \otimes [0,1,1]) + 0.2 \cdot ([1,0,1] \otimes [0,0,1]) \\
    & = 0.2 \cdot ([0,0,0,0,0,0,1,0,1]) + 0.2 \cdot ([0,0,0,0,1,1,0,1,1]) + 0.2 \cdot ([0,0,0,1,0,1,1,0,1]) \\ & + 0.2 \cdot ([0,1,1,0,1,1,0,1,1]) + 0.2 \cdot ([0,0,1,0,0,0,0,0,1]) \\
    =  & [0,0.2,0.4,0.2,0.4,0.6,0.4,0.4,1] = \phi
\end{align*}

Therefore, $\phi = [0, 0.2, 0.4, 0.2, 0.4, 0.6, 0.4, 0.4, 1]$ is a PCE.

Next, we will show that there does not exist a NFCE $\sigma$ such that $\pi(\sigma) = \phi$. To prove this, we assume for contradiction that there does exist such an NFCE, and iteratively eliminate move pairs from its possible support until we find a contradiction. To do this, we will crucially make use of Lemmas~\ref{lem:elim_x} and~\ref{lem:elim_y}, which formalize the fact that not only can't a strictly dominated action be present in an NFCE, if there is a weakly dominated action\footnote{Here we use \emph{strictly} and \emph{weakly dominated} in a slightly different way than in when excluding weakly dominated actions from the game; here we say action $a$ strictly dominates $b$ when another $a$ is strictly better than $b$ for all actions of the opponent, and $a$ weakly dominates $b$ when it is weakly better for all actions, and strictly better for at least one.}, the action pairs where it is strictly worse than the dominating action also cannot be in any NFCE. These properties hold even if the domination is only over the support of the NFCE itself. 

It will be clarifying to keep track of this visually: we begin by allowing the possibility of all possible move pairs to be in the support of some NFCE, denoting this by a $?$ denoting the weight placed upon each move pair. As we prove that certain vertex action pairs cannot be in the support of any NFCE, we will replace these $?$s with $0$s. For simplicity, we will also re-label the vertices as $v_{1:4},w_{1:4}$. We begin by noting that since the first index of $\phi$ is $0$, the following entries are all zero:

\begin{blockarray}{ccccc}
$w_{1}$ & $w_{2}$ & $w_{3}$ & $w_{4}$ \\
\begin{block}{(cccc)c}
  ? & ? & ? & ? & $v_{1}$\\
  ? & ? & ? & ? & $v_{2}$\\
  ? & ? & $\mathbf{0}$ & $\mathbf{0}$ & $v_{3}$\\
  ? & ? & $\mathbf{0}$ & $\mathbf{0}$ & $v_{4}$\\
\end{block}
\end{blockarray}

First, note that $v_{1}$ is dominated for Player $\cX$ by $v_{2}$, and strictly dominated in two of the columns. Thus, by Lemma~\ref{lem:elim_x}, $(v_{1},w_{2})$ and $(v_{1},w_{3})$ must have support zero in any NFCE. 

Furthermore, $v_{4}$ is dominated for Player $\cX$ by $v_{3}$, and strictly dominated in two of the columns. Therefore, again by Lemma~\ref{lem:elim_x}, $(v_{4},w_{2})$ and $(v_{4},w_{3})$ must have support zero in any NFCE. Additionally, conditional on Player $\cX$ being recommended $v_{4}$ in a NFCE satisfying our constraints, Player $\cY$ must have support only on $w_{1}$ and $w_{2}$. Therefore by Lemma~\ref{lem:elim_y}, $(v_4,w_2)$ cannot be in the support either.

\begin{blockarray}{ccccc}
$w_{1}$ & $w_{2}$ & $w_{3}$ & $w_{4}$ \\
\begin{block}{(cccc)c}
  ? & $\mathbf{0}$ & $\mathbf{0}$ & ? & $v_{1}$\\
  ? & ? & ? & ? & $v_{2}$\\
  ? & ? & 0 & 0 & $v_{3}$\\
  ? & $\mathbf{0}$ & 0 & 0 & $v_{4}$\\
\end{block}
\end{blockarray}

Now, note that conditional on Player $\cY$ being recommended $w_{3}$, Player $\cX$ must have full support on $v_{2}$. Player $\cY$ would rather play $w_{1}$ in this case, so by Lemma~\ref{lem:elim_y} we can eliminate another vertex pair from the support. 

\begin{blockarray}{ccccc}
$w_{1}$ & $w_{2}$ & $w_{3}$ & $w_{4}$  \\
\begin{block}{(cccc)c}
  ? & 0 & 0 & ? & $v_{1}$\\
  ? & ? & $\mathbf{0}$ & ? & $v_{2}$\\
  ? & ? & 0 & 0 & $v_{3}$\\
  ? & 0 & 0 & 0 & $v_{4}$\\
\end{block}
\end{blockarray}

From here, we can label the remaining vertex pairs in order to determine additional constraints: 

\begin{blockarray}{ccccc}
$w_{1}$ & $w_{2}$ & $w_{3}$ & $w_{4}$ \\
\begin{block}{(cccc)c}
  a & 0 & 0 & b & $v_{1}$\\
  c & d & 0 & e & $v_{2}$\\
  f & g & 0 & 0 & $v_{3}$\\
  h & 0 & 0 & 0 & $v_{4}$\\
\end{block}
\end{blockarray}

Recall that $\phi[4] = 0.2$, and further that $\phi[4]$ is the sum of the values at the intersection of the leftmost two columns and the middle rwo rows. Therefore, this tells us that $e = 0.2$. Furthermore, $\phi[2] = 0.2$ as well, and this value represents the sum of the four values in the intersection of the middle two columns and the lowest two rows. Thus, $g = 0.2$. Also, $\phi[5] = 0.4$, representing the central four values in the matrix. Thus, $d + g = 0.4$, and therefore $d = 0.2$.

\begin{blockarray}{ccccc}
$w_{1}$ & $w_{2}$ & $w_{3}$ & $w_{4}$  \\
\begin{block}{(cccc)c}
  a & 0 & 0 & b & $v_{1}$\\
  c & $\mathbf{0.2}$ & 0 & $\mathbf{0.2}$ & $v_{2}$\\
  f & $\mathbf{0.2}$ & 0 & 0 & $v_{3}$\\
  h & 0 & 0 & 0 & $v_{4}$\\
\end{block}
\end{blockarray}

Also, $\phi[6] = 0.6$, which represents the middle two rows of the matrix. Therefore, $c$ and $f$ must both equal $0$:

\begin{blockarray}{ccccc}
$w_{1}$ & $w_{2}$ & $w_{3}$ & $w_{4}$   \\
\begin{block}{(cccc)c}
  a & 0 & 0 & b & $v_{1}$\\
  $\mathbf{0}$ & 0.2 & 0 & 0.2 & $v_{2}$\\
  $\mathbf{0}$ & 0.2 & 0 & 0 & $v_{3}$\\
  h & 0 & 0 & 0 & $v_{4}$\\
\end{block}
\end{blockarray}

And $\phi[3] = 0.4$, which represents the bottom two rows. There is only one variable remaining in these rows we can set, and thus $h = 0.2$:

\begin{blockarray}{ccccc}
$w_{1}$ & $w_{2}$ & $w_{3}$ & $w_{4}$  \\
\begin{block}{(cccc)c}
  a & 0 & 0 & b & $v_{1}$\\
  0 & 0.2 & 0 & 0.2 & $v_{2}$\\
  0 & 0.2 & 0 & 0 & $v_{3}$\\
  $\mathbf{0.2}$ & 0 & 0 & 0 & $v_{4}$\\
\end{block}
\end{blockarray}

Next, $\phi[7] = 0.4$, which represents the leftmost two columns. Thus, $b = 0.2$:

\begin{blockarray}{ccccc}
$w_{1}$ & $w_{2}$ & $w_{3}$ & $w_{4}$   \\
\begin{block}{(cccc)c}
  a & 0 & 0 & $\mathbf{0.2}$ & $v_{1}$\\
  0 & 0.2 & 0 & 0.2 & $v_{2}$\\
  0 & 0.2 & 0 & 0 & $v_{3}$\\
  0.2 & 0 & 0 & 0 & $v_{4}$\\
\end{block}
\end{blockarray}

Finally, given that these values must all sum to $1$, this leaves us with a unique solution:

\begin{blockarray}{ccccc}
$w_{1}$ & $w_{2}$ & $w_{3}$ & $w_{4}$   \\
\begin{block}{(cccc)c}
  $\mathbf{0}$ & 0 & 0 & 0.2 & $v_{1}$\\
  0 & 0.2 & 0 & 0.2 & $v_{2}$\\
  0 & 0.2 & 0 & 0 & $v_{3}$\\
  0.2 & 0 & 0 & 0 & $v_{4}$\\
\end{block}
\end{blockarray}

However, this is \emph{not} an NFCE. Conditioned on Player $\cX$ playing $v_{3}$, Player $\cX$'s marginal distribution is full support on $w_{2}$. But in this case, Player $\cX$ has normal-form-swap regret to playing $v_{2}$. This leads to a contradiction, and therefore $\phi$ is not the projection of any NFCE.

Furthermore, we verified via a linear program that there does not exist any other $\phi^{*}$ which is the projection of some NFCE, and has the same utility profile as $\phi$. %\nat{maybe awk, should I say more?}
\end{proof}

\end{document}